\let\amsmath@bigm\bigm
\renewcommand{\bigm}[1]{%
  \ifcsname fenced@\string#1\endcsname
    \expandafter\@firstoftwo
  \else
    \expandafter\@secondoftwo
  \fi
  {\expandafter\amsmath@bigm\csname fenced@\string#1\endcsname}%
  {\amsmath@bigm#1}%
}
\newcommand{\DeclareFence}[2]{\@namedef{fenced@\string#1}{#2}}
\DeclareFence{\mid}{|}
\newcommand{\rnote}[1]{\footnote{{\bf \color{red}Rocco}: {#1}}}
\newcommand{\xnote}[1]{\footnote{{\bf \color{red}Xi}: {#1}}}
\newtheorem*{rep@theorem}{\rep@title}
\newcommand{\newreptheorem}[2]{
\newenvironment{rep#1}[1]{
 \def\rep@title{#2 \ref{##1}}
 \begin{rep@theorem}\itshape}
 {\end{rep@theorem}}}
\theoremstyle{plain}
\newenvironment{proofof}[1]{\par
  \pushQED{\qed}%
  \normalfont \topsep6\p@\@plus6\p@\relax
  \trivlist
  \item[\hskip\labelsep
\emph{    Proof of #1\@addpunct{.}}]\ignorespaces
}{%
  \popQED\endtrivlist\@endpefalse
}
\newcommand{\ignore}[1]{}
\def\colorful{1}
\newcommand{\blue}[1]{{{\color{blue}#1}}}
\newcommand{\red}[1]{{\color{brown} {#1}}}
\newcommand{\gray}[1]{{\color{gray}{#1}}}
\newcommand{\blue}[1]{{{#1}}}
\newcommand{\red}[1]{{{#1}}}
\newcommand{\gray}[1]{{{#1}}}
\newtheorem*{theorem*}{Theorem}
\newtheorem*{noclaim*}{Claim}
\newcommand{\Bin}{\mathrm{Bin}}
\def\curr{\mathrm{current}}
\def\dist{\mathsf{distance}}
\def\pos{\mathsf{position}}
\def\upper{\mathsf{upper}}
\def\ideal{\mathsf{ideal}}
\newcommand{\Del}{\mathrm{Del}}
\def\dedit{{d_{\mathrm{edit}}}}
\newcommand{\source}{\mathrm{source}}
\newcommand{\image}{\mathrm{image}}
\newcommand{\bssx}{\text{\boldmath${\mathsf{x}}$}}
\newcommand{\bssy}{\text{\boldmath${\mathsf{y}}$}}
\newcommand{\bssz}{\text{\boldmath${\mathsf{z}}$}}
\newcommand{\bssw}{\text{\boldmath${\mathsf{w}}$}}
\newcommand{\sss}{{\mathsf{s}}}
\newcommand{\ssx}{\mathsf{x}}
\newcommand{\ssy}{{\mathsf{y}}}
\newcommand{\ssz}{{\mathsf{z}}}
\newcommand{\ssw}{{\mathsf{w}}}
\def\BMA{\textup{{\tt BMA}}\xspace}
\def\Align{\textup{{\tt Align}}\xspace}
\def\SBMA{\BMA^\ast}
\def\MAIN{\textup{{\tt Reconstruct}}\xspace}
\def\AR{\MAIN}
\newcommand{\current}{\textnormal{current}\xspace}
\def\pos{\textup{\textnormal{last}}\xspace}
\renewcommand{\N}{\mathds{N}}
\def\tsx{\tilde{\mathsf{x}}} \def\sx{\mathsf{x}} \def\sw{\mathsf{w}}
\def\dist{\textnormal{dist}} \def\bsw{\boldsymbol{\mathsf{w}}}
\def\idist{\textnormal{dist}_\textnormal{\textsf{ideal}}} \def\ipos{\pos_\textnormal{\textsf{ideal}}}
\def\icurr{\textnormal{current}_\textsf{ideal}}
\begin{document}

\title{Near-Optimal Average-Case Approximate Trace Reconstruction\\ from Few Traces}

\author{
Xi Chen\thanks{Supported by NSF grants CCF-1703925 and IIS-1838154.} 
\\
Columbia University\\
xichen@cs.columbia.edu
\and
Anindya De\thanks{Supported by NSF grants CCF-1926872 and CCF-1910534.}
\\
University of Pennsylvania\\
anindyad@cis.upenn.edu
\and
Chin Ho Lee\thanks{Supported by the Croucher Foundation and the Simons Collaboration on Algorithms and Geometry.}
\\
Columbia University\\
c.h.lee@columbia.edu
\and
Rocco A. Servedio\thanks{Supported by NSF grants CCF-1814873, IIS-1838154, CCF-1563155, and by the Simons Collaboration on Algorithms and Geometry.} 
\\
Columbia University\\
rocco@cs.columbia.edu
\and
Sandip Sinha\thanks{Supported by NSF grants  CCF-1714818, CCF-1822809, IIS-1838154, CCF-1617955, CCF-1740833, and by the Simons Collaboration on Algorithms and Geometry.}
\\
Columbia University\\
sandip@cs.columbia.edu
}

\maketitle

\thispagestyle{empty}


\begin{abstract} 

In the standard trace reconstruction problem, the goal is to \emph{exactly} reconstruct an unknown source string $\ssx \in \zo^n$ from independent ``traces'', which are copies of $\ssx$ that have been corrupted by a $\delta$-deletion channel which independently deletes each bit of $\ssx$ with probability $\delta$ and concatenates the surviving bits. We study the \emph{approximate} trace reconstruction problem, in which the goal is only to obtain a high-accuracy approximation of $\ssx$ rather than an exact reconstruction. 

We give an efficient algorithm, and a near-matching lower bound, for approximate reconstruction of a random source string $\ssx \in \zo^n$ from few traces.  
Our main algorithmic result is a polynomial-time algorithm with the following property: for any deletion rate $0 < \delta < 1$ (which may depend on $n$), for almost every source string $\ssx \in \zo^n$, given any number $M \leq \Theta(1/\delta)$ of traces from $\Del_\delta(\ssx)$, the algorithm constructs a hypothesis string $\widehat{\ssx}$ that has edit distance at most $n \cdot (\delta M)^{\Omega(M)}$ from $\ssx$. 
We also prove a near-matching information-theoretic lower bound showing that given $M \leq \Theta(1/\delta)$ traces from $\Del_\delta(\ssx)$ for a random $n$-bit string $\ssx$, the smallest possible expected edit distance that any algorithm can achieve, regardless of its running time, is $n \cdot (\delta M)^{O(M)}$. 
\end{abstract}

\newpage

\setcounter{page}{1}


\section{Introduction} \label{sec:intro}

\subsection{Background and prior work}

In the \emph{trace reconstruction problem} \cite{Kalashnik73,Lev01a,Lev01b,BKKM04}, there is an unknown $n$-bit source string $\ssx \in \zo^n$, and a reconstruction algorithm that has access to independent \emph{traces} of $\ssx$, where a trace of $\ssx$ is a draw from $\Del_\delta(\ssx)$.  Here $\Del_\delta(\cdot)$ is the \emph{deletion channel}, which independently deletes each bit of $\ssx$ with probability $\delta$ and outputs the concatenation of the surviving bits. The goal of the reconstruction algorithm is to correctly reconstruct the source string $\ssx$.

\subsubsection{Exact trace reconstruction}

Much research effort has been dedicated to different aspects of the trace reconstruction problem in recent years \cite{MPV14,DOS17,NazarovPeres17,PeresZhai17,HPP18,HHP18,BCFSSfocs19,BCSSrandom19,Chase19,KMMP19,HPPZ20,Chase20,NarayananRen20,CDLSS20smoove,CDLSS20lowdeletion}.  In the ``worst-case'' version of trace reconstruction, the source string $\ssx$ may be an arbitrary $n$-bit string.  This is a challenging problem, with the best known information theoretic lower bound on the number of traces required for trace reconstruction (for constant deletion rates $\delta$) being $\tilde{\Omega}(n^{3/2})$ traces \cite{Chase19} and the best known information theoretic upper bound being $\exp(\tilde{O}(n^{1/5}))$ traces \cite{Chase20} (improving on earlier $\exp(\tilde{O}(n^{1/2}))$ and $\exp(\tilde{O}(n^{1/3}))$-time and sample algorithms due to \cite{HMPW08} and \cite{NazarovPeres17,DOS17} respectively). In the subconstant deletion rate regime, a $\poly(n)$-time and sample algorithm for worst-case source strings was recently given in \cite{CDLSS20lowdeletion} for deletion rate $\delta = O(1/n^{1/3 + \eps})$, improving on an earlier result for deletion rate $\delta = O(1/n^{1/2 + \eps})$ \cite{BKKM04}.
Turning to the ``average-case'' variant of trace reconstruction, the goal is to give algorithms (and lower bounds) that hold for most possible source strings $\ssx$ (equivalently, hold with high probability for a uniform random source string $\bssx \in \zo^n$). For the average-case problem, at constant deletion rate $\delta$ the current best known lower bound is $\tilde{\Omega}((\log n)^{5/2})$ traces \cite{Chase19} and the best known upper bound is $\exp(O(\log n)^{1/3})$ traces \cite{HPP18,HPPZ20}.  In \cite{BKKM04} an $O(\log n)$-trace, $\poly(n)$-time algorithm is given for the average case problem when the deletion rate is $\delta = O(1/\log n).$

\medskip

\subsubsection{Approximate trace reconstruction}

\noindent {\bf Motivation.}
In this paper we study a relaxation of the exact trace reconstruction problem in which the goal is only to obtain an \emph{approximation} of the unknown source string $\ssx$. {Of course this immediately raises the question of what distance measure to use; throughout this paper we use edit distance as our distance measure between strings. 
We remark that edit distance is a natural metric to consider in the context of trace reconstruction: in particular, trace reconstruction is motivated by problems such as ancestral DNA reconstruction where the natural corruption process  includes synchronization errors such as insertion and deletion. Indeed, edit distance is the distance measure used in all of the works discussed below under ``Prior work.''} 

The study of approximate trace reconstruction has several natural motivations; first, in some applications a high-accuracy reconstruction of $\ssx$ may be all that is required rather than exact reconstruction. Second, it is of interest to obtain algorithmic results for trace reconstruction in settings where insufficiently many traces are available for exact reconstruction (because of known lower bounds mentioned above); approximate trace reconstruction offers a potential avenue for obtaining rigorous results in such settings.  Finally, as sketched above, there is a frustrating exponential gap between the known upper and lower bounds for exact trace reconstruction in both the worst-case and average-case problem variants. Hence it is natural to wonder whether sharper bounds can be achieved for approximate versions of the problem.

\ignore{
}

\medskip

\noindent {\bf Prior work.}
Several authors have quite recently considered the approximate trace reconstruction problem and related questions.

Davies, Ra\'cz, Rashtchian, and Schiffer~\cite{DRRS20} gave several algorithms that use $\polylog(n)$ traces and achieve edit distance $n/\polylog(n)$ for certain classes of source strings defined by various run-length assumptions.   They also give other algorithms which, under stronger run-length assumptions, succeed in performing approximate reconstruction using only a single trace.  In another recent work, Srinivasavaradhan, Du, Diggavi, and Fragouli~\cite{SDDF18} proposed heuristics for approximate reconstruction based on a few traces. 


Sima and Bruck \cite{SB21} have recently studied exact trace reconstruction under an edit distance constraint.  They showed that $n^{O(k)}$ traces suffice to distinguish between two (known) worst-case $n$-bit strings that are promised to have edit distance at most $k$ from each other. In a related but incomparable result, Grigorescu, Sudan, and Zhu~\cite{GSZ20} have given lower bounds on ``mean-based'' algorithms for distinguishing between worst-case pairs of strings that have small edit distance.


Summarizing the prior results on approximate trace reconstruction, we are not aware of either algorithms or lower bounds in the previous literature that apply to typical source strings $\bssx \sim \zo^n$ (though see below for a discussion of the recent work of \cite{ChasePeres:21} that is simultaneous to ours).  It is easy to see that simply outputting a single trace gives expected edit distance $\delta n$ (for any source string), and also that given $M$ traces no algorithm can achieve expected edit distance better than $\Theta(\delta^M n)$ for random source strings (since in expectation $\delta^M n$ bits of the $n$-bit source string will have been deleted from all $M$ traces).  Other than these simple observations, to the best of our knowledge no prior results were known, either in terms of algorithms or lower bounds, for approximate trace reconstruction of random strings.  We describe our algorithms and lower bounds for this setting below.

\ignore{


}

\subsection{Our results}

\noindent {\bf Matching upper and lower bounds on approximate reconstruction of random strings from few traces.}  Our main contribution is the following algorithmic result: 

\begin{theorem} [Approximate average-case trace reconstruction algorithm] \label{thm:main}
There is a $\poly(n)$ time algorithm~\MAIN~with the following property: Let $0 < \delta < 1$, and let $\bssx$ be an unknown source string that is uniform random over $\zo^n$.  
Let $\bssy^{(1)},\dots,\bssy^{(M)}$ be $M \leq \Theta(1/\delta)$ independent traces drawn from $\Del_\delta(\bssx)$.
Then with probability at least $1-1/\poly(n)$ over $\bssx \sim \zo^n$ and $\bssy^{(1)},\dots,\bssy^{(M)} \sim \Del_\delta(\bssx)$, the output of \MAIN~on input $\delta$ and $\bssy^{(1)},\dots,\bssy^{(M)}$~is a string $\widehat{\ssx} \in \zo^\ast$ that has $\dedit(\bssx,\widehat{\ssx}) \leq  n \cdot (\delta M)^{\Omega(M)}.$
\end{theorem}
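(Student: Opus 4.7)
The plan is to process the source string left-to-right with a pointer-based ``bitwise alignment'' algorithm, strengthening the Batu--Kannan--Khanna--McGregor paradigm to work in the few-trace regime where naive majority voting fails. The algorithm \MAIN\ will maintain a pointer $p_j$ into each trace $\bssy^{(j)}$, indicating the first yet-unconsumed bit, and at each step output one bit of $\widehat{\ssx}$ while advancing the pointers via the subroutines \CE\ (a coarse estimate of how far a pointer must advance) and \SE\ (a sharp estimate, used to certify an alignment). The technical backbone is a structural lemma on random source strings: with probability $1 - 1/\poly(n)$ over $\bssx \sim \zo^n$, every substring of $\bssx$ of length $L = \Theta(\log n)$ occurs uniquely in $\bssx$. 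This turns any ``clean'' stretch of $L$ consecutive undeleted bits in a trace into an anchor whose location in $\bssx$ can be recovered by pattern matching back into the source.

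To emit the next output bit, \MAIN\ examines a look-ahead window of length $\Theta(M)$ in each trace. \CE\ first proposes a coarse advancement of each pointer to re-synchronize it with the current target position in $\bssx$; then \SE\ (via \SEL\ and \SER, searching both sides of a candidate position) refines this to a certified alignment whenever a consistent matching context is found. When alignment is certified, the output bit is simply the agreed value; otherwise \MAIN\ declares a short failure region, jumps to the next anchor on the right, and resumes. For the analysis I would decompose $\bssx$ into blocks of length $k = \Theta(M)$ and call a block \emph{bad} if every one of the $M$ traces incurs at least one deletion inside it (together with a small surrounding buffer). By independence of deletions across traces, the probability that a specific block is bad is at most $(1-(1-\delta)^{O(k)})^M \le (O(k\delta))^M = (\delta M)^{\Omega(M)}$ when $\delta M$ is bounded away from $1$. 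Good blocks can be reconstructed exactly (up to $O(1)$ boundary bits charged to neighboring bad blocks), and each bad block contributes at most its length $k$ to the edit distance, so summing over the $n/k$ blocks yields expected edit distance $n\cdot(\delta M)^{\Omega(M)}$, as required; a Markov-type tail argument then upgrades this to a high-probability statement.

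The principal technical obstacle is correctness of the pointer bookkeeping: a single undetected misalignment cascades to every subsequent bit, and with only $M\le O(1/\delta)$ traces we cannot rely on a majority-of-traces argument for self-correction. The alignment routines \SE, \SEL, and \SER\ must therefore certify an alignment only when a matching context of length $\Omega(L)$ is observed, so that uniqueness of random substrings forbids false matches, and they must reliably re-identify the correct pointer positions after each failure region using the next downstream anchor. Carefully bounding both the probability of re-synchronization failure and the length of each failure region, while avoiding double-counting of edit distance across overlapping failure regions, is the intricate part of the argument that I expect to require most of the work.
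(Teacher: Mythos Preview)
Your high-level accounting is in the right ballpark---bad events should occur at rate $(\delta M)^{\Omega(M)}$---but the proposal has two substantive gaps that the paper's argument handles by quite different means.

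First, your ``good block $\Rightarrow$ exact reconstruction'' implication is not supported by the algorithm you sketch. You call a block good if \emph{some} trace survives it without a deletion, but the algorithm has no way to tell which trace that is, and ``the agreed value'' is undefined when only one of $M$ traces is clean. (You also speak of ``pattern matching back into the source,'' but $\bssx$ is unknown.) The paper does not attempt to locate a clean trace; it runs Bitwise Majority Alignment (\BMA), outputting at each step the majority of the $M$ pointed-to bits, and proves that with high probability at least $90\%$ of pointers are simultaneously correct at every step. The failure mode of \BMA\ is not ``every trace has a deletion in the block'' but rather that the source contains a \emph{long desert}: a $k$-periodic substring of length $L=\Theta(M\log(1/(\delta M)))$ for some $k\le L/2$, which prevents a drifted pointer from re-synchronizing. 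Your bad-block probability and the paper's long-desert probability both work out to $(\delta M)^{\Omega(M)}$, but yours is not the failure mode of any algorithm you have actually specified.

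Second, your re-synchronization mechanism relies on undeleted anchors of length $\Theta(\log n)$ together with exact-substring uniqueness. When $\delta$ is a constant (which the theorem must cover, since $M\le\Theta(1/\delta)$ allows $M=\Theta(1)$), a fixed trace retains a given $\Theta(\log n)$ window intact only with polynomially small probability, so such anchors are too rare to drive the algorithm. The paper's \Align\ routine avoids this by using \emph{edit-distance} matching at geometrically decreasing scales, from $\Theta(\log n)$ down to $\Theta(H)$ with $H=\Theta(M\log(1/(\delta M)))$, against a designated reference trace; only at the final scale does it require a clean majority, and there the per-trace deletion probability is $O(\delta H)=O(\sqrt{\delta M})$, a small constant. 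Your exact-match anchors cannot be made to work across the full range of $\delta$ without an analogous multiscale, edit-tolerant refinement, and once you add that machinery plus a majority-voting step, you are essentially back to the paper's \Align\,+\,\BMA\ architecture and its desert-based analysis.
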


\ignore{
}

An interesting special case of \Cref{thm:main} is obtained when the number of available traces $M$ is $\Theta(1/\delta).$ In this case the \AR~algorithm achieves edit distance $n/2^{\Omega(1/\delta)}$, which is exponentially better than the benchmark of $\delta n$ edit distance that is trivially achievable using a single trace.

To complement \Cref{thm:main}, we prove an information-theoretic lower bound on approximate trace reconstruction of random strings from $M \leq \Theta(1/\delta)$ traces. This lower bound shows that the accuracy achieved by \AR~is essentially the best possible:

\begin{theorem} [Lower bound on approximate average-case trace reconstruction] \label{thm:mainlower}
Let $0 < \delta < 1$, and let $\bssx$ be an unknown source string that is uniform random over $\zo^n$.  
Let $\bssy^{(1)},\dots,\bssy^{(M)}$ be $M$ independent traces drawn from $\Del_\delta(\bssx)$, where $M \leq \Theta(1/\delta)$.  Let $\tt{A}$ be any algorithm which, given $\delta$ and $\bssy^{(1)},\dots,\bssy^{(M)}$ as input, outputs a hypothesis string $\widehat{\ssx}$ for $\bssx$.  Then the expected edit distance between
$\widehat{\ssx}$ and $\bssx$ is at least 
$n \cdot (\delta M)^{O(M)}$.
\end{theorem}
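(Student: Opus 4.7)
}
The plan is to establish the lower bound by an indistinguishability/coupling argument in the style of Le Cam. For a uniform random source $\bssx \in \zo^n$ we will build a ``twin'' source $\bssx'$ whose joint trace distribution $(\bssy^{(1)},\ldots,\bssy^{(M)})$ is coupled to that of $\bssx$ so the two families of $M$ traces coincide on an event of probability bounded away from zero, while at the same time $\dedit(\bssx,\bssx') \geq n \cdot (\delta M)^{O(M)}$ with constant probability. Combined with the triangle inequality $\dedit(\mathtt{A}(\tau),\bssx) + \dedit(\mathtt{A}(\tau),\bssx') \geq \dedit(\bssx,\bssx')$ on the coupled event, this forces any algorithm $\mathtt{A}$ to pay $\Omega(\dedit(\bssx,\bssx'))$ in expectation on at least one of $\bssx,\bssx'$; by symmetry of the construction this becomes the desired lower bound on the $\bssx$-marginal.

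The first step is to partition $[n]$ into $n/L$ disjoint windows $W_1,\dots,W_{n/L}$ of length $L = \Theta(M)$ (with a suitable absolute constant hidden in the $\Theta$). For each window $W_j$ we introduce a \emph{concealment event} $B_j$ that depends on the deletion randomness: informally, $B_j$ says that across all $M$ traces the window $W_j$ leaves too few surviving bits in too unstructured an order for the algorithm to pin down its contents. A convenient concrete choice is: in each of the $M$ traces the window $W_j$ suffers at least one deletion. The probability of this event per trace is $1-(1-\delta)^L$, so for $L = \Theta(M)$ and $\delta L < 1$ we get $\Pr[B_j] \approx (1-(1-\delta)^L)^M = \Theta((\delta M)^{O(M)})$. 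The $B_j$'s depend on disjoint blocks of deletion randomness, so we can also treat them as (essentially) independent across $j$.

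The heart of the argument is the construction of $\bssx'$ from $\bssx$. On each window $W_j$ with $B_j$ holding we apply a local randomized modification $\Pi_j$ that (i) changes $W_j$'s contribution to edit distance by $\Omega(L)$ with constant probability over a random uniform window, yet (ii) preserves the conditional distribution of what the algorithm observes in that window's portion of each trace. Candidate modifications include resampling the window uniformly, or applying a random permutation from a carefully chosen subgroup that leaves invariant the law of the sub-subsequences consistent with the deletion pattern prescribed by $B_j$. The key lemma to prove is that under $B_j$ the trace restricted to $W_j$ is statistically a function only of an invariant of the window content (e.g., a truncated multiset of bits), so that $\Pi_j(W_j)$ yields the same trace-slice distribution as $W_j$ itself. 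Stitching the per-window couplings together gives a global coupling of $(\bssx,\tau)$ and $(\bssx',\tau')$ with $\tau = \tau'$ on $\bigcap_j B_j$-relevant sub-events, while $\bssx'$ has the same uniform marginal as $\bssx$.

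Summing the per-window edit distance contributions via a Chernoff bound, the expected number of concealed windows is $\Theta((n/L)\cdot (\delta M)^{O(M)})$, each contributing $\Omega(L)$ edit distance, giving $\dedit(\bssx,\bssx') = \Omega(n\cdot (\delta M)^{O(M)})$ with constant probability. Feeding this into the Le Cam step completes the proof. The main obstacle is Step 3 (constructing the modification $\Pi_j$): we need a nontrivial permutation of a length-$L$ window that preserves the conditional trace law on $B_j$ while yielding $\Omega(L)$ edit distance. Simple swaps of adjacent bits preserve the distribution easily but only yield $O(1)$ edit distance per swap; multiple swaps either fail to stack to $\Omega(L)$ globally or require stronger concealment events with worse probability. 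Balancing these two competing demands, and carefully isolating concealed windows via ``buffer'' bits on either side to avoid boundary leakage of information to the algorithm, constitutes the technical crux of the argument.
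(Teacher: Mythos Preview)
Your proposal has a genuine gap at exactly the place you flag as ``the technical crux.'' The key lemma you would need---that conditioned on the concealment event $B_j$ (at least one deletion in $W_j$ in every trace) the trace restricted to $W_j$ depends only on some coarse invariant of the window such as its multiset of bits---is false. A single deletion leaves behind an ordered subsequence of length $L-1$, and the distribution of that subsequence depends on the full sequence, not just its multiset. For a concrete example, the windows $0011$ and $0101$ share a multiset, but after one uniform deletion the first yields $\{001,011\}$ each with probability $1/2$, while the second yields four distinct strings each with probability $1/4$. So no modification $\Pi_j$ that changes the window by $\Omega(L)$ in edit distance can possibly preserve the conditional trace law under such a weak concealment event; you would need to delete essentially the entire window in every trace, which has probability $\delta^{\Theta(LM)}$ and is far too small. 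You also cannot rescue the argument by settling for $O(1)$ edit distance per window via a local swap unless you first identify \emph{which} local swap is undetectable---and that is precisely the content missing from the proposal.

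The paper's proof takes a rather different route that sidesteps this obstruction. Instead of hoping for a large modification on a generic random window, it isolates a specific pair of patterns $\alpha=0^M 1\,0^{M+1}11$ and $\beta=0^{M+1}1\,0^M 11$ that differ by a single edit (moving one $0$ across the middle $1$). Distinguishing $\alpha$ from $\beta$ given $M$ traces reduces to distinguishing $\Bin(M,1-\delta)\times\Bin(M+1,1-\delta)$ from $\Bin(M+1,1-\delta)\times\Bin(M,1-\delta)$, and an explicit Bayes calculation shows this fails with probability at least $(\delta M)^{\Theta(M)}$. A direct-sum argument then shows that $B$ independent copies of this problem incur expected edit distance $\Omega(B\cdot(\delta M)^{\Theta(M)})$. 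Finally, a uniform random $\bssx$ contains (with high probability) $B=n/2^{\Theta(M)}$ disjoint occurrences of $\alpha$ or $\beta$, and a reduction shows that any trace-reconstruction algorithm achieving edit distance $o(n(\delta M)^{C M})$ would solve this $B$-fold problem too well. The trailing $11$ in $\alpha,\beta$ guarantees non-overlap of occurrences and handles the boundary issues you allude to. The point is that the ``right'' hard local pair is not a random window versus a resampled one, but two highly structured strings differing by a single carefully placed bit.
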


We observe that for natural parameter settings, \Cref{thm:mainlower} improves on the simple $\Omega(\delta^M n)$ expected edit distance lower bound mentioned earlier; for example, taking $M = \Theta(1/\delta)$, \Cref{thm:mainlower} proves that the best possible accuracy is $n/2^{O(M)}$ rather than $\delta^M n.$

\ignore{
}

\begin{remark}
In simultaneous and independent work to ours, Chase and Peres \cite{ChasePeres:21} have also considered the problem of approximate trace reconstruction of random source strings $\bssx$. Their main result is that for any constant deletion rate $\delta$ (bounded away from 1) and any constant $\eps$ (bounded away from 0), there is an algorithm that uses $O_{\delta,\eps}(1)$ traces and, with high probability over a random source string $\bssx \sim \zo^n$, succeeds in reconstructing a hypothesis string $\widehat{x}$ with edit distance at most $\eps n$ from $\bssx$. 

The work of \cite{ChasePeres:21} and the current paper focus on different parameter settings, in particular different regimes for the number of traces available to the algorithm, and establish complementary results.  
The results of \cite{ChasePeres:21} apply in the regime where ``many traces '' (significantly more than $\Theta(1/\delta)$) are available, and give high-accuracy reconstruction in this regime. 
In contrast, our results apply in the ``few traces'' regime where only some number $1 \leq M \leq \Theta(1/\delta)$ of traces are available, and give essentially optimal reconstruction for any such small number of traces.

\ignore{




}
\end{remark}

\subsection{Discussion and future work}
A number of directions suggest themselves for future work on approximate trace reconstruction; we close this introduction by briefly mentioning a few of these.

One natural goal is to obtain results for average-case approximate trace reconstruction which generalize both the results of the current paper and the results of \cite{ChasePeres:21}, by establishing sharp bounds on approximate average-case trace reconstruction in the regime where more than $\Theta(1/\delta)$ many traces are available.  It is clear that the $n \cdot (\delta M)^{\Theta(M)}$ form of our edit distance bound no longer holds once $M$ is $\omega(1/\delta)$; it would be interesting to understand the best achievable edit distance, as a function of $\delta$ and $M$, in this regime.

\ignore{
}

Another natural goal is to obtain algorithmic results for approximate trace reconstruction of \emph{worst-case} rather than random strings.  Here we observe that the current state of the art for worst-case exact trace reconstruction places significant limitations on how much better than edit distance $\delta n$ (trivially achievable by simply outputting a random trace) it is possible to do for approximate reconstruction of worst-case strings.  
As noted earlier, until quite recently the best result known for the low deletion rate regime was that of \cite{BKKM04}, which gave an algorithm using $O(n \log n)$ traces to reconstruct an arbitrary source string $\ssx$ at deletion rate $\delta = n^{-(1/2 + \eps)}$. 
This was recently strengthened to a $\poly(n)$-trace algorithm that reconstructs at rate $\delta = n^{-(1/3 + \eps)}$ \cite{CDLSS20lowdeletion}.
For the worst case approximate trace reconstruction problem, achieving edit distance $\delta^3 n$
for all $\delta$, even using $\poly(n)$ traces, would require improving the recently established state of the art from \cite{CDLSS20lowdeletion} for the low-deletion-rate regime of the exact reconstruction problem.


\section{Our approach} \label{sec:approach}

\subsection{Overview of our algorithmic approach (\Cref{thm:main})}

\subsubsection{Some preliminary observations and simplifications} \label{sec:simplifications}



We begin by observing that to prove \Cref{thm:main} it suffices to prove it under the assumptions that
\begin{equation} \label{eq:assumptions-initial}
  {\frac 1 {n^2}} \leq \delta < \frac 1 {KM},
\quad \quad
K^2 \leq M \leq {\frac 1 {K \delta}}, \quad \quad
\text{and} \quad \quad
(\delta M)^{M/K} \geq 1/n^2,
\end{equation} 
for a sufficiently large absolute constant $K$.
The upper bounds on $\delta$ and $M$ follow directly from our assumption $M \le \Theta(1/\delta)$ in \Cref{thm:main}.
For the lower bound on $\delta$, note that if $\delta < 1/n^2$, then with probability at least $1 - 1/n$ a single input trace will have no bits deleted and hence will trivially yield a string $\widehat{\ssx}$ that has $\dedit(\bssx,\widehat{\ssx})=0$.


For the lower bound on $M$, we observe that if $M < K^2$, then a single trace would satisfy the claimed edit distance bound in \Cref{thm:main}.
Indeed, since a single trace has edit distance from $\bssx$ distributed as $\Bin(n,\delta)$, and the probability that a draw from $\Bin(n,\delta)$ exceeds $n \cdot \delta^{0.1}$ is at most $n^{-\Omega(1)}$ (by a standard multiplicative Chernoff bound, using that $1/n^2 \leq \delta \leq 1/K$), the trivial algorithm that simply outputs a single input trace would satisfy edit distance
\[
  n \delta^{0.1} \le n (\delta M)^{0.1} \le n (\delta M)^{0.1 M/K^2} = n (\delta M)^{\Omega(M)} .
\]

For our final simplifying observation 
that $M$ and $\delta$ jointly satisfy $(\delta M)^{M/K} \geq 1/n^2$,
note that if $(\delta M)^{M/K}$ is less than $1/n^2$, then the claimed high-probability edit distance bound $n \cdot (\delta M)^{\Omega(M)}$ of \Cref{thm:main} is less than 1 (for a suitable choice of the hidden constants), and hence the claim of \Cref{thm:main} is that with high probability the edit distance achieved is zero.  In this case since $(\delta M)^{M/K}$ is decreasing for $M \in [0,\delta/e]$, we can simply
use $M' < M$ traces so that $1/n^2 \leq (\delta M')^{M'/K} < 1/n$, and achieve edit distance $n \cdot (\delta M')^{M'/K}$ which will also achieve edit distance 0 (which of course suffices to achieve the edit distance required by the theorem statement).
Therefore we will assume that the conditions given in (\ref{eq:assumptions-initial}) hold throughout the rest of our proof of \Cref{thm:main}.

 \subsubsection{The high-level approach}

Our main algorithm \AR~makes essential use of one particular distinguished trace, which we denote $\bssy^\ast$ and refer to as the \emph{reference trace}, as well as $M$ other traces $\bssy^{(1)},\dots,\bssy^{(M)}$.  
The overall \AR~algorithm works by repeatedly executing two different subroutines.  Below we first give a high level description of what each of these subroutines does and then we present the overall algorithm and explain how it uses these subroutines.

{\bf First subroutine: Alignment.} The first subroutine is an alignment procedure which we call \Align. 
It takes as input the reference trace $\bssy^\ast$ and a pointer $\ell^\ast$ to a location in the reference trace, as well as the $M$ other traces $\bssy^{(1)},\dots,\bssy^{(M)}$.  
It outputs a list of $M$ pointers $(\ell^{(1)},\dots,\ell^{(M)})$ where each pointer $\ell^{(m)}$ specifies a location in the $m$-th trace $\bssy^{(m)}.$ 
Roughly speaking, \Align~uses the reference trace to ``align'' the other $M$ traces, i.e. to come up with a pointer into each trace so that most of the pointers agree (\Align does not change the location $\ell^\ast$ of the pointer into the reference trace). In more detail but still at a high level, the main guarantee of the \Align algorithm is that with high probability, a clear majority of the $M$ pointers all point to locations that came from the same bit $\bssx_i$ of the source string $\bssx$. 
(Another important guarantee is that with high probability this location $i \in [n]$ is ``not too far'' from the location in $\bssx$ that $\bssy^\ast_{\ell^\ast}$ came from; we give more details on this below.) 
Thus a successful run of \Align~results in a clear majority of the $M+1$ pointers (including the reference trace's pointer $\ell^\ast$) all being in agreement.  We refer to a specification of the pointer locations $(\ell^{(1)},\dots,\ell^{(M)})$ as a \emph{configuration}, and we say that a configuration for which there is a clear majority in agreement as described above is \emph{in consensus.} (We give a fully detailed definition in \Cref{subsec:align}, along with a detailed statement of the \Align algorithm's performance guarantee.)
We emphasize that the correctness of this subroutine, i.e., Alignment, crucially relies on the source string $\bx$ being uniform random.

{\bf Second subroutine: Bitwise Majority.} The second subroutine is a ``Bitwise Majority Alignment'' procedure, which we call \BMA. 
This procedure was first introduced in the work of \cite{BKKM04} and was further analyzed in the recent work \cite{CDLSS20lowdeletion}. 
(As we explain below, a crucial ingredient in our proof of \Cref{thm:main} is a new refined analysis of \BMA~that goes significantly beyond the results of \cite{CDLSS20lowdeletion}.)  
All of the output bits that our algorithm constructs are produced by \BMA.
The \BMA procedure takes as input the $M+1$ traces $\bssy^\ast,\bssy^{(1)},\dots,\bssy^{(M)}$ and the corresponding pointers $\ell^\ast,\ell^{(1)},\dots,\ell^{(M)}$.
The \BMA~algorithm is run for $R := (\delta M)^{-\Theta(M)}$ many stages to reconstruct $R$ output bits; in the course of its execution it updates the pointers into all $M+1$ of the traces $\bssy^\ast,\bssy^{(1)},\dots,\bssy^{(M)}$. 

To explain the performance guarantee of the \BMA~procedure we need the notion of a \emph{$k$-desert.} Roughly speaking, a binary string $z\in \{0,1\}^*$ is said to be a \emph{$k$-desert} if (i) it is sufficiently long, and (ii) it is a prefix of $s^{\infty}$ for some $s \in \{0,1\}^{\le k}$ (we give a precise definition in \Cref{subsec:BMA}).
The main guarantee of the \BMA~procedure is that if it is run on a configuration that is in consensus at some location $i$ in the source string $x$, then with high probability it produces a $R$-bit string that agrees with $(x_i,x_{i+1},\dots)$ up to the location (if any) where a $k$-desert of length $L := \Theta(M \log(1/(M \delta)))$ first appears, for some $k \leq L/2.$ We note  that unlike the first subroutine Alignment, the guarantee of \BMA~procedure is a worst-case guarantee.

{\bf The overall \AR~algorithm.} 
As stated earlier, the overall algorithm repeatedly runs \Align, then \BMA, then \Align, then \BMA, and so on.
We present {a slightly simplified version of the algorithm in \Cref{figg:MAIN-simplified} (see \Cref{sec:combine} for the formal algorithm; the version in \Cref{figg:MAIN-simplified} differs only in that some parameter settings have been slightly simplified for the sake of readability).

\begin{figure}[t!]
  \centering
\setstretch{1.2}
  \begin{algorithm}[H]
    \caption{{\MAIN} (slightly simplified)}\label{fig:BMA}
		\DontPrintSemicolon
		\SetNoFillComment
		\KwIn{A positive integer $n$ and $(M+1)$ traces {$\ssy^*,\ssy^{(1)},\ldots,\ssy^{(M)}$ for some $M \leq 1/(K \delta)$}  
		}
		\KwOut{
		A binary string $\ssw$}
		Set $\ell^*=1$ and $\ssw=\epsilon$ {(the empty string)}\\
		\While{$\ell^*\le |\ssy^*|$}{
		  Run $\Align\hspace{0.04cm}(\ell^*,\ssy^*,\ssy^{(1)},\ldots,\ssy^{(M)})$ to obtain a 
		  tuple of locations $(\ell^{(1)},\ldots,\ell^{(M)})$\\
		  Run $\smash{\BMA\hspace{0.04cm}(\ssy^*,\ssy^{(1)},\ldots,\ssy^{(M)};\ell^*,\ell^{(1)},\ldots,\ell^{(m)})}$ to obtain 
		    a binary string ($\eps$ or in $\{0,1\}^R$)\\
		  Concatenate the string returned by $\BMA$ to the end of $\ssw$\\
   		  Set $\ell^*$ to be the final pointer of $\ssy^*$ in the run of 
		  $\BMA$ above  and increment it\\		  
		  }
\Return $\ssw$.
\end{algorithm}\caption{A slightly simplified version of our main algorithm $\MAIN$ (the actual algorithm differs in some small details and is given in \Cref{figg:MAIN}).}\label{figg:MAIN-simplified}
\end{figure}

\ignore{


\begin{figure}[!t] 
\newcommand\mycommfont[1]{\small\ttfamily\textcolor{blue}{#1}}
\SetCommentSty{mycommfont}
\setstretch{1.2}
\begin{algorithm}[H]
  \caption{{\AR}}\label{alg:align}
	\DontPrintSemicolon
	\SetNoFillComment
  \KwIn{An allowed number of traces $M \leq 1/(K\delta)$ and access to $\Del_\delta(\bx),$ where $\bx \sim \zo^n$.}
  \KwOut{A hypothesis string $\widehat{x} \in \zo^\ast.$
}

\smallskip
Draw traces $\bssy^\ast,\bssy^{(1)},\dots,\bssy^{(M)} \sim \Del_\delta(\bssx)$. (We view all of these traces as padded with infinitely many 0-bits at the end.)

Set $\ell^*=1$ and set $\widehat{x}$ to be the empty string. 

  \While(){$\ell^*\le |\bssy^*|$}{ \label{alg:main-alg-line4}

	Run $\Align(\bssy^*,\ell^*,\bssy^{(1)},\ldots,\bssy^{(M)})$ to obtain a configuration $(\ell^{(1)},\ldots,\ell^{(M)})$.

	Run $\BMA(\bssy^*,\bssy^{(1)},\ldots,\bssy^{(M)};\ell^*,\ell^{(1)},\ldots,\ell^{(M)})$ for $R := (\delta M)^{-\Theta(M)}$ rounds (so $\BMA$ always returns a string of length precisely $R$).
	 
	 Concatenate the $R$-bit string returned by $\BMA$ to the end of $\widehat{x}$. 
	 
	 Increment $\ell^\ast$ by 1.

    }

    Return $\widehat{x}$.

\end{algorithm}
\caption{The main \AR algorithm.}
\label{fig:align}
\end{figure}

END IGNORE

}


The high level intuition for why the algorithm succeeds is as follows. Each run of \Align with high probability succeeds in putting the $M$ traces in consensus at a location ``not too far'' from the location in $\bssx$ corresponding to $\bssy^\ast_{\ell^\ast}$.  Given that this consensus has been achieved by \Align, then the subsequent run of \BMA with high probability succeeds in correctly reconstructing the next $R := (\delta M)^{-\Theta(M)}$ many bits of $\bssx$.  In the course of running \BMA the pointer $\ell^\ast$ is with high probability advanced to ``approximately the right location'' corresponding to the last-reconstructed bit of $\bssx$, so the next run of \Align again establishes consensus at approximately the right location.  Thus the overall output string $w$ of the algorithm is the concatenation of many length-$R$ strings, most of which correspond to subwords of $\bssx$ from approximately the right locations. From this it can be shown that the overall reconstructed string is not too far in edit distance from $\bssx$.

The above high-level explanation sketches an idealized version of the actual scenario and glosses over a number of technical difficulties. In more detail, there are many sources of error from different possible failure events and a careful analysis is required (and is provided in \Cref{sec:combine}) to keep the failure probabilities from all of these under control and not ``give away too much'' in the overall edit distance.  The issues that must be handled include the following:
\begin{flushleft}\begin{itemize}
\item \Align~may fail to align the traces to a consensus location, or may misalign the traces and achieve consensus at a  location that is far away from the location in $\bssx$ corresponding to $\bssy_{\ell^\ast}^\ast$.  Our analysis shows that this happens with small (but non-negligible probability), and bounds the cumulative error (edit distance) incurred by the runs of \Align~for which this happens.

\item Even when \Align aligns the traces at a location that is ``not too far'' from the correct location, the alignment location in $\bssx$ may not be exactly the location in $\bssx$ corresponding to $\bssy_{\ell^\ast}^\ast$. This contributes to the overall edit distance between $\widehat{x}$ and $\bssx$ even when the \Align algorithm succeeds.

\item On average the random string $\bssx$ will have a $k$-desert of length $L$ occuring roughly once every $2^{\Theta(L)}$ positions.  When a run of \BMA encounters such a location the resulting $R$-bit string that it produces may be badly off from the true corresponding portion of $\bssx$. This contributes to the overall edit distance between $\widehat{x}$ and $\bssx$ even when the algorithm succeeds.

\item Even when the portion of $\bssx$ that a given run of \BMA is operating on does not contain a $k$-desert of length $L$, the \BMA algorithm may fail to correctly reconstruct the relevant portion of \BMA with small (but non-negligible) probability. Our analysis bounds the overall error in the reconstructed string that comes from such ``failed runs'' of \BMA.  
\end{itemize}\end{flushleft}

\subsection{The \Align~procedure} \label{subsec:align}

As stated earlier, \Align takes as input the reference trace $\bssy^\ast$, a pointer $\ell^\ast$ to a location in the reference trace, and the $M$ other traces $\bssy^{(1)},\dots,\bssy^{(M)}$, and outputs a list of $M$ pointers $\ell^{(1)},\dots,\ell^{(M)}$ into the $M$ traces $\bssy^{(1)}, \dots, \bssy^{(M)}.$ 
In a successful run of \Align, it generates a list of pointers most of which point to locations that came from the same bit $\bssx_i$ of the source string $\bssx$.  

At a very high level, \Align~works in two stages.  The first stage, which performs an ``approximate alignment,'' consists of a sequence of iterative refinement steps; in each successive step of this stage, for each trace $\bssy^{(m)}$ \Align tries to identify successively smaller subwords of $\bssy^{(m)}$ that fairly closely match (as measured by edit distance) suitable successively smaller subwords, centered at $\ell^\ast$, of the reference trace $\bssy^\ast$.   At the end of a successful execution of the first stage, for each trace $\bssy^{(m)}$ a relatively small subword $\bssy^{(m)}_{Q^m_1}$ has been identified which contains the ``right location'' in $\bssy^{(m)}$ (informally, corresponding to the portion of $\bssx$ that $\bssy^\ast_{\ell^\ast}$ came from). In the second stage, \Align searches for a suitable subword that appears in at least 95\% of 
$\bssy^{(1)}_{Q^1_1},\dots,\bssy^{(M)}_{Q^M_1}$ and uses the location of this subword in each $\bssy^{(m)}$ to determine the exact final pointer location $\ell^{(m)}.$

Correctness of the second stage (given that a successful ``approximate alignment'' was indeed achieved in the first stage) is established using an elementary but careful analysis that we do not describe here but is given in \Cref{sec:second-stage}.
To gain intuition for the iterative approach employed in the first stage, it is useful to consider the following toy scenario: Fix an $a$-bit subword $\ssw$ of the reference trace $\bssy^\ast$ that is centered at location $\ell^\ast$. Intuitively, the deletion rate $\delta$ is  relatively low, so the subword $\ssw$ of $\bssy^{\ast}$ should have small edit distance from the corresponding subword of the source string $\bssx$, and, transitively, should also have small edit distance from the corresponding subwords of each of the $M$ traces $\bssy^{(1)},\dots,\bssy^{(M)}$. However, since $\bssx$ is uniform random (and hence each trace $\bssy^{(m)}$ is also uniform random), if $a$ is a ``small'' value that is $\ll \log n$, then it is very likely that $\ssw$ will occur as an $\ell$-bit subword of each $\bssy^{(m)}$ in many locations, and thus a simpleminded approach of just scanning all of $\bssy^{(m)}$ to try to find $\ssw$ (or a close match to it) will not succeed in uniquely identifying the correct location.  But if $a$ is a ``large'' value (actually, being just modestly larger than $\log n$ will do), then it is very likely that only one location in each $\bssy^{(m)}$ will be a close match to the $a$-bit string $\ssw$.  This reduces the problem of finding the right location in the $\approx n$-bit string $\bssy^{(m)}$ to the problem of finding the right location in the $\approx a$-bit subword of $\bssy^{(m)}$ that was just identified (by virtue of closely matching $\ssw$); and now we can iterate.

A more complete overview and explanation of \Align is given in \Cref{sec:align-overview}.  \Cref{sec:align} gives a  detailed proof of \Cref{thm:align-restated}, which is our main result about \Align; since the exact theorem statement is somewhat cumbersome (involving various specific parameter settings), we give an informal version here and defer the fully detailed statement to \Cref{sec:align}.
Informally, we say that the $\Align$ algorithm
  \emph{succeeds} on source string $\ssx$ with respect to a tuple of traces $ (\ssy^*,\ssy^{(1)},\ldots, \ssy^{(M)} )$ if 
  the following condition holds for ``almost all'' locations $\ell^*\in [|\ssy^*|]$: The output $(\ell^{(1)},\ldots,\ell^{(M)})$
  of $\Align(\ell^*,\ssy^*,\ssy^{(1)},\ldots,\ssy^{(M)})$ satisfies
(1) At least 90\% of $\source^{(m)}(\ell^{(m)})$, $m \in [M]$, agree on the same location $i \in [n]$, and 
(2) The consensus location $i$ is ``quite close'' to the location in $\bssx$ that $\bssy^\ast_{\ell^\ast}$ came from. 
Now we can state an informal version of \Cref{thm:align-restated},  which gives a performance guarantee on $\Align$:



\begin{theorem}[Main result about \Align, informal statement] \label{thm:align}
Let $\bssx\sim\{0,1\}^n$ and  let
  $\bssy^*,\bssy^{(1)},\ldots,\bssy^{(M)}$ be independent traces drawn from $\Del_\delta(\bssx)$, where $\delta$ and $M$ satisfy \Cref{eq:assumptions-initial}.
Then $\Align$ succeeds on $\bssx$ with respect to $(\bssy^*,\bssy^{(1)},\ldots,\bssy^{(M)})$
  with probability at least $1-1/\poly(n)$.
\end{theorem}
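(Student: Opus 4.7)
The plan is to reduce the theorem to a per-position statement: for each fixed $\ell^* \in [|\bssy^*|]$, both success conditions (1) and (2) hold with probability at least $1 - 1/n^{c}$ for a large constant $c$, where the probability is over $\bssx \sim \{0,1\}^n$ together with the $M+1$ independent deletion processes. Since $|\bssy^*| \le n$, a union bound over all positions $\ell^*$ then gives an event of probability $1 - 1/\poly(n)$ on which every $\ell^*$ succeeds, which is stronger than ``almost all''. The bulk of the work is therefore the single-position analysis, and the assumptions in (\ref{eq:assumptions-initial}) — in particular $\delta \ge 1/n^2$ and $\delta M \le 1/K$ — are precisely what make the per-position bound achievable.

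For the fixed $\ell^*$, I would analyze Stage~1 (iterative refinement) via a uniqueness lemma and an induction on the refinement level. The uniqueness lemma says: for $\bssx \sim \{0,1\}^n$ and a trace $\bssy \sim \Del_\delta(\bssx)$, with probability $1 - 1/n^{\Omega(1)}$ over $\bssx$ and the deletions, the $a$-bit subword of $\bssy$ centered at any given position $\ell$ is within edit distance $O(\delta a)$ of \emph{no other} $a$-bit subword of any trace $\bssy^{(m)}$, provided $a \ge C \log n$ for a sufficiently large constant $C$. The proof is a union bound over $O(n)$ candidate alternate positions: for any two positions corresponding to disjoint portions of $\bssx$, the bits are independent, and the probability that two random $a$-bit strings lie within edit distance $O(\delta a)$ is at most $2^{-\Omega(a)} = 1/\poly(n)$ by a standard edit-ball counting bound. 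I would then run the refinement level by level: at each level the ``surviving'' subword $\bssy^{(m)}_{Q^m_1}$ contains the correct location by the previous level's invariant, and uniqueness within this smaller window lets the next-level match identify a strictly shorter window still containing the correct location. Because only $O(\log n)$ refinement levels are needed to shrink from length $n$ down to the Stage~2 window length, the per-level failure probabilities can be absorbed into a single $1/\poly(n)$ bound.

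Conditional on the Stage~1 invariant, Stage~2 (consensus via a common short substring) is analyzed by arguing that the $\Theta(\log n)$-bit subword of $\bssx$ around the target location is, with high probability over $\bssx$, ``distinctive'' in the sense that it has no near-duplicates within any of the short Stage~1 windows, and moreover it survives intact (or to within small edit distance) in a $1 - O(\delta M) \ge 95\%$ fraction of the $M$ traces — here the hypothesis $\delta M \le 1/K$ is exactly what buys the $95\%$ survival rate. Locating this common subword in each surviving trace pins down $\ell^{(m)}$ to the correct underlying source-string location, establishing both (1) and (2) with the ``closeness'' of (2) controlled by the Stage~2 window length.

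The main obstacle I expect is the shared-randomness bookkeeping in Stage~1: the source $\bssx$ is common to all traces, so uniqueness cannot simply be argued trace-by-trace using independence. My plan is to front-load this by conditioning on a single high-probability structural event $\mathcal{E}$ on $\bssx$ — roughly, that $\bssx$ has no $O(\log n)$-long near-repetitions and no anomalously long deserts — and then appeal to independence only via the $M+1$ independent deletion channels on this conditioned $\bssx$. Once $\mathcal{E}$ is in place the inductive step becomes routine, and the remaining delicate point is to verify that ``quite close'' in condition~(2) really means ``close enough for the ensuing $\BMA$ call to work'', which amounts to checking that the Stage~2 window length is small compared to the desert length scale $L = \Theta(M \log(1/(\delta M)))$ governing $\BMA$.
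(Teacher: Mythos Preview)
Your plan has a genuine gap that breaks the union-bound reduction at the very first step. You claim each fixed $\ell^*$ succeeds with probability $1-1/n^c$, but this is simply false under the stated assumptions. The controlling length scale is $H=\frac{M}{K}\log(1/(\delta M))$, and \Cref{eq:assumptions-initial} only forces $H\le 2\log n$; it allows $H$ (and $M$) to be an absolute constant. At the bottom levels of Stage~1 the window has length $t_1=\Theta(H)$, so your ``uniqueness'' bound $2^{-\Omega(a)}$ is only $2^{-\Omega(H)}$, not $1/\poly(n)$. Likewise in Stage~2: the event ``$\ge 95\%$ of the $M$ traces have no deletion in the $\Theta(H)$-bit window'' fails for a single trace with probability $\Theta(\delta H)$, and the Chernoff/union-bound over $M$ traces gives failure probability $(\delta H)^{\Theta(M)}=2^{-\Theta(H)}$, again not $1/\poly(n)$ when $H=\Theta(1)$. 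Union-bounding $n$ positions against a constant per-position failure probability gives nothing.

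This is exactly why the paper does \emph{not} prove success at every $\ell^*$. The theorem only promises success at all but $2^{-\Omega(H)}n$ positions, and the paper gets there by a different decomposition: it first isolates a list of ``good'' structural events on $\bD^*$, $\bssx$, and the $\bD^{(m)}$ (your front-loaded event $\calE$ is in the right spirit), but then for the events whose per-position probability is only $1-2^{-\Omega(H)}$ it applies McDiarmid's bounded-differences inequality to the \emph{count} of bad positions. Each bit of $\bssx$ (or each deletion indicator) affects only $O(\log n)$ positions' status, so the count concentrates within $\tilde O(\sqrt{n})$ of its mean $2^{-\Omega(H)}n$ with probability $1-\exp(-n^{\Omega(1)})$. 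The deterministic half then shows that on the intersection of these events, \Align succeeds at every $\ell^*$ outside the small bad set. Your inductive refinement and Stage~2 analysis are qualitatively right, but you need to replace ``union bound over $\ell^*$'' by ``McDiarmid on the number of bad $\ell^*$'' to make the argument go through.
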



\subsection{The \BMA~procedure} \label{subsec:BMA}

The Bitwise Majority Alignment, or \BMA, procedure, operates in discrete time steps on a collection of $M$ independent traces. At each time step it outputs one bit of the hypothesis string that it is reconstructing. Throughout its execution, at each time step $t$, for each $m \in [M]$ the \BMA algorithm maintains a pointer\ignore{$\curr^{(m)}(t)$} into the $m$-th trace.  The idea of \BMA is that at each time step $t$, it should be the case that most of the pointers are correctly aligned, i.e.~the majority of the bits that they point to in their respective traces came from the same bit of the source string $\ssx$.  In the $t$-th time step the majority vote of the $M$ bits that are pointed to in the traces is the output bit \BMA~produces, and

\begin{flushleft}\begin{itemize}
\item For each trace in which the pointer points to a bit that agrees with the majority, the pointer is incremented by one location;

\item For each trace in which the pointer points to a bit that disagrees with the majority, the pointer stays in the same location.
\end{itemize}\end{flushleft}

A first analysis of \BMA, for deletion rate $\delta$ slightly less than $n^{-1/2}$, was originally given in \cite{BKKM04}, and more recently an analysis for deletion rate $\delta$ slightly less than $n^{-1/3}$ was given in \cite{CDLSS20lowdeletion}.  We give a significant extension of \cite{CDLSS20lowdeletion} by providing a much more refined analysis which yields a considerably stronger quantitative result.\footnote{This quantitative strengthening plays an essential role in our being able to obtain tight bounds  (recall the essentially matching \Cref{thm:main,thm:mainlower}) via our approach.}  In more detail, in the current work our analysis of \BMA handles deletion rates even as large as a (small) absolute constant independent of $n$, and indeed handling such deletion rates is essential for our overall results.

To state our main theorem about \BMA we require the following terminology:
Recall that a string is said to be a 
\emph{$k$-desert} for some $k\ge 1$ if it is the prefix of $\sss^{\infty}$ for some string $\sss\in \{0,1\}^k$.  We say a string is a \emph{long desert} if it is a $k$-desert of length $L$ for some $k\le L/2$. 

Our main result about \BMA says, roughly speaking, that if the source string does not contain any long desert then with high probability \BMA succeeds in exactly reconstructing the source string, and moreover does so with a ``clear majority'' in each round. Similar to \Align, the detailed theorem statement about \BMA involves various specific parameter settings, so we defer its exact statement until later (see \Cref{thm:bma-easy}) and here give an informal statement:

\begin{theorem} [Main result about \BMA, informal statement] \label{thm:bma}
Let $\tsx\in \{0,1\}^R$ be a string that does not contain any
  long desert.
Let $\bssz^{(1)},\ldots,\bssz^{(M)}\sim \Del_\delta(\tsx)$ be independent traces. For suitable settings of $R, M, k, L$ and $\delta$, with high probability 
  \BMA returns exactly $\tsx$, and~in every round $t\in [R]$ the majority is
  reached by at least $90\%$ of $\bssz^{(1)},\ldots,\bssz^{(M)}$. 

\end{theorem}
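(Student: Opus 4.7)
My plan is to use a coupling-and-induction approach to show that at every round $t \in [R]$ at least $0.9M$ of the $M$ pointers point to the bit $\tsx_t$ of the source string, from which both conclusions of the theorem (correctness of \BMA's output and the $90\%$ majority claim) follow. For each trace $m \in [M]$ and round $t$, let $p^{(m)}(t)$ denote the source-position in $\tsx$ of the bit pointed to in trace $m$ at the start of round $t$, and define the \emph{lag} $\ell^{(m)}(t) := p^{(m)}(t) - t \geq 0$. A pointer is ``correct'' in round $t$ if $\ell^{(m)}(t) = 0$.

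The coupling is with an \emph{ideal process} in which the bit output by \BMA in round $t$ is declared to be $\tsx_t$ by fiat, regardless of the actual majority; under this ideal process the lag dynamics of each trace become a function only of that trace's individual deletion pattern, so the ideal lags $\ell^{(1)},\ldots,\ell^{(M)}$ are mutually independent. Whenever the real majority in every round up to $t$ equals $\tsx_t$, the real and ideal processes agree up through round $t$. Thus the task reduces to showing that in the ideal process at most $0.1M$ pointers are incorrect at any given round with high probability, and then closing the induction.

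To analyze a single ideal lag, observe the transition rules: from $\ell=0$ the lag jumps to $K-1$ where $K\ge 1$ is the (geometric with parameter $1-\delta$) size of the deletion burst immediately following the current source-position, so a new nonzero lag arises only with probability $\delta$ per round; from $\ell > 0$, the bit compared is $\tsx_{t+\ell}$, so the lag decreases by $1$ when $\tsx_{t+\ell} \neq \tsx_t$ and otherwise advances by a geometric increment. The critical step, and what I expect to be the main technical obstacle, is bounding the recovery time from a nonzero lag using the no-long-desert hypothesis. The essential point is that if the lag were to persist at some fixed value $k \leq L/2$ for $L$ consecutive rounds starting at round $t_0$ without ever dropping, then $\tsx_{t_0+j} = \tsx_{t_0+j+k}$ would hold for all $0 \leq j < L$, exhibiting a $k$-desert of length $L$ inside $\tsx$ and contradicting the hypothesis. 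A careful version of this argument, handling the case that the lag itself can fluctuate during recovery because of additional deletion bursts, should show that once a nonzero lag is incurred it returns to $0$ within $O(L)$ rounds except with probability exponentially small in $L$. Combined with the $\delta$ per-round probability of initiating a lag, this yields a per-round per-trace misalignment probability of $O(\delta)$.

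With this per-trace bound in hand, independence of the ideal lags across traces allows an application of the Chernoff bound: the expected number of misaligned ideal pointers at round $t$ is $O(\delta M)$, which is much less than $0.1M$ under the parameter regime of \Cref{eq:assumptions-initial}, and the actual number exceeds $0.1M$ with probability at most $\exp(-\Omega(M))$. A union bound over the $R$ rounds, together with the fact that the parameters force $R \cdot \exp(-\Omega(M)) \leq 1/\poly(n)$, closes the induction: the real and ideal processes coincide throughout, \BMA outputs $\tsx$ exactly, and the majority in each round is reached by at least $0.9M$ traces, as claimed.
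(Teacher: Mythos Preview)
Your outline matches the paper's proof closely: the coupling with an ideal process in which the output bit is declared to be $\tsx_t$ (the paper's $\idist^{(m)}$), the observation that the ideal lags are then i.i.d., the use of the no-long-desert hypothesis to force recovery (the paper's \Cref{claim:dist-desert}), and the Chernoff-plus-union-bound-plus-induction closure are all exactly the paper's steps.

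Two quantitative points need correction, however. First, the per-round per-trace misalignment probability is $O(\delta L)$, not $O(\delta)$: an excursion away from lag $0$ is initiated with probability $\delta$ per round and lasts $\Theta(L)$ rounds before the no-long-desert argument forces it back down, so the stationary probability of nonzero lag is $\Theta(\delta L)$. The paper proves precisely $\Pr[\idist(t^\ast)\ne 0]\le 2\delta L$ in \Cref{lemma:rw-easy}. Second, and more seriously, the tail bound $\exp(-\Omega(M))$ you quote is too weak to survive the union bound over $R = L\cdot 2^{0.01L}$ rounds: when $M=\Theta(1)$ but $\delta$ is polynomially small in $n$, $R$ is $n^{\Omega(1)}$ while $\exp(-\Omega(M))$ is a constant. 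What you actually get from the independence is $\binom{M}{0.1M}(O(\delta L))^{0.1M}=(\delta M)^{\Omega(M)}=2^{-\Omega(KH)}$, and it is this stronger bound that beats $R$.

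The paper also does not directly run an excursion-length argument; instead it analyzes $\idist$ on the coarse timescale of $L$-step blocks and bounds it by an explicit dominating Markov chain (\Cref{claimhehe1,claimhehe2}). This sidesteps the case-analysis you flag (lag fluctuating during recovery, and in particular the lag possibly exceeding $G=L/2$, where the desert argument no longer applies) in a clean way. Your excursion approach can be made rigorous, but you will need to treat the $\idist>G$ regime separately.
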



\subsection{Overview of our lower bound approach (\Cref{thm:mainlower})}

Our lower bound approach is informed by insights arising from the analysis of our algorithm.  Given the arguments sketched above for our algorithmic results, it is natural to pursue a lower bound based on the difficulty of reconstructing deserts. The high-level idea of our lower bound is that having access to only a limited number $M \leq \Theta(1/\delta)$ of traces imposes strong limitations on the ability of any reconstruction algorithm to accurately estimate the lengths of deserts, and this inability to accurately reconstruct deserts translates into an inability to perform overall high-accuracy approximate reconstruction. Guided by this general idea, it is natural to consider 1-deserts (runs of all 0's or all 1's) as potential sources of hardness, and indeed this is our approach.

In more detail, our lower bound proceeds in four conceptual stages.
\begin{flushleft}\begin{enumerate}
\item We first (\Cref{sec:atomic}) consider the following simple distribution distinguishing problem: an algorithm is
given $M$ draws which are guaranteed to come from one of two product distributions over $\N \times \N$: (a) the product distribution $\Bin(M,1-\delta) \times \Bin(M+1,1-\delta)$, or (b) the product distribution
$\Bin(M+1,1-\delta) \times \Bin(M,1-\delta)$, where both (a) and (b) are equally likely to be the target product distribution. We show that any algorithm for determining whether it is (a) or (b) must have failure probability at least $(\delta M)^{O(M)}.$

\item Next, in \Cref{sec:PRLP} we consider the algorithmic task of solving $B$ independent instances 
of the distinguishing problem described in (1) above; this may be viewed as the problem of inferring an unknown $B$-bit string that is uniform over $\{0,1\}^B$ given certain partial/noisy information about the string. Building on (1) above, we show that the expected edit distance from the output of any algorithm for this problem to the unknown uniform string in $\{0,1\}^B$ will be at least $B  (\delta M)^{O(M)}$.

\item We then (in \Cref{sec:embedding})  observe that a random string $\bssx$ can be viewed as containing, with high probability, $B = n / 2^{\Theta(M)}$ independent instances 
of the distribution distinguishing problem from (1).  Roughly speaking, this is because a random string $\bssx \sim \zo^n$ can be viewed as composed of $n/(2M+4)$ blocks of $2M+4$ bits each, and with high probability $\Theta(n/2^{2M+4})$ of these blocks will consist of either the string $\alpha = 0^M 1 0^{M+1} 11$ 
or the string $\beta = 0^{M+1} 1 0^{M} 11$, 
and these two strings are equally likely for each block. (The specific structure of these $\alpha$ and $\beta$ strings is chosen to ensure that they cannot overlap; this is useful for (4) below.)

\item Using (3), in \Cref{sec:embedding} we show that any algorithm that achieves a certain $n (\delta M)^{\Omega(M)}$ 
expected edit distance for reconstructing a random string from $M$ traces can be used to give an algorithm that solves $B = n / 2^{\Theta(M)}$ independent copies of the distinguishing problem described in (1) with 
an expected edit distance that is lower 
than can possibly be obtained, contradicting the lower bound from item (2) above. Establishing this reduction is the most intricate part of our lower bound.
\end{enumerate}\end{flushleft}

\subsection{Organization}

In \Cref{sec:preliminaries} we set up some preliminaries.
In \Cref{sec:align} we prove \Cref{thm:align}, our main result about the \Align algorithm.
In \Cref{sec:bma} we prove \Cref{thm:bma}, our main result about \BMA.
In \Cref{sec:combine} we use \Cref{thm:align,thm:bma} to prove \Cref{thm:main}.
Finally, \Cref{sec:lowerbound} proves our lower bound, \Cref{thm:mainlower}.


\section{Preliminaries} \label{sec:preliminaries}

\noindent {\bf Notation.}
Given a positive integer $n$, we write $[n]$ to denote $\{1,\ldots,n\}$.
Given two integers $a\le b$ we write $[a:b]$ to denote $\{a,\ldots,b\}$. 
We write $\ln$ to denote natural logarithm and $\log$ to denote logarithm to the base 2.
We denote the set of non-negative integers by $\Z_{\geq 0}$.
We  write ``$a=b\pm c$'' to indicate that $b-c\le a\le b+c$.

\medskip

\noindent {\bf Subwords.} 
It will be convenient for us to index a binary string~$\ssx \in \zo^n$
  using $[1:n]$ as~$\ssx=(\ssx_1,\dots,\ssx_{n})$. 
Given such a string $\ssx\in \{0,1\}^n$ and integers $1\leq i\le  j \leq n$, we write $\ssx_{[i:j]}$ to denote the \emph{subword} $(\ssx_i, \ssx_{i+1}, \dots, \ssx_j)$ of $\ssx$.
An  \emph{$\ell$-subword}   of $\ssx$ is a 
  subword of $\ssx$ of length $\ell$, given by $(\ssx_i, \ssx_{i+1}, \dots, \ssx_{i+\ell-1})$ 
  for some $i \in [1: n-\ell+1]$. 
 
\medskip

\noindent {\bf Distributions.}
When we use bold font such as $\bD, \bssy, \bz$, etc., it indicates that the entity in question is a random variable.
We write ``$\br \sim {\cal P}$'' to indicate that random variable~$\br$~is 
  distributed according to probability distribution ${\cal P}$.  If $S$ is a finite set we write ``$\br \sim S$'' to indicate that $\br$ is distributed uniformly over $S$.
 
\medskip

\noindent {\bf Deletion channel and traces.}
Throughout this paper the parameter $0 <\delta < 1$ denotes~the \emph{deletion probability}.  Given a string $\ssx \in \zo^n$, we write $\Del_\delta(\ssx)$ to denote the distribution of the string that results from passing  $\ssx$ through the $\delta$-deletion channel (so the distribution $\Del_\delta(\ssx)$ is supported on $\zo^{\leq n}$), and we refer to a string in the support of $\Del_\delta(\ssx)$ as a \emph{trace} of $\ssx$.  Recall that a random trace $\bssy \sim \Del_\delta(\ssx)$ is obtained by independently deleting each bit of $\ssx$ with probability $\delta$ and concatenating the surviving bits.\hspace{0.05cm}\footnote{For simplicity in this work we assume that the deletion probability $\delta$ is known to the reconstruction algorithm.  We~note that it is possible to obtain a high-accuracy estimate of $\delta$ simply by measuring the average length of traces received from the deletion channel.\ignore{ \red{Rocco: Do we need this? Maybe yes to get from \Cref{thm:main} to \Cref{thm:main2}?}}}

When a trace $\bssy$ is drawn from $\Del_\delta(\ssx)$ 
  we write $\bD$ to denote the set of \emph{locations 
  deleted} when $\ssx$ goes through the deletion channel, i.e., 
  $\bD$ is obtained by including each
  element of $[n]$ independently with probability $\delta$, and
  $\bssy$ is set to be $\smash{\ssx_{[n]\setminus \bD}}$.  (When the trace is denoted $\bssy^\ast$ or $\bssy^{(m)}$ we use $\bD^\ast$ or $\bD^{(m)}$ to denote the set of locations deleted.)

As discussed earlier, our algorithm uses a special reference trace $\bssy^\ast$ and $M$ additional traces $\bssy^{(m)}$, $m \in [M]$, and maintains pointers into each of these traces. We write $\ell^\ast$ to denote the pointer into $\bssy^{\ast}$ and $\ell^{(m)}$ to denote the pointer into $\bssy^{(m)}$ for $m \in [M]$.

\medskip

\noindent {\bf Edit distance and matchings.}
{It will be convenient for us to define the edit distance between two strings $\ssx,\ssx' \in \zo^\ast$ as
\[
\dedit(\ssx,\ssx') := |\ssx| + |\ssx'| - 2 \cdot |\mathrm{LCS}(\ssx,\ssx')|,
\]
where $|\mathrm{LCS}(\ssx,\ssx')|$ is the length of the longest common subsequence of $\ssx$ and $\ssx'$.  This is equivalent to viewing insertions and deletions of characters as being the only allowable ``atomic edits'' that can be used to transform $\ssx$ to $\ssx'$, and is easily seen to be equivalent to the standard definition (in which substitutions are also allowed) up to at most a factor of 2, since a substitution can be simulated by a deletion followed by an insertion.

A \emph{matching} $\mu$ between two strings $\ssx,\ssx' \in \zo^\ast$ is a list of pairs $(i_1,j_1), (i_2,j_2),\dots$ such that $i_1 \leq i_2 \leq \cdots$, $j_1 \leq j_2 \leq \cdots$, and for every $t$ we have $\ssx_{i_t} = \ssx'_{j_t}.$ The \emph{size} of a matching is the number of pairs.  We note that the largest matching between $\ssx$ and $\ssx'$ is of length 
$|\mathrm{LCS}(\ssx,\ssx')|$.

For two intervals $A=[a_1,a_2]$ and $B=[b_1,b_2]$ of equal length, we write ``$\mu(A)=B$'' to indicate that for every element $a_1+j \in A$, the pair $(a_1+j,b_1+j)$ is in the matching (note that this implies that the subwords $\ssx_{A}$ and $\ssx'_{B}$ are identical).
}
\ignore{
}

\medskip

\noindent {\bf Some notational conventions.}  To aid the reader we adopt the following conventions:
\begin{itemize}
\item {\bf Locations in strings of different types:} The letters $i,j$ are reserved for locations in the source string $\ssx$, so these variables refer to integers in the range $[1:n].$ We use capital letters $I,J$ to denote intervals of such locations.  The letters $p,q$ are reserved for locations in traces, so if $p$ is a location in a particular trace $\ssy$ then it refers to an integer in the range $[1:|\ssy|].$ We use capital letters $P,Q$ to denote intervals of such locations.  The letters $a,b$ are reserved for locations in other incidental strings that arise in our analysis, and intervals of such locations are denoted $A,B.$ 

\item {\bf Indexing multiple strings:} On a number of occasions we deal with collections of multiple strings (such as our $M$ traces). We index such collections with parenthesized superscripts, so for example our $M$ traces are denoted $\bssy^{(1)},\bssy^{(2)},\dots,\bssy^{(M)}$.

\item {\bf Correspondence between traces and source string $\bssx$.} Given a location $q \in |\bssy^{\ast}|$ in the reference trace $\bssy^\ast$, we write $\source^\ast(q)$ to denote the location $i \in [n]$ such that bit $\bssx_i$ gave rise to $\bssy^\ast_q$. For $m \in [M]$ we similarly write $\source^{(m)}(q)$ to denote the location $i \in [n]$ such that bit $\bssx_i$ gave rise to $\bssy^{(m)}_q$ in the trace $\bssy^{(m)}$.  
For an interval $Q=[q_1:q_2]$ of locations in $\ssy^{(m)}$, we write $\source^{(m)}(Q)$ to denote the set $\{\source^{(m)}(q): q\in [q_1:q_2]\}.$
We define $\overline{\source^{(m)}(Q)}$ to be the 
  interval $[\source^{(m)}(a):\source^{(m)}(b)]\subseteq [n]$.

 \ignore{\rnote{Should we define $\source^{(m)}(Q)$ this way or as a continuous interval, i.e we could define it as follows: for an interval $Q=[q_1,q_2] \subseteq [|\bssy^{(m)}|]$, we write $\source^{(m)}(Q)$ to denote the interval $[\source^{(m)}(q_1):\source^{(m)}(q_2)]$. Either is fine with me.}}

Given a location $i \in [n]$, if $i \notin {\bD^\ast}$ then $\image^\ast(i)$ denotes the element of $[|\bssy^\ast|]$ that $\bssx_i$ lands in (and if $i \in {\bD^\ast}$ then we define $\image^\ast(i)$ to be $\bot$). The notation $\image^{(m)}(i)$ is defined similarly with respect to trace $\bssy^{(m)}$, $m \in [M].$
We observe that if $I \subseteq \overline{\source^{(m)}(Q)}$ then $\image^{(m)}(I) \subseteq Q$.

\item {\bf Notation for bitstrings.} To help the reader differentiate between bits and the locations of bits in bitstrings, we use $\mathsf{sans \ serif \ font}$ to denote ``bit-valued objects.'' Hence the uniform random source string in $\zo^n$ is $\bssx$, the traces are $\bssy^\ast, \bssy^{(1)}$, etc., a generic fixed word in $\zo^\ast$ which is not a random variable would be denoted $\ssw$, a generic word in $\zo^\ast$ which is a random variable would be denoted $\bssw$, and so on.
\end{itemize} 
Finally we introduce some useful terminology:  We refer to a tuple of pointers $(\ell^{(1)},\dots,\ell^{(M)})$ into traces $\bssy^{(1)},\dots,\bssy^{(M)}$ (so each $\ell^{(m)}$ belongs to $[|\bssy^{(m)}|]$) as a \emph{configuration}.  We say the configuration $(\ell^{(1)},\dots,\ell^{(M)})$ is \emph{in consensus} if at least $0.9M$ of the values $m \in [M]$ all have $\source^{(m)}(\ell^{(m)})$ equal to the same location $i \in [n]$.

\subsection{Useful results}
We recall McDiarmid's basic ``method of bounded differences'' inequality, which we will use repeatedly in our analysis: 

\begin{theorem} [Theorem~3.1 of \cite{McDiarmid1998}] \label{thm:Mcdiarmid}
Let $\bX = (\bX_1,\dots,\bX_n)$ be a family of independent random variables where each $\bX_k$ takes values in a set $A_k$. Suppose that $f: A_1 \times \cdots \times A_n \to \R$ satisfies
\[
|f(x)-f(x')| \leq c_k
\]
whenever the vectors $x$ and $x'$ differ only in the $k$-th coordinate.  Then for any $t \geq 0$, writing $\mu$ for $\E[f(\bX)]$, we have
\[
\Pr[f(\bX) - \mu \geq t] \leq \exp\left(-2t^2 / \sum_{k=1}^n c_k^2\right).
\]
\end{theorem}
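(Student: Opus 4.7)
The plan is to establish the inequality via the classical Doob martingale decomposition combined with a Hoeffding-type exponential moment bound. First I would construct the Doob martingale adapted to the natural filtration of $\bX = (\bX_1,\dots,\bX_n)$ by setting $Y_0 := \mu$ and $Y_k := \E[f(\bX) \mid \bX_1,\dots,\bX_k]$ for $k \in [n]$, so that $Y_n = f(\bX)$ and
\[
f(\bX) - \mu \;=\; \sum_{k=1}^n D_k, \qquad D_k \;:=\; Y_k - Y_{k-1}.
\]

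The main technical step, which I expect to be the only real obstacle, is to show that conditional on $(\bX_1,\dots,\bX_{k-1})$, each martingale difference $D_k$ lies in a (random) interval of length at most $c_k$. For this I would fix a realization $(x_1,\dots,x_{k-1})$ of the first $k-1$ coordinates and define the deterministic function
\[
g(u) \;:=\; \E\bigl[f(x_1,\dots,x_{k-1}, u, \bX_{k+1},\dots, \bX_n)\bigr] \qquad \text{for } u \in A_k.
\]
By independence of the coordinates, conditional on the past we have $Y_k = g(\bX_k)$ and $Y_{k-1} = \E[g(\bX_k)]$. Applying the bounded-differences hypothesis pointwise inside the expectation defining $g$ (the same random tail $(\bX_{k+1},\dots,\bX_n)$ appears in both evaluations, so it may be coupled) yields $|g(u) - g(u')| \leq c_k$ for every $u,u' \in A_k$. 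Hence $D_k = g(\bX_k) - \E[g(\bX_k)]$ is a centered random variable supported in an interval of length at most $c_k$, conditional on $(\bX_1,\dots,\bX_{k-1})$.

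Next I would invoke Hoeffding's lemma: a centered random variable $Z$ taking values in an interval of length $c$ satisfies $\E[e^{\lambda Z}] \leq \exp(\lambda^2 c^2/8)$ for every $\lambda \in \mathbb{R}$. Applying this conditionally to $D_k$ given $(\bX_1,\dots,\bX_{k-1})$ and peeling the conditioning off one layer at a time via the tower property gives, for every $\lambda \geq 0$,
\[
\E\!\left[e^{\lambda(f(\bX) - \mu)}\right] \;=\; \E\!\left[\prod_{k=1}^n e^{\lambda D_k}\right] \;\leq\; \exp\!\left(\frac{\lambda^2}{8} \sum_{k=1}^n c_k^2\right).
\]

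Finally I would conclude with the standard Chernoff step: Markov's inequality gives $\Pr[f(\bX) - \mu \geq t] \leq e^{-\lambda t}\,\E[e^{\lambda(f(\bX)-\mu)}]$ for every $\lambda \geq 0$, and optimizing over $\lambda$ (taking $\lambda = 4t / \sum_{k=1}^n c_k^2$) yields the desired bound $\Pr[f(\bX) - \mu \geq t] \leq \exp(-2t^2 / \sum_{k=1}^n c_k^2)$.
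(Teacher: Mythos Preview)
Your proof is correct and follows the standard route (Doob martingale decomposition, bounded martingale differences via the bounded-differences hypothesis, Hoeffding's lemma applied conditionally, then the Chernoff optimization). The paper itself does not supply a proof of this statement: it is quoted as a classical preliminary result with a citation to \cite{McDiarmid1998}, so there is no in-paper argument to compare against.
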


We use standard notation for the binary entropy function $H(p) = -p \log p -(1-p) \log(1-p)$, and we recall the standard upper bound on binomial coefficients in terms of this function, namely that  ${n \choose pn} \leq 2^{n H(p)}$ for any $0 < p < 1$.


\def\sz{\mathsf{z}}

\def\uth{{\bigskip \bigskip {\blue {\huge {\bf UP TO HERE}} }\bigskip \bigskip}}

\section{The \Align~algorithm and proof of \Cref{thm:align}} \label{sec:align}

\ignore{OLD BOUNDS:



END OLD BOUNDS}

Recall from \Cref{sec:simplifications} that the two parameters $\delta$ and $M$ satisfy
\begin{equation} 
\label{eq:assumptions}
{\frac 1 {n^2}} \leq \delta < {\frac 1 {KM}} < {\frac 1 K}
\quad \text{and} \quad
K^2 \leq M \leq {\frac 1 {K \delta}}, 
\end{equation}
where $K$ is some sufficiently large absolute constant. Let
\begin{equation} \label{eq:H}
H:=\frac{M}{K}\log \left(\frac{1}{\delta M}\right) \le 2 \log n,
\end{equation}
where the inequality is by \Cref{eq:assumptions-initial}. We observe that \Cref{eq:assumptions} also gives that $H \geq K$, and that $2^{-\Omega(H)}=(\delta M)^{\Omega(M)}$. Let
\begin{equation} \label{eq:gamma-tau}
\gamma=0.01, \quad \tau=5/\gamma = 500
\end{equation} be two constants that will be used in this section.

In this section we describe the (deterministic) \Align algorithm and prove \Cref{thm:align}
  about its performance.
Let $\ssx\in \{0,1\}^n$ be the source string and $\ssy^*,\ssy^{(1)},\ldots,\ssy^{(M)}$ be  traces 
  of $\ssx$ obtained with corresponding deletion sets $D^*,D^{(1)},\ldots,D^{(M)}\subseteq [n]$, respectively.
The algorithm $\Align$ takes $\ell^*,\ssy^*,\ssy^{(1)},\ldots,\ssy^{(M)}$ as inputs,
  where
  $\ell^*\in  [5\tau \log n:|\ssy^*|- 5\tau \log n]$, and
  returns a tuple of locations $(\ell^{(1)},\ldots,\ell^{(M)})$.


The following terminology will be useful: we say that the $\Align$ algorithm
  \emph{succeeds} on source string $\ssx$ with respect to a tuple of traces $ (\ssy^*,\ssy^{(1)},\ldots, \ssy^{(M)} )$ if 
  the following condition holds for all except at most $2^{-0.1 H} n$ many
  locations $\ell^*\in [5 \tau\log n : |\ssy^*|- 5\tau\log n ]$: The output $(\ell^{(1)},\ldots,\ell^{(M)})$
  of $\Align(\ell^*,\ssy^*,\ssy^{(1)},\ldots,\ssy^{(M)})$ satisfies
\begin{enumerate}
\item The configuration $(\ell^{(1)},\dots,\ell^{(M)})$ is in consensus, i.e.~at least 90\% of $\source^{(m)}(\ell^{(m)})$, $m \in [M]$, agree on the same location $i \in [n]$, and \vspace{-0.1 cm}
\item The consensus location $i$ satisfies 
\begin{equation}\label{consensus}
  \source^*(\ell^*) - 2H\le i \leq \source^*(\ell^*) .
\end{equation}
\end{enumerate} 

Now we can state the main result of this section which gives a performance guarantee on $\Align$:



\begin{theorem}[\Cref{thm:align}, detailed statement] \label{thm:align-restated}
Let $\bssx\sim\{0,1\}^n$ and  let
  $\bssy^*,\bssy^{(1)},\ldots,\bssy^{(M)}\sim \Del_\delta(\bssx)$ independently, where $\delta$ and $M$ satisfy \Cref{eq:assumptions,eq:H}.
$\Align$ succeeds on $\bssx$ with respect to $(\bssy^*,\bssy^{(1)},\ldots,\bssy^{(M)})$
  with probability at least $1-1/n^2$. 
\end{theorem}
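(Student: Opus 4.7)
The plan is to first bound, for each fixed $\ell^*$ in the allowed range $[5\tau\log n : |\bssy^*|-5\tau\log n]$, the probability that $\Align$ fails at $\ell^*$ by $\exp(-\Omega(H))$, and then to lift this pointwise bound to a high-probability bound on the total \emph{number} of failing $\ell^*$ using a concentration argument. Because whether $\Align$ fails at a given $\ell^*$ is determined (up to negligible boundary effects) by the bits of $\bssx$ and the deletion coins in a window of length $O(\tau \log n)$ around $\source^*(\ell^*)$, we can partition the allowed locations into $\Theta(n/\log n)$ groups of well-separated indices whose failure indicators are mutually independent. A Chernoff bound then shows that the total count of failing locations exceeds $2^{-0.1H} n$ with probability at most $\exp(-\Omega(2^{-0.1H} n))$, which is $o(1/n^2)$ for $K$ large enough (using $H \geq K$ and $n$ large).

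For the pointwise bound, fix $\ell^*$ and let $i^* = \source^*(\ell^*)$. A Chernoff bound on the binomial $|\bD^{(m)} \cap [i^*-2H:i^*]|$ shows that for each $m$, with probability at least $1 - O(H\delta)$ no position in $[i^*-2H:i^*]$ is deleted in trace $\bssy^{(m)}$, so that $\image^{(m)}(i^*)$ is well-defined and sits at roughly the ``right'' place in $\bssy^{(m)}$. Since $M \geq K^2$ and $H\delta \leq O(1/K)$ by \Cref{eq:assumptions}, a second Chernoff bound shows that simultaneously at least $(1-\gamma/2)M$ such ``clean'' traces exist with probability $1-\exp(-\Omega(M))$. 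We will argue that on every clean trace, $\Align$ returns $\ell^{(m)}$ with $\source^{(m)}(\ell^{(m)}) \in [i^*-2H:i^*]$, and that at least $0.9M$ of these agree on a common source location.

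The two stages of $\Align$ are analyzed as follows. In the first (iterative-refinement) stage, for each trace $\bssy^{(m)}$ the algorithm maintains a shrinking interval $Q^{(m)}_t$ that we wish to always contain $\image^{(m)}(i^*)$. The key probabilistic input is a \emph{subword-uniqueness lemma} for random $\bssx$: by a union bound over pairs of locations, with probability $1-n^{-\Omega(\tau)}$ no two subwords of $\bssx$ of length $a$ taken from positions more than $2H$ apart are within edit distance $O(a\delta) + O(1)$, for any $a \geq C\log n$ with $C$ a large enough constant depending on $\tau$. (This tolerance must dominate the typical edit distance between corresponding subwords of $\bssy^*$ and $\bssy^{(m)}$ on windows of length $a$, which is $O(a\delta)$ with high probability by another Chernoff bound.) Granted uniqueness, at each refinement level the unique good approximate match to the current reference subword in $\bssy^{(m)}$ is the correct one, so $Q^{(m)}_t$ continues to contain $\image^{(m)}(i^*)$; a union bound over the $O(\log\log n)$ refinement levels and the $M$ traces controls the compounded failure probability at $\exp(-\Omega(H))$.

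In the second stage, $\Align$ locates a short subword that occurs at consistent offsets in at least $95\%$ of the narrow windows $Q^{(m)}_1$, each of size $O(H)$. Reapplying subword uniqueness in the much smaller search region of length $O(H)$ (with short subwords of length $\Theta(\log M)$), the recovered subword appears, for nearly all clean traces, at an offset consistent with alignment to a common source location $i \in [i^*-2H:i^*]$, establishing both conditions (1) and consensus bound (2) in \Cref{consensus}. The main obstacle will be calibrating the geometric sequence of window sizes $a_1 > a_2 > \cdots$ and associated edit-distance tolerances so that the uniqueness lemma can be invoked at every refinement level with per-level failure $\exp(-\Omega(H))$, while the final window size can be driven down to $O(H)$ before Stage 2 begins; this is where the specific constants $\gamma$, $\tau$ and the relations in \Cref{eq:assumptions} enter essentially.
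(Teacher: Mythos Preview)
Your high-level plan---a pointwise $\exp(-\Omega(H))$ failure bound at each $\ell^*$ followed by concentration on the count of bad locations---matches the paper's. But two steps do not go through as written.

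\textbf{The concentration step.} You assert that failure at $\ell^*$ is determined by randomness in a window of size $O(\tau\log n)$, and then partition the $\ell^*$'s into well-separated groups with independent failure indicators. This locality claim is false for the first refinement level $s=S$: the algorithm searches \emph{all} of $\bssy^{(m)}$ for an approximate match to $\ssw^*_S$, so whether a spurious match exists somewhere far from $\source^*(\ell^*)$ depends on the entire string $\bssx$ and on all deletion coins of $\bssy^{(m)}$. Conditioning on a global uniqueness event for $\bssx$ does not restore independence of the failure indicators, because that conditioning destroys the product structure of $\bssx$. The paper sidesteps this entirely: it never claims independence of bad-$\ell^*$ indicators. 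Instead it observes that the \emph{count} of bad $\ell^*$ is a function of the $n(M{+}1)$ independent underlying coins (bits of $\bssx$ and deletion indicators), and that flipping any single coin changes this count by at most $O(\log n)$ (since each coin touches only $O(\log n)$ windows). McDiarmid's bounded-differences inequality then gives the required concentration directly. Your partitioning-plus-Chernoff route would need a nontrivial decoupling argument you have not supplied.

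\textbf{Parameter calibration.} Two quantitative choices are off. First, your subword-uniqueness lemma asks only that far-apart length-$a$ subwords of $\bssx$ differ in edit distance by more than $O(a\delta)$; but the algorithm's tolerance at level $s$ is $2\gamma t_s = \Theta(t_s)$, not $O(t_s\delta)$, so you need the much stronger (and true) statement that such subwords are at edit distance $\Omega(\gamma t_s)$---this is the paper's Claim~\ref{claimhehe}. Second, in Stage~2 you propose searching for a common subword of length $\Theta(\log M)$ inside windows of size $O(H)$. A random string of length $\Theta(H)$ typically contains every string of length $\Theta(\log M) \ll H$ many times, so uniqueness fails and consensus cannot be inferred. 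The paper takes the common subword of length $\geq 0.9\,t_1 = \Theta(H)$, for which uniqueness inside the $O(H)$-size window $N(I^*_1)$ holds with probability $1 - 2^{-\Omega(H)}$ (Claim~\ref{claim:unique-lcs}); this is what forces the $90\%$ of traces in $G$ to agree on a single source location.
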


\ignore{START IGNORE
\subsection{Put in Preliminaries}

For an interval $Q = [a:b]\subseteq [|\ssy|]$ of positions in a trace $\ssy$
  of a source string $\ssx$, recall $$\source(Q) = \big\{\source(q): q\in [a:b]\big\}.$$ \red{We define $\overline{\source(Q)}$ to be the 
  interval $[\source(a):\source(b)]\subseteq [n]$.}
  
\begin{claim} \label{claim:set}
   If $I \subseteq \overline{\source(Q)}$ then $\image(I) \subseteq Q$.
\end{claim}
\begin{proof}
The definition of $\overline{\source(Q)}$ was changed. I think we don't need a proof now.
\end{proof}

END IGNORE
}

\subsection{Overview} \label{sec:align-overview}

  

We first give a high-level overview of \Align.
Let $\ell^\ast$ be a location in the special reference trace $\ssy^\ast$ that 
  is not too close to the left and right ends of $\ssy^*$.
Let $Q_S^*\supset \cdots \supset Q_1^*$ be a sequence of nested
  intervals (of locations of $\ssy^*$) centered at $\ell^*$,
  with $|Q^\ast_s| =t_s$ for each $s\in [S]$, 
  \[
  t_1=2H+1, 
  \quad
  t_{s+1}=3t_s,
  \quad 
  \text{and~}t_S= \Theta(\log n)
  \]
   (note that hence $S = O(\log \log n)$).
Let $\smash{\ssw_s^*=\ssy^*_{Q^*_s}}$ for each $s\in [S]$.

The \Align algorithm consists of two stages. 
In this subsection we give some intuition behind each stage~and its analysis. 
In the intuitive  discussion below, we focus chiefly on understanding the probability that $\Align$ succeeds at 
  a particular location $\ell^*$;
  in the formal proof we need to apply the bounded difference inequality of McDiarmid 
  to argue that $\Align$ succeeds on all but except $2^{-\Omega(H)}n$ many locations with high probability.

For the rest of this section, we say that an event is
   \emph{most likely}
  to happen if it happens 
  with probability
   $1-2^{-\Omega(H)}$.

\paragraph{First stage --- locating a small neighborhood of $\source^\ast(\ell^\ast)$ in each trace
  $\ssy^{(m)}$.} \  

In the first stage, \Align works separately on each $\ssy^{(m)}$, $m\in [M]$.
It iteratively uses $\ssw^*_S,\ldots,\ssw^\ast_1$ (as templates) 
  to find a sequence of nested intervals $Q^m_S\supset \cdots \supset Q^m_1$ of locations of $\ssy^{(m)}$ such that
\begin{equation}\label{condition1}\dedit\left(\ssw^*_s,\ssy^{(m)}_{Q^m_s}\right)\le 2\gamma t_s,\quad\text{for 
  each $s=S,\ldots,1$.}\end{equation}
This is done by first finding  
$Q^m_S\subset [|\ssy^{(m)}|]$ that satisfies (\ref{condition1}) and
  then repeatedly finding 
  $Q^m_s\subset Q^m_{s+1}$ that satisfies (\ref{condition1}), for each $s=S-1,\ldots,1$.
When multiple $Q^m_s$ satisfy (\ref{condition1}), we pick
  one arbitrarily; when no interval $Q^m_s$ exists for some $s$ and some $m$, 
  $\Align$ fails and returns $\smash{\ell^{(1)}=\cdots=\ell^{(M)}=1}$.

In the analysis we show that when $\bssx\sim\{0,1\}^n$ and $\bssy^*,\bssy^{(1)},\ldots,
  \bssy^{(M)}\sim \Del_\delta(\bssx)$, 
  it is most likely that every $m\in [M]$ satisfies
\begin{equation}\label{condition2}
\left|\hspace{0.05cm}{\overline{\source^*(Q^*_s)}}\bigtriangleup {\overline{\source^{(m)}
  (Q^m_s)}}\hspace{0.05cm}\right|\le 4\gamma t_s,\quad\text{for
  each $s = S,\ldots,1$;}
\end{equation}
in words, this means that the interval $Q^m_s$ of $\bssy^{(m)}$
  (almost) comes
  from the subword of $\bssx$ whose image is $Q_s^*$ in $\bssy^*$. 
The proof proceeds by induction on $s=S,\ldots,1$ (see \Cref{lemma:step-1}).
Assume that (\ref{condition2}) holds for $s+1$:
\begin{equation}\label{condition3}
\left|\hspace{0.05cm}{\overline{\source^*(Q^*_{s+1})}}\bigtriangleup {\overline{\source^{(m)}
  (Q^m_{s+1})}}\hspace{0.05cm}\right|\le 4\gamma t_{s+1}.\end{equation}
Then most likely $I^*_s:=\overline{\source^*(Q^*_s)}$ is contained in 
  $\overline{\source^{(m)}
  (Q^m_{s+1})}$ given (\ref{condition3}) and that 
  $I^*_s$ is roughly the middle one-third of $\smash{\overline{\source^*(Q^*_{s+1})}}$
  (also recall that $\gamma =0.01$ is a small constant).
As a result, $\image^{(m)}(I^*_s)$ would most likely satisfy (\ref{condition1}) 
  as $Q^m_s$ 
  (using that $\delta$ is sufficiently smaller than $\gamma$ and thus,
  the number of bits deleted from $I^*_s$ in getting both $\bssy^*$ and $\bssy^{(m)}$ is 
  smaller than $\gamma t_s$).
On the other hand, let $Q^m_s$ be the interval actually picked by $\Align$.
To finish the proof of (\ref{condition2}), we show that
  when (\ref{condition3}) is violated, 
  since $\bssx \sim \zo^n$, the two subwords
$x_{\overline{\source^\ast(Q^\ast_s)}}$ and
$x_{\overline{\source^{(m)}(Q^m_s)}}$ most likely have large edit distance 
   (see  \Cref{claimhehe}),
  which in turn implies that $\bssy^\ast_{Q^\ast_s}$
  (i.e., the string $\ssw^*_s$) has large edit distance
  from $\bssy^{(m)}_{Q^m_s}$, which contradicts (\ref{condition1}).

\paragraph{Second stage --- determining a consensus location close to $\source^\ast(\ell^\ast)$.}\ 

In the second stage, \Align uses subwords $\smash{ \ssy^{(m)}_{Q^m_1}}$,
  $m\in [M]$, to determine the final locations $\ell^{(1)},\dots,\ell^{(M)}$.
This is done by first identifying a string $\ssw$ that (a) has length at least $0.9t_1$,
  and (b) appears as a subword in at least $95\%$ of $\smash{ \ssy^{(m)}_{Q^m_1}}$, $m\in [M]$.
(When multiple strings $\ssw$ satisfy the two conditions, $\Align$ picks one 
  arbitrarily; when no such $\ssw$ exists, $\Align$ fails and sets $\ell^{(m)}=1$ for all $m\in [M]$.)
Finally $\Align$ finds $\ssw$ in $\smash{\ssy^{(m)}_{Q^m_1}}$ and sets
  $\ell^{(m)}$ to be the location of the first symbol of $\ssw$ in $\smash{\ssy^{(m)}_{Q^m_1}}$, for each $m\in [M]$.
(When $\ssw$ appears in $\smash{\ssy^{(m)}_{Q^m_1}}$ at multiple locations, 
  $\Align$ picks one of them arbitrarily as $\ell^{(m)}$; when $\ssw$
  does not appear, $\Align$ sets $\ell^{(m)}=1$ by default.)


Using $\bssx\sim\{0,1\}^n$ and $\bssy^*,\bssy^{(1)},\ldots,
  \bssy^{(M)}\sim \Del_\delta(\bssx)$, we show that
  most likely $\ell^{(1)},\ldots,\ell^{(M)}$ satisfy the two desired conditions in the definition of ``success''
  (i.e. at least 95\% of $\source^{(m)}(\ell^{(m)})$, $m \in [M]$, agree on the 
  same location $i \in [n]$, and  this consensus location $i$ satisfies (\ref{consensus})).
To give some intuition behind the analysis, we first assume that every $m\in [M]$ satisfies (\ref{condition2}) 
  and in particular,
  \begin{equation}\label{condition4}
\left|\hspace{0.05cm}{\overline{\source^*(Q^*_1)}}\bigtriangleup {\overline{\source^{(m)}
  (Q^m_1)}}\hspace{0.05cm}\right|\le 4\gamma t_1.
\end{equation}
Since $|Q^*_1|=t_1$, most likely $I_1^*:=\overline{\source^*(Q^*_1)}$ has length close to $t_1$. 
Let $I_1^{**}$ denote the~interval obtained from $I_1^*$ by extending it in both
  directions by $4\gamma t_1$ (so $I_1^{**}$ also has length close to $t_1$ since
  $\gamma=0.01$).
Using our choice of $t_1=2H+1$, it follows from simple calculations that 
  most likely at least $95\%$ of $\bssy^{(m)}$, $m\in [M]$, are obtained from $\bssx$ with no deletions in 
  $I_1^{**}$.
Let $G\subseteq [M]$ be the set of such $m\in [M]$. 
It follows from (\ref{condition4}) that $\smash{\bssy^{(m)}_{Q^m_1}}$ comes from $\bssx_{I^m_1}$
  with no deletions for some interval $I^m_1$ such that 
  $|I^m_1\bigtriangleup I^*_1|\le 4\gamma t_1$, for each $m\in G$.

At this point it is clear that $\ssw=\bssx_{\cap_{m\in G}I^m_1}$ 
  would satisfy both conditions (a) and (b) (using $\gamma=0.01$).
On the other hand, if there is a string $\ssw$ that appears in at least $95\%$ of 
  all $x_{I^m_1}$, $m\in [M]$, then for at least $90\%$ of $m\in [M]$, $\ssw$ appears in $\smash{\bssy^{(m)}_{Q^m_1}}$ and 
  $m\in G$. 
Using the randomness of $\bssx \sim \{0,1\}^n$, one can argue that  most likely
  no string of length at least $0.9t_1$ can appear as a subword more than once
  in $\bssx_{I_1^{**}}$.
This implies that at least $90\%$ of $\ell^{(m)}$ returned are in consensus.
To see that the consensus location $i\in [n]$ satisfies (\ref{consensus}), we recall $t_1=2H+1$ and observe that 
  $\source^*(\ell^*)$ appears around the middle of $I_1^{**}$ but 
  $i$~(as the unique location where $\ssw$ appears as a subword in $\ssx_{I_1^{**}}$)
  lies close to its left end.

\subsection{Algorithm \Align} \label{sec:algorithm-align}

We now describe the algorithm \Align which takes as input $\ell^*,\ssy^*,\ssy^{(1)},\ldots,\ssy^{(M)}$ with 
  $\ell^*\in [5\tau\log n: |\ssy^*|- 5\tau\log n]$.  (See \Cref{alg:align} for a formal presentation of the algorithm.)
  
$\Align$ starts by computing a sequence of nested
  subwords of $\ssy^*$ centered at $\ell^*$ as follows.
Let~$t_1=$ $2H+1$ (recall that $H\le 2\log n$) and $t_s=3t_{s-1}$ for each $s\ge 1$, and let 
  $S$ be the smallest integer such that \[
  t_S\ge \tau\log n \quad \text{(and hence $t_S\le3\tau \log n$)}
  \]
  (recall that $\tau = 500$).
Given $\ssy^*$ and $\ell^*$,
we define the sequence of subwords $\ssw^\ast_1, \ldots, \ssw^\ast_S $, where
$\ssw^\ast_s$ is the $t_s$-bit subword $\ssy^\ast_{Q^\ast_s}$
  with $Q^\ast_s=[\ell^*-(t_s-1)/2: \ell^*+(t_s-1)/2]$ 
    centered at $\ell^\ast$ in $\ssy^\ast$.
(Given that $t_S\le 3\tau \log n$, we always have $Q_S^*\subset [|\ssy^*|]$; indeed 
  we have that there are more than $t_S$ elements to the left and to the
  right of $Q_S^*$ in $[|\ssy^*|]$, which  
  is the reason why we only consider $\ell^*$'s that are at least $5\tau \log n$ away
  from both ends of $\ssy^*$.)
  
  

\begin{figure}[!t] 
\newcommand\mycommfont[1]{\small\ttfamily\textcolor{blue}{#1}}
\SetCommentSty{mycommfont}
\setstretch{1.2}
\begin{algorithm}[H]
  \caption{{\Align}}\label{alg:align}
	\DontPrintSemicolon
	\SetNoFillComment
  \KwIn{A location $\ell^\ast$ and a tuple of $M+1$ strings $ \ssy^\ast, \ssy^{(1)}, \ldots, \ssy^{(M)} $}
  \KwOut{$M$ locations $\ell^{(1)}, \ldots, \ell^{(M)}$, where $\ell^{(i)} \in [|\ssy^{(i)}|]$ is a location in $\ssy^{(i)}$}

\smallskip
Compute $ \ssw^\ast_1, \ldots, \ssw^\ast_S $ from $\ssy^\ast$ as defined in 
  \Cref{sec:algorithm-align}. 
  Let $Q_{S+1}^m=[|\ssy^{(m)}|]$ for each $m\in [M]$.\\  

  \For(\ \ \ \ \tcp*[h]{First stage}){each $m \in [M]$}{ \label{alg:line2}
    \For{$s = S , \ldots, 1$}{ \label{alg:line3}
      Find any subword $\ssy^{(m)}_{Q^m_s}$  \label{alg:line4}
          in $\smash{\ssy^{(m)}_{Q^m_{s+1} }}$ (breaking ties arbitrarily)
        that has edit distance at most $2\gamma t_s$ from $\ssw_s^*$; 
     if such a subword does not exist \Return $\ell^{(1)}=\cdots =\ell^{(M)}=1$.}
    }
   
  {Find any string $\ssw$ of length at least $0.9t_1$ that appears 
    as a subword in at least $95\%$ of $\ssy^{(m)}_{{ Q^m_1 }}$, $m\in [M]$.
    If no such $\ssw$ exists, \Return $\ell^{(1)}=\cdots =\ell^{(M)}=1$.
  \ \ \ \ \tcp*[h]{Second stage} 
   \label{line:longest-common-subword}}\\
   \For{each $m \in [M]$}{
    If $\ssy^{(m)}_{Q_1^m}$ has $\ssw$ as a subword, 
      set $\ell^{(m)}$ to be any location such that $\ssy^{(m)}_{Q_1^m}$
      has $\ssw$ as a subword starting at $\ell^{(m)}$; 
    otherwise ($\smash{\ssy^{(m)}_{Q_1^m}}$ does not contain $\ssw$ as a subword\ignore{substring}), set $\ell^{(m)} = 1.$
  }
  \Return $\ell^{(1)}, \ldots, \ell^{(M)}$ 
\end{algorithm}
\caption{The $\Align$ algorithm.}
\label{fig:align}
\end{figure}


We divide the analysis of $\Align$ into two parts.
In the first part (\Cref{sec:prob1}) we begin by~describing some good events over the randomness of $\bD^\ast$, $\bssx$, and $\bD^{(m)}$, $m\in [M]$, where $\bD^\ast$ and $\bD^{(m)}$ are the sets of deleted locations that gave rise to  traces $\bssy^{\ast}$ and $\bssy^{(m)}$ of $\bssx$, respectively.
We then show that these events happen with probability at least $1 - 1/n^2$.
The second part of our analysis (\Cref{sec:det1}) will be entirely deterministic.
We show that $\Align$ succeeds on $\ssx$ with respect to 
  $(\ssy^*,\ssy^{(1)},\ldots,\ssy^{(M)})$ whenever
  $D^*,$ $\ssx$ and $D^{(m)}:m\in [M]$ satisfy all conditions described in the first part  (\Cref{sec:prob1}).

\subsection{Probabilistic Analysis}\label{sec:prob1}

Let $\calD$ denote the distribution over subsets of $[n]$ where $\bD\sim \calD$ is drawn by including
  each integer of $[n]$ independently with probability $\delta$.
We prove \Cref{thm:align-restated} in two steps. 
In this subsection we describe an event over $\bssx\sim \{0,1\}^n$ and $\bD^*,\bD^{(1)},
  \ldots,\bD^{(M)}\sim \calD$ (as deletions used to obtain $\bssy^*,\bssy^{(1)},\ldots,\bssy^{(M)}$~from $\bssx$)
  and show that it happens with probability at least $1-1/n^2$ (see \Cref{lemma:good-rvs}).
In \Cref{sec:det1}, we show that whenever the event occurs, 
  $\Align$ succeeds on $\bssx$ with respect to $(\bssy^*,\bssy^{(1)},\ldots,\bssy^{(M)})$.

We describe the event by imposing conditions on random variables in the 
  following order: first $\bD^\ast\sim \calD$, then $\bssx\sim \{0,1\}^n$ and finally $\bD^{(m)}\sim \calD$, $m\in [M]$.
We describe conditions on each random variable conditioning on the event that
  previous ones have already met conditions imposed on them.

We start with some preliminary claims.




\begin{claim}\label{simpleclaim}
Let $t$ be a positive integer and let
  $1\le i_1<\cdots<i_t\le n$ and $1\le j_1<\cdots<j_t\le n$ with $i_k\ne j_k$ for all $k\in [t]$. 
For $\bssx\sim \{0,1\}^n$, we have $\Pr[\bssx_{i_k} =\bssx_{j_k}$ for all $k\in [t]]=2^{-t}.$
\end{claim}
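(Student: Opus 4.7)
The plan is to prove this by induction on $t$, with the key observation being that the strict monotonicity of both the $i_k$'s and the $j_k$'s forces the ``last'' edge $(i_t, j_t)$ to contain a vertex that does not appear in any earlier pair. This reduces the claim for $t$ pairs to the claim for $t-1$ pairs, at the cost of a single factor of $1/2$.

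\medskip

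\noindent\textbf{Base case.} For $t = 1$ we have $i_1 \neq j_1$, so $\bssx_{i_1}$ and $\bssx_{j_1}$ are independent uniform bits and $\Pr[\bssx_{i_1} = \bssx_{j_1}] = 1/2 = 2^{-1}$.

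\medskip

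\noindent\textbf{Inductive step.} Assume the claim for $t-1$, and set $m := \max(i_t, j_t)$. I will verify that $m \notin \{i_1,\ldots,i_{t-1}\} \cup \{j_1,\ldots,j_{t-1}\}$. By symmetry assume $m = i_t$; the case $m = j_t$ is identical. Then for every $k < t$ the strict inequalities $i_k < i_t = m$ and $j_k < j_t \le i_t = m$ rule out $i_k = m$ and $j_k = m$. Moreover, $j_t \le i_t$ combined with $j_t \neq i_t$ gives $j_t < m$, so the ``other endpoint'' of the $t$-th pair is some index different from $m$.

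\medskip

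\noindent\textbf{Combining.} Because $m$ does not occur in any of the first $t-1$ pairs, the event $E_{t-1} := \{\bssx_{i_k} = \bssx_{j_k} \text{ for all } k < t\}$ depends only on the bits $\{\bssx_i : i \neq m\}$, and is therefore independent of $\bssx_m$. Conditioned on $E_{t-1}$, the value $\bssx_{j_t}$ is fixed while $\bssx_m$ is still uniform, so
\[
\Pr\!\left[\bssx_{i_t} = \bssx_{j_t} \,\middle|\, E_{t-1}\right] = \Pr\!\left[\bssx_m = \bssx_{j_t} \,\middle|\, E_{t-1}\right] = \frac{1}{2}.
\]
Applying the inductive hypothesis to the first $t-1$ pairs (whose indices still satisfy all the hypotheses of the claim) gives $\Pr[E_{t-1}] = 2^{-(t-1)}$, so
\[
\Pr\!\left[\bssx_{i_k} = \bssx_{j_k} \text{ for all } k \in [t]\right] = \Pr[E_{t-1}] \cdot \frac{1}{2} = 2^{-t},
\]
completing the induction. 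The only ``obstacle'' is the identification of a vertex of degree one in the associated graph on $[n]$, and this is handed to us essentially for free by the joint monotonicity of the two index sequences; beyond that the argument is a routine independence calculation.
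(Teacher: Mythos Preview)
Your proof is correct and follows essentially the same approach as the paper's: induction on $t$, peeling off the last pair by observing that $\max(i_t,j_t)$ (the paper takes WLOG $i_t<j_t$ and uses $j_t$) does not appear among the earlier indices, so the corresponding bit remains uniform conditioned on the event for the first $t-1$ pairs. The only quibble is the phrase ``the value $\bssx_{j_t}$ is fixed'' --- it is not fixed by conditioning on $E_{t-1}$ alone, but since $\bssx_m$ is independent of the pair $(E_{t-1},\bssx_{j_t})$ your conclusion $\Pr[\bssx_m=\bssx_{j_t}\mid E_{t-1}]=1/2$ is correct.
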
 
\begin{proof}
The proof is by induction on $t$.
  The base case when $t = 1$ is trivial.
  For the inductive step, we assume that the statement holds for $t-1$. Using the induction hypothesis we have
  \begin{align*}
    \Pr \bigl[ \bssx_{i_k} = \bssx_{j_k} \text{ for all $k \in [t]$} \bigr]
    &= \Pr \bigl[ \bssx_{i_t} = \bssx_{j_t} \bigm\mid \bssx_{i_k} = \bssx_{j_k} \text{ for all $k \in [t-1]$} \bigr] \cdot 2^{-(t-1)} .
  \end{align*}
  Without loss of generality we assume that $i_t < j_t$.
  Note that $\bssx_{j_t}$ is still uniform when conditioned on values of $\bssx_{i_k} : k \le t$ and $\bssx_{j_k} : k < t$.
  Therefore the conditional probability on the right hand~side is $1/2$.
This finishes the induction step and the proof of the claim.
\end{proof}

\Cref{simpleclaim} has the following corollary which we will use later:

\begin{corollary}\label{corohehe}
Let $t$ be a positive integer, and let
  $I\ne I'\subseteq [n]$ be two distinct (but not necessarily disjoint) intervals of length $t$. 
For $\bssx\sim \{0,1\}^n$, we have $\Pr[\bssx_I =\bssx_{I'}]=2^{-t}.$
\end{corollary}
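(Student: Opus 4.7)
The plan is to reduce this statement directly to \Cref{simpleclaim} by parameterizing the two intervals coordinate by coordinate. Without loss of generality, write $I = [a : a+t-1]$ and $I' = [b : b+t-1]$ with $a < b$; we may swap $I$ and $I'$ since the event $\bssx_I = \bssx_{I'}$ is symmetric, and the case $a = b$ is excluded by the hypothesis $I \ne I'$.

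Next, I would set $i_k := a + k - 1$ and $j_k := b + k - 1$ for each $k \in [t]$. Both sequences are strictly increasing and lie in $[n]$ since $I, I' \subseteq [n]$. Moreover, because $b - a \ge 1$, we have $j_k - i_k = b - a > 0$, so $i_k \ne j_k$ for every $k \in [t]$. By definition of the subword notation, the event $\bssx_I = \bssx_{I'}$ is equivalent to the conjunction ``$\bssx_{i_k} = \bssx_{j_k}$ for all $k \in [t]$.''

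With these hypotheses verified, \Cref{simpleclaim} applies directly and yields the desired probability $2^{-t}$. The one point deserving a sanity check — and it is not really an obstacle — is that the two intervals are allowed to \emph{overlap} (e.g., $I = [1:5]$, $I' = [3:7]$), in which case $\{i_k\} \cap \{j_k\} \ne \emptyset$. This does not affect the applicability of \Cref{simpleclaim}, however: that claim only requires the coordinate-wise condition $i_k \ne j_k$, which we have already verified, and places no restriction on whether the two index sets are disjoint as sets. Thus the argument is essentially an immediate specialization of the preceding claim.
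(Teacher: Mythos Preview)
Your proposal is correct and is exactly the intended derivation: the paper states this result as an immediate corollary of \Cref{simpleclaim} without writing out a separate proof, and your reduction (indexing the two intervals coordinatewise and verifying $i_k\ne j_k$) is precisely the specialization the paper has in mind.
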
 


\Cref{simpleclaim} lets us bound the edit distance between subwords of a random string as follows:

\begin{claim}\label{claimhehe}
Let $t$ be a positive integer. Let
$I,I'\subseteq [n]$ be two intervals that satisfy (1) $|I|\ge 25t$ and (2)
   $|I\bigtriangleup I'|\ge t$.
Then we have $\dedit(\bssx_I,\bssx_{I'})< t$ with 
  probability at most $2^{-5t }$ when $\bssx\sim \{0,1\}^n$.
\end{claim}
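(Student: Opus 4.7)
The plan is to prove the claim by union-bounding over the size-$s$ matchings that certify $\dedit(\bssx_I, \bssx_{I'}) < t$ and applying Claim~\ref{simpleclaim} to each summand. First, since $\dedit \geq \bigl||I|-|I'|\bigr|$, I may assume $\bigl||I|-|I'|\bigr| < t$; writing $I=[a:b]$ and $I'=[a':b']$ and taking $a \leq a'$ without loss of generality, the hypothesis $|I \triangle I'| \geq t$ combined with the length bound rules out the cases $I' \subseteq I$ and $I = I'$, forcing $a < a'$ and $b < b'$ with $\alpha := a'-a \geq 1$ and $\beta := b'-b \geq 1$. Any matching of size $s$ between $\bssx_I$ and $\bssx_{I'}$ is described by deletion sets $D \subseteq [1:|I|]$ and $E \subseteq [1:|I'|]$ with $|D|=|I|-s$ and $|E|=|I'|-s$; the paired positions are $(P_j, Q_j)_{j=1}^{s}$ where $P_j$ is the $j$-th smallest element of $[1:|I|]\setminus D$ and similarly for $Q_j$. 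The condition $\dedit(\bssx_I, \bssx_{I'}) < t$ is equivalent to the existence of such a matching with $|D|+|E|<t$.

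The key structural observation is that every such matching is \emph{entirely non-trivial}: no pair $(P_j, Q_j)$ satisfies the condition $P_j - Q_j = \alpha$ under which both ends would correspond to the same $\bssx$-index inside $I \cap I'$. Indeed, since $P_j = j + |D \cap [1:P_j]|$ (and similarly for $Q_j$), a trivial pair would force $|D \cap [1:P_j]| - |E \cap [1:Q_j]| = \alpha$, hence $|D| \geq \alpha$. A mirror-image computation using the tails $|I| - P_j = (s-j) + |D \cap [P_j+1:|I|]|$ and $|I'| - Q_j = (s-j) + |E \cap [Q_j+1:|I'|]|$ yields
\[
  |E \cap [Q_j+1:|I'|]| - |D \cap [P_j+1:|I|]| = (|I'|-|I|) + (P_j - Q_j) = (\beta-\alpha) + \alpha = \beta,
\]
forcing $|E| \geq \beta$. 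So any trivial pair would force $|D|+|E|\geq \alpha+\beta = |I \triangle I'| \geq t$, contradicting $|D|+|E|<t$. Consequently every matching under consideration has $N = s$ non-trivial pairs, and Claim~\ref{simpleclaim} gives that it is valid with probability exactly $2^{-s}$.

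A union bound therefore yields
\[
  \Pr \bigl[ \dedit(\bssx_I, \bssx_{I'}) < t \bigr] \leq \sum_{k_d, k_i \geq 0,\, k_d + k_i < t} \binom{|I|}{k_d} \binom{|I'|}{k_i} \cdot 2^{-(|I|-k_d)},
\]
and using $|I'| \leq |I|+t \leq (26/25)|I|$ together with the estimate $\binom{m}{k} \leq (em/k)^k$, a short calculation reduces the right-hand side to $2^{-|I|} \cdot (C|I|/t)^{2t}$ times a polynomial factor in $t$, for an absolute constant $C$; the slack $|I|\geq 25t$ then yields the desired bound $2^{-5t}$. The only genuinely subtle step is the tail-side computation giving $|E| \geq \beta$: the forward bound $|D|\geq\alpha$ is essentially immediate, but the symmetric ``from-the-end'' analysis requires carefully setting up the tail counts and comparing them, and it is this forward/backward symmetry that ultimately rules out trivial pairs for every admissible $(D,E)$ and so furnishes the crucial identity $N=s$ that makes the final union bound go through.
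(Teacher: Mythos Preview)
Your proof is correct and takes essentially the same approach as the paper: both establish that every pair in a matching with fewer than $t$ total deletions is non-trivial (via the same left/right deletion-counting argument you spell out, which the paper phrases as ``matching the lengths of $I$ to the left and right of $k$''), then apply Claim~\ref{simpleclaim} and finish with a union bound over deletion sets. The only differences are cosmetic---the paper factors out $2^{-(|I|+|I'|-t)/2}$ uniformly and uses the entropy bound $\sum_{k\le t}\binom{|I|}{k}\le 2^{H(0.04)|I|}$ rather than your $(em/k)^k$, and your identity $\alpha+\beta=|I\triangle I'|$ should be $\geq$ when the intervals are disjoint, though this only strengthens your conclusion.
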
 
\begin{proof}
Having $\dedit(\bssx_I,\bssx_{I'})<  t$ implies that there exist $J\subseteq I $ and
  $J'\subseteq I' $ such that $|J|+|J'|< t$, $|I\setminus J|=|I'\setminus J'|$
  and $\bssx_{I \setminus J}=\bssx_{I' \setminus J'}$.
Fixing such a pair $(J,J')$ and writing 
  $I\setminus J$ as $i_1<i_2<\cdots$ and $I'\setminus J'$ as $j_1<j_2<\cdots$,
  we claim that $i_k\ne j_k$ for all $k$.
To see this we note that having $i_k=j_k$ for some $k$ implies that we need to delete
  at least $|I\bigtriangleup I'|\ge t$ bits from $I$ and $I'$ even just to match the lengths of $I$ to the
  left and to the right of $k$ with those of $I'$, a contradiction with $|J|+|J'|< t$.

Therefore, it follows from \Cref{simpleclaim} that  
  $\bssx_{I\setminus J}=\bssx_{I'\setminus J'}$ with probability at most
$
2^{-(|I|+|I'|- t)/2}.
$
It follows by a union bound on all pairs $(J,J')$ that 
$\dedit(\bssx_I,\bssx_{I'})<  t$ with probability at most
$$
2^{-(|I|+|I'|-  t)/2}\cdot \sum_{k\le t}{|I| \choose k}
\cdot  \sum_{k\le  t}{|I'| \choose k}
\le 2^{-(|I|+|I'|- t)/2} \cdot 2^{H(0.04) (|I|+|I'|)} < 2^{-5t}.
$$
This finishes the proof of the claim.
\end{proof}

\subsubsection{Conditions on $\bD^\ast\sim \calD$} 

We start with conditions on $\bD^\ast\sim \calD$, i.e., the locations
  of bits deleted in $\bssy^*$.
Given an outcome $D^\ast\subseteq [n]$ of $\bD^\ast$, we write $ L^*$ to denote the interval
\[
L^* := [5\tau\log n: n-| D^*|-5\tau \log n].
\]
For each $\ell^*\in L^*$ and $s\in [S]$ we write $Q^*_{s,\ell^*}$ to denote~the interval
  of length $t_s$ that is centered at $\ell^*$.
Let $L_1^*$ denote the set of $\ell^*\in L^*$ such that $$\left| {\overline{\source^\ast(Q^\ast_{s,\ell^*})}} \right| \leq (1+\gamma)t_s,\quad\text{for all $s\in [S]$.}$$

\begin{claim} \label{claim:D-star}
  With probability at least $1-\exp(-n^{0.1})$ over $\bD^\ast$, 
   we have $|\bL^*\setminus \bL^*_1|\le 2^{-0.2H}n$.
\end{claim}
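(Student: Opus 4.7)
The plan is to reduce the condition $\ell^* \in \bL^*_1$ to a statement about the density of deletions near the corresponding source position $i = \source^*(\ell^*)$, bound the resulting per-position failure probability using the multiplicative Chernoff bound, and then control the tail of the resulting sum via McDiarmid's inequality. The main technical subtlety is quantitative: a naive additive Chernoff estimate would give an exponent of order $\gamma^2 t_s$, which is too weak to beat the target $2^{-0.2H}$ because $\gamma = 0.01$ is small. The multiplicative Chernoff bound instead exploits the large multiplicative gap $\gamma/\delta$ between the threshold and the mean, yielding an exponent of order $\gamma\log(1/\delta)\cdot t_s$, which comfortably beats $0.2H$ once $K$ is a sufficiently large absolute constant.

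First I will translate the event. Fix $\ell^* \in \bL^*$ and let $i = \source^*(\ell^*)$, which is well-defined since $i \notin \bD^*$. Because $Q^*_{s,\ell^*}$ has exactly $(t_s-1)/2$ trace positions on each side of $\ell^*$, the interval $\overline{\source^*(Q^*_{s,\ell^*})}$ is determined by the $((t_s-1)/2)$-th non-deleted source positions to the left and to the right of $i$. Hence a sufficient condition for $|\overline{\source^*(Q^*_{s,\ell^*})}| \le (1+\gamma)t_s$ is that each of the source intervals $R_s(i) := [i+1 : i + \lceil(1+\gamma)(t_s-1)/2\rceil]$ and $L_s(i) := [i - \lceil(1+\gamma)(t_s-1)/2\rceil : i-1]$ contains at most $\lfloor\gamma(t_s-1)/2\rfloor$ deletions; if this holds for every $s \in [S]$, then $\ell^* \in \bL^*_1$.

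Next I will bound this per-position failure probability. For fixed $i$ and $s$, the number of deletions in $R_s(i)$ is distributed as $\Bin(r, \delta)$ with $r \le \lceil(1+\gamma)(t_s-1)/2\rceil$, and the threshold $k := \lfloor\gamma(t_s-1)/2\rfloor$ vastly exceeds the mean $\delta r$ since $\delta \ll \gamma$. The multiplicative Chernoff bound $\Pr[\Bin(r,\delta) > k] \le (er\delta/k)^k$ yields
\[
  \Pr\bigl[R_s(i)\text{ has more than }k\text{ deletions}\bigr] \;\le\; \left(\frac{e(1+\gamma)\delta}{\gamma}\right)^{\gamma(t_s-1)/2}.
\]
Because $\delta < 1/(KM) < 1/K^3$, taking the absolute constant $K$ sufficiently large makes $\log(\gamma/(e(1+\gamma)\delta))$ as large as desired, so the bound becomes at most $2^{-\Omega(\gamma t_s \log(1/\delta))}$. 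Combining this with the symmetric bound for $L_s(i)$, using $t_s \ge t_1 = 2H+1$, and union-bounding over $s \in [S]$ (with $S = O(\log\log n)$) yields $\Pr[\exists s : L_s(i) \text{ or } R_s(i) \text{ is bad}] \le 2^{-cH}$ for some constant $c > 0.2$.

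Finally I will apply \Cref{thm:Mcdiarmid}. Let $\mathbf{Y}$ count the number of $i \in [n] \setminus \bD^*$ for which some $L_s(i)$ or $R_s(i)$ is bad; via the correspondence $\ell^* \mapsto \source^*(\ell^*)$ between $\bL^*$ and non-deleted source positions (up to boundary trimming, which can only decrease the count), $|\bL^* \setminus \bL^*_1| \le \mathbf{Y}$. The previous step gives $\E[\mathbf{Y}] \le n \cdot 2^{-cH} \le 2^{-0.2H}n/2$. Viewing $\mathbf{Y}$ as a function of the $n$ independent deletion indicators $\Indicator{i \in \bD^*}$, flipping any single indicator affects only those summands indexed by $i'$ whose window $L_s(i') \cup R_s(i')$ (for some $s$) contains the flipped coordinate, and there are at most $O(t_S) = O(\log n)$ such $i'$, each contributing at most $1$ to $\mathbf{Y}$. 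The bounded-difference constants in \Cref{thm:Mcdiarmid} are thus all $O(\log n)$, and applying the theorem with deviation $t = 2^{-0.2H}n/2$ gives
\[
  \Pr\bigl[\mathbf{Y} > \E[\mathbf{Y}] + t\bigr] \;\le\; \exp\!\left(-\Omega\!\left(\frac{2^{-0.4H}\,n}{\log^2 n}\right)\right).
\]
Using $H \le 2\log n$ to lower-bound $2^{-0.4H} \ge n^{-0.8}$, the right-hand side is at most $\exp(-n^{0.1})$ for $n$ sufficiently large, which establishes the claim.
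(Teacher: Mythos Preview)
Your argument is correct and follows essentially the same template as the paper: bound the expected number of ``bad'' locations by a quantity of the form $\delta^{\Omega(H)}n$ and then apply McDiarmid's inequality with $O(\log n)$ bounded differences. The only cosmetic differences are that the paper counts bad \emph{source intervals} $I$ of length $(1+\gamma)t_s$ with $|\image^*(I)| \le t_s$ (using an injection $\ell^* \mapsto I$), whereas you count bad non-deleted \emph{source positions} via the map $\ell^* \mapsto \source^*(\ell^*)$ together with left/right windows; and the paper bounds the per-interval probability by the crude $2^{|I|}\delta^{\gamma t_s} \le \delta^{\gamma t_s/2}$, which is the same order as your multiplicative Chernoff bound. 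One nit: your floor/ceiling arithmetic in the ``sufficient condition'' step is off by a small additive constant (the bound $1 + 2\lceil(1+\gamma)(t_s-1)/2\rceil$ can exceed $(1+\gamma)t_s$ by about $2$), but since $t_s \ge t_1 = 2H+1$ and $H \ge K$ can be taken large, this is easily absorbed by shrinking $\gamma$ slightly in the window definition.
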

\begin{proof}
Given $D^*$, for each 
  $\ell^*\in L^*\setminus L_1^*$ 
  there is an $s\in [S]$ such that $| {\overline{\source^\ast(Q^\ast_{s,\ell^*})}}|\ge (1+\gamma)t_s$.~We can get from it an interval $I$ with $|I|=(1+\gamma)t_s$ and $|\image^*(I)|\le |Q_{s,\ell^*}^*|=t_s$        by deleting elements 
  from the right end of $\smash{\overline{\source^\ast(Q^\ast_{s,\ell^*})}}$.
We note that the intervals $I$ obtained from different $\ell^*\in L^*\setminus L_1^*$
  are different.
(To obtain the same interval, we must  use the same $s\in [S]$ because of the 
  length of $I$; on the other hand, sharing the same left end and the same $s$ 
  implies that the $\ell^*$ is the same as well.)
Therefore, $|L^*\setminus L_1^*|$ is at most the number of intervals
  $I\subseteq [n]$ such that $|I|=(1+\gamma)t_s$ for some $s\in [S]$ and 
  $|\image^*(I)|\le t_s$.
Below we upperbound the latter when $\bD^*\sim \calD$.

We apply the McDiarmid inequality (\Cref{thm:Mcdiarmid}). We draw $\bD^*$ by
  drawing $n$ independent random indicator variables
  $\bX_1,\ldots,\bX_n$  with $\bX_k=1$
  with probability $\delta$ (so $k\in \bD^*$ if $\bX_k=1$).
We use $f(\bX_1,\ldots,\bX_n)$ to denote the number of $I\subseteq [n]$
  such that $|I|=(1+\gamma)|t_s|$ for some $s\in [S]$ and $|\image^*(I)|\le t_s$. 
On the one hand,  
  the probability of an interval $I$ with $|I|=(1+\gamma)t_s$ satisfying $|\image^* (I)|\le t_s$
  is at most
$ 
2^{|I|}\cdot \delta^{\gamma t_s}\le \delta^{\gamma t_s/2} 
$, by using $\delta \le 1/K$ and making $K$ sufficiently large.
As a result,
$$
\bE\big[f\big]\le n\left(\sum_{s=1}^S \delta^{\gamma t_s/2}\right)= \delta^{\Omega(t_1)}n=\delta^{\Omega(H)}n. 
$$
On the other hand, flipping one variable $\bX_k$ can change
  $f$ by no more than $O(t_S)=O(\log n)$.
Thus it follows from the McDiarmid inequality that
$$
f(\bX_1,\ldots,\bX_n)\le \delta^{\Omega(H)}n +\tilde{O}(n^{0.55})
\le 2^{-0.2H} n,
$$ 
with probability at least $1-\exp(-n^{0.1})$, where we used 
  that $\delta$ is sufficiently small and $H\le 2 \log n$~in the last inequality. 
This finishes the proof of the claim.
\end{proof}




\subsubsection{Conditions on $\bssx\sim \{0,1\}^n$}
We fix a $D^*\subseteq [n]$ that satisfies \Cref{claim:D-star} when describing the conditions 
  for $\bssx$ and $\bD^{(m)}$ below.
As $D^*$ is fixed, $L^*,L^*_1$ and 
  $\smash{\overline{\source^\ast(Q^\ast_{s,\ell^*})}}$ for each $\ell^*\in L_1^*$ 
  are all fixed and are no longer  
  random variables. 
  For each $\ell^\ast \in L^\ast_1$, for brevity we write $I^\ast_{s,\ell^\ast}$ to denote 
  $\smash{\overline{\source^\ast(Q^\ast_{s,\ell^*})}}$, and we observe that for each 
  $\ell^*\in L^*_1$ we have 
\begin{equation}\label{boundsforI}
t_s\le |I^*_{s,\ell^*}|\le (1+\gamma)t_s.
\end{equation}
The conditions for $\bssx\sim \{0,1\}^n$ are given in the next three claims.

\begin{claim}\label{condx1}
With probability at least $1-1/n^3$ over $\bssx\sim \{0,1\}^n$, 
  every $\ell^*\in L_1^*$  and every interval $I\subseteq [n]$ with $|I\bigtriangleup I_{S,\ell^*}^*|\ge 4\gamma t_S$ satisfy $\dedit(\bssx_{I^*_{S,\ell^*}},\bssx_I)\ge 4\gamma t_S$.
\end{claim}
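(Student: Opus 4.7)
The claim is essentially a direct union-bound application of \Cref{claimhehe}, so my plan is to identify the right parameters, verify the hypotheses, and count the number of relevant pairs $(\ell^*,I)$.

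First, I would fix an arbitrary $\ell^*\in L_1^*$ and an arbitrary interval $I\subseteq [n]$ with $|I\bigtriangleup I^*_{S,\ell^*}|\ge 4\gamma t_S$. I want to apply \Cref{claimhehe} to the pair $(I^*_{S,\ell^*},I)$ with the parameter ``$t$'' there set to $4\gamma t_S$. The hypothesis $|I^*_{S,\ell^*}|\ge 25\cdot (4\gamma t_S) = 100\gamma\cdot t_S = t_S$ (using $\gamma=0.01$) is exactly the lower bound provided by \Cref{boundsforI}, and the symmetric-difference hypothesis is assumed. Hence \Cref{claimhehe} gives
\[
\Pr_{\bssx}\bigl[\,\dedit(\bssx_{I^*_{S,\ell^*}},\bssx_I)< 4\gamma t_S\,\bigr]\;\le\; 2^{-5\cdot 4\gamma t_S}\;=\;2^{-0.2\, t_S}.
\]

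Next, I would invoke the definition of $t_S$: by construction $t_S\ge \tau\log n = 500\log n$, so $2^{-0.2 t_S}\le n^{-100}$. The failure probability for a single pair $(\ell^*,I)$ is therefore extremely small, leaving plenty of slack for the union bound.

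Finally, I would take a union bound over all relevant pairs. There are at most $|L^*_1|\le n$ choices of $\ell^*$, and the number of intervals $I\subseteq [n]$ is at most $\binom{n+1}{2}\le n^2$. So there are at most $n^3$ pairs to consider, and the total failure probability is at most $n^3\cdot n^{-100}=n^{-97}$, which is comfortably smaller than $1/n^3$.

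I do not see a real obstacle here: the ``hard work'' (the edit-distance argument for random strings) was already carried out in \Cref{claimhehe}, and what remains is just a parameter check ($|I^*_{S,\ell^*}|\ge 25\cdot 4\gamma t_S$ follows from \Cref{boundsforI} because $100\gamma=1$) plus a counting argument. The only mild subtlety to flag is that the claim imposes no length restriction on the second interval $I$, so it is important that \Cref{claimhehe} only requires the \emph{first} interval to have length at least $25t$; the choice $I^*_{S,\ell^*}$ in the first slot is what makes this go through uniformly over all $I$.
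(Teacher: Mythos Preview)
Your proposal is correct and matches the paper's proof essentially line by line: both fix a pair $(\ell^*,I)$, invoke \Cref{claimhehe} with $t=4\gamma t_S$ (the hypothesis $|I^*_{S,\ell^*}|\ge 25t$ reducing to $|I^*_{S,\ell^*}|\ge t_S$ via $100\gamma=1$), obtain the $2^{-20\gamma t_S}\le n^{-100}$ bound from $t_S\ge\tau\log n$ and $\gamma\tau=5$, and finish with a union bound over at most $n^3$ pairs.
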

\begin{proof}
Recall that $|I^*_{S,\ell^*}|\ge t_S$ and $\gamma = 0.01$.
Fix  an $\ell^*\in L_1^*$ and an interval $I$ with $|I\bigtriangleup I^*_{S,\ell^*}|\ge 4\gamma t_S$. 
By \Cref{claimhehe} we have that $\smash{\dedit(\bssx_{I^*_{S,\ell^*}},\bssx_I)< 4\gamma t_S}$ 
  occurs with probability
  at most $\smash{2^{-20\gamma t_S}\le 1/n^{100}}$, 
    using $\smash{t_S\ge \tau\log n}$ and $\gamma \tau=5$.
The claim follows by a union bound over no more than $\smash{n^3}$ pairs of $\smash{\ell^*}$ and $I$.
\end{proof}

For the next claim we need the following notation.
Given $\ell^*\in L^*$ and $s\in [S]$, we let $N(I^*_{s,\ell^*})$ denote
  the interval obtained by adding $t_s$ elements to both ends of $I^*_{s,\ell^*}$
  (so $ N(I^*_{s,\ell^*}) $ has length $|I^*_{s,\ell^*}|+2t_s$).
Note that $N(I^*_{s,\ell^*})$ is an interval contained in $[n]$ given that 
  $\ell^*\in L^*$ is at least $5\tau \log n$ from both ends of $[n-|D^*|]$ (recall that $t_s \leq t_S \leq 3 \tau \log n$).

\begin{claim} \label{claim:x-ED-dist}
  With probability at least $1 - \exp(-n^{0.1})$ over $\bssx\sim \{0,1\}^n$,
  all but at most $2^{-0.2H}n$ many locations $\ell^* \in L^*_1$ 
  satisfy the following condition: 
  For any $s \in [2:S]$ and any interval $I \subseteq N(I^*_{s,\ell^*})$
             such that $|I\bigtriangleup I^*_{s-1,\ell^*}|\ge 4\gamma t_{s-1}$, we have
$ \dedit(\bssx_{I},
\bssx_{I^*_{s-1,\ell^*}}) \ge 4\gamma t_{s-1}.$ 
\end{claim}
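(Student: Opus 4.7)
The plan is to follow the template used in the proof of Claim~\ref{claim:D-star}: define $f(\bssx)$ as the number of ``bad'' locations $\ell^\ast \in L^\ast_1$ (those for which some pair $(s,I)$ violates the stated edit-distance inequality), bound $\bE[f(\bssx)]$ via a union bound over triples $(\ell^\ast, s, I)$ using Claim~\ref{claimhehe}, and then apply McDiarmid's inequality (Theorem~\ref{thm:Mcdiarmid}) to the bits of $\bssx$ to obtain concentration.

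For the expectation, fix a triple $(\ell^\ast, s, I)$ with $I \subseteq N(I^\ast_{s,\ell^\ast})$ and $|I \bigtriangleup I^\ast_{s-1,\ell^\ast}| \geq 4\gamma t_{s-1}$, and apply Claim~\ref{claimhehe} with $t = 4\gamma t_{s-1}$ to the pair $(I^\ast_{s-1,\ell^\ast}, I)$. The hypothesis $|I^\ast_{s-1,\ell^\ast}| \geq 25t = t_{s-1}$ (using $100\gamma = 1$) holds by \eqref{boundsforI}, so $\Pr_{\bssx}[\dedit(\bssx_I, \bssx_{I^\ast_{s-1,\ell^\ast}}) < 4\gamma t_{s-1}] \leq 2^{-20\gamma t_{s-1}} = 2^{-0.2 t_{s-1}}$. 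For fixed $\ell^\ast$ and $s$, the number of intervals $I \subseteq N(I^\ast_{s,\ell^\ast})$ is at most $|N(I^\ast_{s,\ell^\ast})|^2 = O(t_s^2)$, and summing over $s \in [2:S]$ gives a geometric series dominated by its $s=2$ term: $\sum_{s=2}^S O(t_s^2) \cdot 2^{-0.2 t_{s-1}} \leq O(t_1^2 \cdot 2^{-0.2 t_1}) \leq 2^{-0.3H}$, using $t_1 = 2H+1$ and $H \geq K$ for $K$ a sufficiently large absolute constant. Multiplying by $|L^\ast_1| \leq n$ yields $\bE[f(\bssx)] \leq n \cdot 2^{-0.3H}$.

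For the concentration, observe that flipping a single bit $\bssx_i$ only changes $\dedit(\bssx_I, \bssx_{I^\ast_{s-1,\ell^\ast}})$ for triples with $i \in N(I^\ast_{s,\ell^\ast})$; since $\ell^\ast \mapsto \source^\ast(\ell^\ast)$ is monotone and $|N(I^\ast_{s,\ell^\ast})| = O(t_s)$, the number of affected $\ell^\ast$ for each $s$ is $O(t_s)$, summing to $O(t_S) = O(\log n)$ across $s \in [2:S]$. With this as Lipschitz constant, McDiarmid yields $\Pr[f(\bssx) \geq \bE[f] + n^{0.6}] \leq \exp(-\Omega(n^{0.2}/\log^2 n)) \leq \exp(-n^{0.1})$. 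Since $H \leq 2 \log n$ gives $2^{-0.2H} n \geq n^{0.6}$, and $H \geq K$ gives $2^{-0.2H} \geq 2 \cdot 2^{-0.3H}$, we conclude $\bE[f] + n^{0.6} \leq 2^{-0.2H} n$, as required. The main obstacle is bookkeeping rather than conceptual: one must verify that the $25t$-size condition in Claim~\ref{claimhehe} is satisfied even at the smallest scale $s = 2$ (where $t_{s-1} = t_1 = 2H+1$ is smallest), and check that the expectation bound $n \cdot 2^{-0.3H}$ plus the McDiarmid slack $n^{0.6}$ fits inside the target budget $2^{-0.2H} n$ uniformly across the entire range $K \leq H \leq 2\log n$.
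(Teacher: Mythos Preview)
Your proposal is correct and follows essentially the same approach as the paper: bound $\E[f]$ by applying Claim~\ref{claimhehe} with $t=4\gamma t_{s-1}$ to each triple $(\ell^\ast,s,I)$ and union-bounding over the $O(t_s^2)$ choices of $I$, then invoke McDiarmid with a Lipschitz constant of $O(\log n)$. The only (harmless) slip is that your slack $n^{0.6}$ does not quite fit under $2^{-0.2H}n$ at the extreme $H=2\log n$; taking the slack to be, say, $n^{0.55}$ (as the paper does in Claim~\ref{claim:D-star}) fixes this without changing the argument.
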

\begin{proof}
Again we use the McDiarmid inequality. Let $f(\bssx_1,\ldots,\bssx_n)$ denote the 
  number of $\ell^*\in L^*_1$ that violates the condition.
We first upperbound the probability over $\bssx_1,\dots,\bssx_n$ of a fixed $\ell^*\in L_1^*$ violating the condition.

For each $s\in [2:S]$ and each interval $I\subseteq N(I^*_{s,\ell^*})$
  with $|I\bigtriangleup I_{s-1,\ell^*}^*|\ge 4\gamma t_{s-1}$ (note that $\smash{I^*_{s-1,\ell^*}}$ has 
  length at least $t_{s-1}$), by \Cref{claimhehe} the probability
  of $\smash{\dedit(\bssx_I,\bssx_{I^*_{s-1,\ell^*}})<4\gamma t_{s-1}}$ is at most $\smash{2^{-20\gamma t_{s-1}}}$. 
As
$$\left|N(I^*_{s,\ell^*})\right|\le |I^*_{s,\ell^*}|+2t_s\le (3+\gamma)t_s,$$
by (\ref{boundsforI}), 
  it follows from a union bound that each $\ell^*$ in $L^*_1$ violates the condition
  with probability at most
$$
\sum_{s\in [2:S]} O(t_s^2)\cdot 2^{-20\gamma t_{s-1}}\le 
\sum_{s\in [2:S]} 2^{-19\gamma t_{ s-1}}\le 2^{-18\gamma t_{ 1}}\le 
2^{-0.3H},
$$  
using $t_1=2H+1$ and $H$ is sufficiently large, and hence $\E[f] \leq 2^{-0.3H} n$.
Given that each variable $\bssx_i$ can change $f$ by no more than $O(\log n)$,
  the lemma follows from arguments similar to the proof of \Cref{claim:D-star}.  
\end{proof}

\begin{claim} \label{claim:unique-lcs}
  With probability at least $1-\exp(-n^{0.1} )$ over $\bssx\sim \{0,1\}^n$,
  all but at most $2^{-0.2H}n$
  many $\ell^*\in L^*_1$  
  are such that
  no two subwords of $\bssx_{N(I^*_{1,\ell^*})}$ of length $H$ are the same.
\end{claim}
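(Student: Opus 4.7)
The strategy is a routine union bound plus McDiarmid argument, in the same spirit as the proofs of \Cref{claim:D-star,claim:x-ED-dist}. First I will bound the probability that a \emph{single} location $\ell^*\in L_1^*$ violates the stated condition, and then I will upgrade this into a concentration statement for the count of all such $\ell^*$.

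For a fixed $\ell^*\in L_1^*$, recall that by~\Cref{boundsforI} the interval $I^*_{1,\ell^*}$ has length between $t_1$ and $(1+\gamma)t_1$, and $N(I^*_{1,\ell^*})$ is obtained by extending $I^*_{1,\ell^*}$ by $t_1$ bits on each side; so $|N(I^*_{1,\ell^*})|\le (3+\gamma)t_1=O(H)$. Consequently there are at most $O(H^2)$ unordered pairs of distinct length-$H$ subwords of $\bssx_{N(I^*_{1,\ell^*})}$, and by \Cref{corohehe} each such pair consists of equal subwords with probability $2^{-H}$. Union-bounding, a fixed $\ell^*$ violates the condition with probability at most $O(H^2)\cdot 2^{-H}\le 2^{-0.3H}$, where the last inequality uses $H\ge K$ with $K$ sufficiently large.

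Let $f(\bssx_1,\dots,\bssx_n)$ be the number of $\ell^*\in L_1^*$ that violate the condition. The above estimate gives $\E[f]\le 2^{-0.3H}n$. To apply the McDiarmid inequality (\Cref{thm:Mcdiarmid}) I need to bound the effect of flipping a single coordinate. Flipping $\bssx_i$ can change the indicator associated with an $\ell^*$ only if $i\in N(I^*_{1,\ell^*})$, and since $N(I^*_{1,\ell^*})$ slides along $\bssx$ as $\ell^*$ varies (with each length $O(H)$), the number of such $\ell^*$ is $O(H)$. Thus each coordinate has bounded difference $c_i=O(H)$ and $\sum_i c_i^2=O(nH^2)$.

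Setting $t=2^{-0.2H}n-\E[f]\ge \tfrac{1}{2}\cdot 2^{-0.2H}n$, McDiarmid gives
\[
\Pr\!\left[f\ge 2^{-0.2H}n\right]\le \exp\!\left(-\Omega\!\left(\frac{2^{-0.4H}\,n}{H^2}\right)\right).
\]
The last thing to check is that the exponent is at most $-n^{0.1}$ for every $H$ in the allowed range $K\le H\le 2\log n$. Since $H\le 2\log n$ one has $2^{-0.4H}\ge n^{-0.8}$, and since $H^2\le 4\log^2 n\ll n^{0.1}$, we get $2^{-0.4H}n/H^2\ge n^{0.2}/(4\log^2 n)\ge n^{0.1}$ for $n$ large. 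The main (and only) technical subtlety is the verification of this inequality across the full range of $H$; beyond that, every step is a direct application of \Cref{corohehe} and \Cref{thm:Mcdiarmid}, exactly as in \Cref{claim:D-star,claim:x-ED-dist}.
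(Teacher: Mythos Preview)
Your proof is correct and follows essentially the same approach as the paper: a union bound via \Cref{corohehe} to get a per-$\ell^*$ failure probability of $O(t_1^2)\cdot 2^{-H}$, followed by McDiarmid's inequality. The paper's own proof is a one-line sketch deferring to the earlier claims, so your version simply spells out the details (with the slightly looser constant $2^{-0.3H}$ in place of the paper's $2^{-0.5H}$, and a bounded difference of $O(H)$ rather than the coarser $O(\log n)$ implicit in the paper's ``similar application'').
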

\begin{proof}
The probability of an $\ell^*\in L_1^*$ violating the above condition is 
  at most $ O(t_1^2)\cdot 2^{-H}\le 2^{-0.5H}$  by \Cref{corohehe}.
The   proof follows from a similar application of McDiarmid inequality.
\end{proof}


\subsubsection{Conditions on $\bD^{(1)},\ldots,\bD^{(M)}\sim \calD$}
We fix an outcome $D^*$ that satisfies \Cref{claim:D-star} and 
  a string $\ssx\in \{0,1\}^n$ that satisfies \Cref{condx1},
  \Cref{claim:x-ED-dist}, and \Cref{claim:unique-lcs}. 

We now describe some useful conditions on $\bD^{(1)},\ldots,\bD^{(M)}$.
We start with two conditions for every $\bD^{(m)}$.

\begin{claim}\label{claimhaha}
With probability at least $1-1/n^3$ over $\bD\sim \calD$, every  interval $I\subseteq [n]$ of length
   at most $(1+3\gamma)t_S$ satisfies that $|\image(I)|\ge |I|-\gamma t_S$. 
\end{claim}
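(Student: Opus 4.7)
The plan is to recast the claim as a routine tail bound on the number of deletions inside a short interval, followed by a union bound over intervals. Note that for an interval $I \subseteq [n]$, a location $i \in I$ contributes to $\image(I)$ iff $i \notin \bD$, so $|\image(I)| = |I| - |I \cap \bD|$. Thus the desired conclusion is equivalent to
\[
|I \cap \bD| \le \gamma t_S \quad \text{for every interval } I \subseteq [n] \text{ with } |I| \le (1+3\gamma)t_S.
\]
For fixed such $I$, the random variable $|I \cap \bD|$ is distributed as $\Bin(|I|,\delta)$, and its expectation is at most $\delta \cdot (1+3\gamma)t_S \le 2\delta t_S$, which by our assumption $\delta \le 1/K$ (with $K$ a sufficiently large absolute constant) is much smaller than $\gamma t_S = 0.01\, t_S$.

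First I would apply a standard upper tail bound:
\[
\Pr\bigl[|I \cap \bD| \ge \gamma t_S\bigr] \;\le\; \binom{|I|}{\gamma t_S}\, \delta^{\gamma t_S} \;\le\; 2^{\,(1+3\gamma)t_S \cdot H(\gamma/(1+3\gamma))}\, \delta^{\gamma t_S},
\]
using ${m \choose pm} \le 2^{m H(p)}$. Since $\gamma=0.01$, $H(\gamma/(1+3\gamma))$ is a fixed small constant, so the first factor is at most $2^{c_1 t_S}$ for some absolute constant $c_1$. Since $\delta \le 1/K$, the second factor is at most $K^{-\gamma t_S} = 2^{-\gamma t_S \log K}$. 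For $K$ large enough (and this is the only place where we need to be careful about the value of $K$), we have $\gamma \log K - c_1 \ge 1/\tau$, and hence the product is at most $2^{-t_S/\tau} \le 2^{-\log n} = 1/n$; choosing $K$ even larger pushes this bound down to $1/n^{10}$ using $t_S \ge \tau \log n$.

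Then I would union bound over all intervals $I \subseteq [n]$ of length at most $(1+3\gamma)t_S$. There are at most $n \cdot (1+3\gamma)t_S = O(n \log n)$ such intervals, so the overall failure probability is at most $O(n \log n) \cdot n^{-10} \le 1/n^3$ as required. The only subtle point is to choose $K$ large enough that the Chernoff-type bound of the preceding paragraph comfortably beats the polynomial number of intervals in the union bound; since $K$ is a free absolute constant in our setup, this is not a real obstacle.
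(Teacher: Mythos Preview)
Your proof is correct and follows essentially the same approach as the paper: bound the probability that a single fixed short interval sees at least $\gamma t_S$ deletions, then union bound over all $O(n\log n)$ such intervals. The paper uses the cruder estimate $2^{|I|}\cdot \delta^{\gamma t_S}\le \delta^{\gamma t_S/2}$ (and then $t_S\ge \tau\log n$, $\gamma\tau=5$ to get $\le n^{-5}$) in place of your entropy-based $\binom{|I|}{\gamma t_S}\delta^{\gamma t_S}$ bound, but the structure and the use of ``$\delta\le 1/K$ with $K$ large'' are the same.
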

\begin{proof}
For each $I\subseteq [n]$ of length 
  at most $(1+3\gamma)t_S$, we have $|\image(I)|<|I|-\gamma t_S$ with probability at most
  $2^{|I|}\cdot  \delta^{\gamma t_S}\le \delta^{\gamma t_S/2}$, where the inequality uses that $\delta$ 
  is sufficiently small.
Using $t_S\ge \tau\log n$ we have that $\delta^{\gamma t_S/2} \le \delta^{5\log n/2}\le 1/n^5$
  (as $\delta$ is sufficiently small).
The claim then follows from a union bound.
\end{proof}

\begin{remark}\label{remark:1}
{We note that the event described in \Cref{claimhaha} implies that any interval $Q\subseteq [n-|\bD|]$ with
  length at most $(1+2\gamma)t_S$ must satisfy $|\overline{\source(Q)}|\le |Q|+\gamma t_S$.
To see this, let $I=\overline{\source(Q)}$ and assume for a contradiction that $|I|>|Q|+\gamma t_S$.
If $|I|\le (1+3\gamma)t_S$ then $I$ violates the event of \Cref{claimhaha};
if $|I|> (1+3\gamma)t_S$ then we can delete bits of $I$ from the beginning to obtain
  an interval $I'$ with $|I'|=(1+3\gamma)t_S$, which satisfies $|\image(I')|<|\image(I )|\le (1+2\gamma)t_S$ and thus,
$I'$ violates the condition of \Cref{claimhaha}.}\end{remark}

Let $L_2^*$ be the set of all $\ell^*\in L^*_1$ that satisfy
the conditions in \Cref{claim:x-ED-dist} and \Cref{claim:unique-lcs}. 
\begin{claim} \label{claim:D}
With probability at least $1-\exp(-n^{0.1})$ over $\bD\sim \calD$, all but at most
  $2^{-0.2H}n$ many $\ell^*\in L^*_2$
  satisfy the following condition: For every $s\in [2:S]$ and 
   every interval $I\subseteq N(I^*_{s,\ell^* }) $ of length at most $(1+3\gamma)t_{s-1}$,
   we have $|\image(I)|\ge |I|-\gamma t_{s-1}$.
\end{claim}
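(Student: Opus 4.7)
The argument will closely mirror the proofs of \Cref{claim:D-star,claim:x-ED-dist}: first derive a pointwise (per-$\ell^*$) upper bound via union bound over $s$ and $I$, then combine with bounded-difference concentration via McDiarmid. The key quantity to control is the random function
\[
f(\bD) := \#\bigl\{\ell^* \in L^*_2 : \text{some } s\in[2:S] \text{ and some } I\subseteq N(I^*_{s,\ell^*}) \text{ with } |I|\le (1+3\gamma)t_{s-1} \text{ violates the condition}\bigr\}.
\]
Here $D^*$ and $\ssx$ are already fixed (satisfying the earlier claims), so $L^*_2$ and the intervals $N(I^*_{s,\ell^*})$ are deterministic; the only remaining randomness is $\bD \sim \calD$.

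\textbf{Step 1 (per-$\ell^*$ union bound).} Fix $\ell^* \in L^*_2$ and $s\in[2:S]$. For any fixed interval $I\subseteq N(I^*_{s,\ell^*})$ with $|I|\le (1+3\gamma)t_{s-1}$, the event $|\image(I)| < |I|-\gamma t_{s-1}$ requires at least $\gamma t_{s-1}$ deletions inside $I$, whose probability (by the same Chernoff-type estimate used in \Cref{claimhaha}, since $\delta$ is sufficiently small) is at most $\delta^{\gamma t_{s-1}/2}$. There are at most $|N(I^*_{s,\ell^*})|^2 = O(t_s^2)$ such intervals, so $\ell^*$ violates the condition with probability at most
\[
\sum_{s\in[2:S]} O(t_s^2)\cdot \delta^{\gamma t_{s-1}/2} \;\le\; \delta^{\Omega(t_1)} \;=\; 2^{-\Omega(H)},
\]
using the geometric growth $t_s=3t_{s-1}$ to swallow the sum and $t_1 = 2H+1$. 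Hence $\E[f] \le 2^{-0.3H} n$ (for $K$ sufficiently large).

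\textbf{Step 2 (bounded differences).} Flipping a single coordinate $k$ of $\bD$ can change $|\image(I)|$ only for intervals $I$ containing $k$. Since $k$ lies in $N(I^*_{s,\ell^*})$ for at most $O(t_S)=O(\log n)$ distinct pairs $(s,\ell^*)$ (because each $N(I^*_{s,\ell^*})$ has length $O(\log n)$ and these intervals only modestly overlap across $\ell^*$), the value of $f$ can change by at most $O(\log n)$ under a single-coordinate flip.

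\textbf{Step 3 (McDiarmid).} Applying \Cref{thm:Mcdiarmid} with $c_k = O(\log n)$ for all $k\in[n]$ and deviation $t = 2^{-0.2H} n - \E[f] = \Theta(2^{-0.2H} n)$ (which is $\ge n^{0.55}$ by the assumption $(\delta M)^{M/K}\ge 1/n^2$, i.e.~$H \le 2\log n$), we obtain
\[
\Pr\bigl[f(\bD) > 2^{-0.2H} n\bigr] \;\le\; \exp\!\left(-\Omega\!\left(\frac{(2^{-0.2H}n)^2}{n(\log n)^2}\right)\right) \;\le\; \exp(-n^{0.1}).
\]
This yields the claim. The main obstacle is the per-$\ell^*$ bound of Step 1: one must verify that the geometric progression $t_s = 3 t_{s-1}$ combined with the smallness of $\delta$ (inherited from \Cref{eq:assumptions}) truly makes the tail sum over $s$ dominated by its $s=2$ term, so that the penalty factor $\delta^{\gamma t_1/2}$ comfortably beats $2^{-0.3H}$; beyond that, the McDiarmid step is routine and identical in spirit to the preceding claims.
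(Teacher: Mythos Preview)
Your proposal is correct and follows essentially the same approach as the paper: a per-$\ell^*$ union bound over $s$ and $I$ giving failure probability $\sum_{s\in[2:S]} O(t_s^2)\cdot \delta^{\gamma t_{s-1}/2} \le 2^{-0.3H}$, followed by McDiarmid with $O(\log n)$ bounded differences. The only minor imprecision is in Step~2, where you say the intervals $N(I^*_{s,\ell^*})$ ``only modestly overlap across $\ell^*$''---in fact for fixed $s$ they overlap heavily (shifting by roughly one position as $\ell^*$ increments), but the correct count is still that a given coordinate $k$ lies in $N(I^*_{s,\ell^*})$ for at most $O(t_s)$ values of $\ell^*$ per $s$, summing to $O(t_S)=O(\log n)$ pairs, which is what you need.
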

\begin{proof}
We upper bound the probability of an $\ell^*\in L_2^*$ violating the condition above,
  and then apply the McDiarmid inequality.
Fixing an $\ell^* \in L^*_2$, a value of $s\in [2:S]$, and any
   interval $I\subseteq N(I^*_{s,\ell^* }) $ of length at most $(1+3\gamma)t_{s-1}$,
   $I$ violates the condition with probability $\smash{2^{|I|}\cdot 
   \delta^{\gamma t_{s-1}}\le \delta^{\gamma t_{s-1}/2}}$.
By a union bound (over all possibilities for $s$ and $I$), the probability of $\ell^*$ violating the condition is at most
$$
\sum_{s\in [2:S]} O(t_{s }^2)\cdot \delta^{\gamma t_{s-1}/2},
$$ 
and hence the expected number of $\ell^\ast \in L^\ast_2$ that
violate the condition is at most $n$ times this, which is at most
$2^{-0.3H}n$ using that $t_{s-1} \geq t_1 = 2H+1$ and
$\delta$ is sufficiently small.
Finally, given that the outcome of each independent event (of whether an element in $[n]$ is included in $\bD$ or not) can change the number of $\ell^\ast$ that satisfy the condition by at most $O(\log n)$,
  the lemma follows from arguments similar to the proof of \Cref{claim:D-star}.  
\end{proof}

\begin{remark}\label{remark:2}
\Cref{remark:1} applies similarly: Whenever the condition holds for $\ell^*\in L_2^*$,
  any interval $Q\subseteq \image(N(I^*_{s,\ell^*}))$ for any $s$ with length at most $(1+2\gamma)t_{s-1}$ 
  satisfies $|\overline{\source(Q)}|\le |Q|+\gamma t_{s-1}$.
\end{remark}

The last condition considers $\bD^{(1)},\ldots,\bD^{(M)}$ together:



\begin{claim} \label{claim:Chernoff}
  With probability at least $1 - \exp(-n^{0.1})$ over $\bD^{(1)},\ldots,\bD^{(M)}\sim \calD$, 
  all but at most $2^{-0.2H}n$ many $\ell^*\in L^*_2$ satisfy the following condition:  
  At least $95\%$ of $m\in [M]$ satisfy $$N(I^*_{1,\ell^*}) \cap \bD^{(m)} = \varnothing,$$  i.e., no bit of the subword $ \bssx_{N(I^*_{1,\ell^*})}$ of $\bssx$ is deleted in 
  at least $95\%$ of the traces $\bssy^{(1)},\ldots,\bssy^{(M)}$.
\end{claim}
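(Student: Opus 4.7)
The plan is to follow the same template as the previous claims in this subsection: first fix $D^*$ and $\ssx$ (so that only $\bD^{(1)},\dots,\bD^{(M)}$ remain random), bound the probability of failure at a single $\ell^*\in L^*_2$, and then apply McDiarmid's inequality to the function $f(\bD^{(1)},\dots,\bD^{(M)})$ that counts the number of failing $\ell^*\in L^*_2$.

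For the single-$\ell^*$ bound, I will use that $|N(I^*_{1,\ell^*})|\le (3+\gamma)t_1 = O(H)$ since $\ell^*\in L^*_1$, so for each $m\in[M]$ the probability $q$ that $N(I^*_{1,\ell^*})\cap\bD^{(m)}\neq\varnothing$ satisfies $q\le |N(I^*_{1,\ell^*})|\delta\le 7H\delta$. Using $H=(M/K)\log(1/(\delta M))$ and $\delta M\le 1/K$, one checks that $\delta H = (\delta M/K)\log(1/(\delta M))\le (\log K)/K^2$, so $q$ is smaller than any desired small constant (say $1/100$) once $K$ is large enough. A multiplicative Chernoff bound in its tight form then upper bounds the probability that more than $0.05M$ of the traces are hit by $\exp(-M\cdot D(0.05\,\|\,q))$. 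Since $\log(1/q)\ge \log(1/(7H\delta)) = \Omega(\log(1/(\delta M)))$ for $K$ large, and $D(0.05\,\|\,q)\ge 0.04\log(1/q) - O(1)$, the exponent is $\Omega(M\log(1/(\delta M))) = \Omega(KH)$, which exceeds $0.3H\log 2$ for $K$ a sufficiently large absolute constant. Hence the single-$\ell^*$ failure probability is at most $2^{-0.3H}$, so $\E[f]\le 2^{-0.3H}n$.

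For the concentration step I will apply McDiarmid to $f$, viewed as a function of the $Mn$ independent Bernoullis used to generate $\bD^{(1)},\dots,\bD^{(M)}$. Flipping a single bit at position $i$ of $\bD^{(m)}$ only affects the event for those $\ell^*$ with $i\in N(I^*_{1,\ell^*})$; by monotonicity of $\source^*$ and $|N(I^*_{1,\ell^*})|=O(H)$, there are at most $O(H)$ such $\ell^*$, and for each one the change in $f$ is at most $1$. Thus each coordinate's bounded difference is $c=O(H)=O(\log n)$. Crucially, the assumptions $H\le 2\log n$ and $H\ge (M/K)\log K$ (the latter from $\delta M\le 1/K$) force $M = O(\log n/\log K)$, so $\sum c_k^2 = O(MnH^2) = O(n\log^3 n)$. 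With $\E[f]\le 2^{-0.3H}n\le n^{0.7}$, setting $t = 2^{-0.2H}n - \E[f]\ge n^{0.6}/2$ in McDiarmid yields a failure probability at most $\exp(-\Omega(n^{0.2}/\log^3 n))\le \exp(-n^{0.1})$ for $n$ large, as required.

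The main obstacle is the Chernoff computation for the single-$\ell^*$ failure probability, which must be robust across the entire parameter regime allowed by (\ref{eq:assumptions})---in particular the case where $H$ is as large as $2\log n$ while $M$ is near its minimum $K^2$; the coarser form $\Pr[\mathrm{Bin}(M,q)\ge 0.05M]\le (20eq)^{0.05M}$ is tight enough only once one rewrites it using the identity $KH = M\log(1/(\delta M))$ to extract the needed $2^{-\Omega(H)}$ tail. Once that point is handled, the McDiarmid step is essentially routine and closely parallels the proofs of Claims \ref{claim:D-star}, \ref{claim:x-ED-dist}, and \ref{claim:D}.
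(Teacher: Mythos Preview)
Your proposal is correct and follows essentially the same approach as the paper: bound the single-$\ell^*$ failure probability by a Chernoff-type tail (the paper uses the cruder $2^M(7\delta H)^{0.05M}$, you use the KL form, both landing at $2^{-0.3H}$), then apply McDiarmid to the count function with bounded differences $O(H)$ over the $nM$ deletion indicators.

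One small slip to fix: the intermediate claim $\E[f]\le 2^{-0.3H}n\le n^{0.7}$ is false when $H<\log n$ (and if it held, subtracting $n^{0.7}$ from $2^{-0.2H}n\approx n^{0.6}$ at $H=2\log n$ would give a negative number). You don't need it: directly, $t\ge (2^{-0.2H}-2^{-0.3H})n\ge \tfrac{1}{2}\,2^{-0.2H}n\ge \tfrac{1}{2}\,n^{0.6}$ using only $H\le 2\log n$ and $H\ge K$, which is what your McDiarmid step actually requires.
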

\begin{proof}
Consider drawing $\bD^{(1)},\ldots,\bD^{(M)}$ by drawing $nM$ independent indicator random 
  variables $\smash{\bX_k^{(m)}}$, $k\in [n]$ and $m\in [M]$, with $\smash{k\in \bD^{(m)}}$ if $\smash{\bX_{k}^{(m)}=1}$.
We write $f$ to denote the number of $\ell^*\in L^*_2$ such that at least $5\%$ of $m\in [M]$
  have $\smash{N(I^*_{1,\ell^*})\cap \bD^{(m)}\ne \varnothing}$.
On the one hand, fixing an outcome of $\ell^*$, the probability of $\smash{N(I^*_{1,\ell^*})\cap \bD^{(m)}= \varnothing}$
  is at most $7\delta H $ given that $\smash{|N(I^*_{1,\ell^*})|\le (3+\gamma)t_1\le 
  7H}$, and hence
  the probability of $\ell^*$ being one of the locations counted in $f$ is at most
  $2^M\cdot (7\delta H )^{0.05M}$. 
Recalling the constraint \Cref{eq:H} on $H$,  we have that
\begin{equation} \label{eq:bound-on-deltaH}
\delta H=\frac{1}{K}\cdot \delta M\log \left(\frac{1}{\delta M}\right)\le \frac{\sqrt{\delta M}}{K}
\end{equation}
where the inequality holds given that $\delta M$ is 
  sufficiently small (observe from \Cref{eq:assumptions} that $\delta M \leq 1/K$).
Hence the probability is at most
$$
2^M\cdot (7\delta H)^{0.05M}\le (\delta M)^{0.025 M}\le 2^{-0.3H}
$$
where the first inequality is by \Cref{eq:bound-on-deltaH} and the second uses \Cref{eq:H} and the fact that $K$ is sufficiently large.
Recalling that $H=O(\log n)$, the claim follows from the McDiarmid inequality using similar arguments to those given above and the fact that changing the outcome of any one of the $nM$ independent indicator random variables can only change $f$ by at most $O(H)$.
\end{proof}

\subsubsection{Conclusion of Probabilistic Analysis}

We summarize our probabilistic analysis with the following corollary, which combines all the claims from this subsection.

\begin{corollary} \label{lemma:good-rvs}
With probability at least $1-1/n^2$ over the randomness of $\bD^*,\bssx,\bD^{(1)},\ldots,\bD^{(M)}$, all of the following hold:
\begin{enumerate}
\item $\bD^*$ satisfies \Cref{claim:D-star};\vspace{-0.16cm}
\item $\bssx$ satisfies \Cref{condx1},
  \Cref{claim:x-ED-dist} and \Cref{claim:unique-lcs}; \vspace{-0.16cm}
\item Every $\bD^{(m)}$, $m\in [M]$, satisfies \Cref{claimhaha} and 
  \Cref{claim:D}, and \vspace{-0.16cm}
\item  $\bD^{(1)},\ldots,\bD^{(M)}$ together satisfy \Cref{claim:Chernoff}.
\end{enumerate}
%
\end{corollary}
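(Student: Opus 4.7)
The plan is to derive \Cref{lemma:good-rvs} as an essentially immediate consequence of \Cref{claim:D-star,condx1,claim:x-ED-dist,claim:unique-lcs,claimhaha,claim:D,claim:Chernoff}, via a single union bound that respects the sequential conditioning structure set up in this subsection. The probabilistic analysis has been organized so that each claim is stated either unconditionally (\Cref{claim:D-star} on $\bD^*$), or else conditionally on the previously fixed random variables being ``good.'' Consequently, the event that all four bulleted conclusions hold contains the intersection of the seven good events, and its complement is contained in the union of their failure events.

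Concretely, I would carry out the bookkeeping in the same order in which the random variables are revealed. First, $\bD^*$ fails \Cref{claim:D-star} with probability at most $\exp(-n^{0.1})$. Next, conditional on any fixed $D^*$ satisfying \Cref{claim:D-star}, the string $\bssx$ fails \Cref{condx1} with probability at most $1/n^3$ and fails each of \Cref{claim:x-ED-dist} and \Cref{claim:unique-lcs} with probability at most $\exp(-n^{0.1})$. Then, conditional on any fixed $D^*$ and $\ssx$ satisfying the above, each individual $\bD^{(m)}$ fails \Cref{claimhaha} with probability at most $1/n^3$ and \Cref{claim:D} with probability at most $\exp(-n^{0.1})$, and the joint draw $(\bD^{(1)},\ldots,\bD^{(M)})$ fails \Cref{claim:Chernoff} with probability at most $\exp(-n^{0.1})$.

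The only quantitative point that requires a tiny verification is that the union bound over $m\in[M]$ in the third bullet stays well within the $1/n^2$ target. The relevant observation is that \Cref{eq:H}, namely $H=(M/K)\log(1/(\delta M))\le 2\log n$ together with the assumption $\delta M\le 1/K$, forces $M=O(\log n)$; indeed, $\log(1/(\delta M))\ge \log K$, so $M\le 2K\log n/\log K$. Therefore $\sum_{m\in[M]}(1/n^3)=O(\log n/n^3)$ and $\sum_{m\in[M]}\exp(-n^{0.1})=O(\log n\cdot \exp(-n^{0.1}))$, both of which are $o(1/n^2)$.

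Summing all contributions gives overall failure probability at most $\exp(-n^{0.1})+1/n^3+2\exp(-n^{0.1})+O(\log n/n^3)+O(\log n\cdot \exp(-n^{0.1}))+\exp(-n^{0.1})\le 1/n^2$ for $n$ sufficiently large, completing the proof. There is no genuine obstacle here, since the substantive work — the McDiarmid applications and the edit-distance combinatorics — has already been carried out in establishing the individual claims; the only care needed is to propagate the conditional interpretations correctly and to invoke the $M=O(\log n)$ bound from \Cref{eq:H} when union-bounding over the $M$ traces.
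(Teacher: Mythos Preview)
Your proposal is correct and is exactly the argument the paper has in mind: the corollary is stated as a summary without an explicit proof, and the intended justification is precisely this union bound over the seven claims, respecting the conditioning order. Your observation that $M=O(\log n)$ (via \Cref{eq:H} and $\delta M\le 1/K$) is the one small detail worth spelling out to handle the union over $m\in[M]$, and you have done so correctly.
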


\subsection{Deterministic Analysis}\label{sec:det1}

The rest of \Cref{sec:align} is dedicated to proving the following lemma, which 
  finishes the proof of \Cref{thm:align-restated} (and hence \Cref{thm:align}):
  
\begin{lemma} \label{maindetlemma} 
$\Align$ succeeds
  on $\ssx$ with respect to $\ssy^*,\ssy^{(1)},\ldots,\ssy^{(M)}$
  when they satisfy \Cref{lemma:good-rvs}.
\end{lemma}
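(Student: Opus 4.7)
The proof will be an entirely deterministic argument. I would first let $L^*_3 \subseteq L^*$ consist of all $\ell^* \in L^*$ that simultaneously satisfy the per-$\ell^*$ conditions appearing in \Cref{claim:D-star}, \Cref{claim:x-ED-dist}, \Cref{claim:unique-lcs}, \Cref{claim:D}, and \Cref{claim:Chernoff}. A union bound over these five ``bad'' sets (each of size at most $2^{-0.2H}n$) gives $|L^* \setminus L^*_3| \le 5 \cdot 2^{-0.2H} n \le 2^{-0.1H} n$, where the last inequality uses $H \ge K$ for a sufficiently large absolute constant $K$. It therefore suffices to show that $\Align$ succeeds at every fixed $\ell^* \in L^*_3$; the remaining two claims, \Cref{condx1} and \Cref{claimhaha}, are event-wise rather than $\ell^*$-wise and hence apply unconditionally under our good event.

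Fix $\ell^* \in L^*_3$. The first stage of the analysis is a downward induction on $s$ showing that, for every $m \in [M]$, the interval $Q^m_s$ chosen by $\Align$ satisfies both (\ref{condition1}) and (\ref{condition2}). The key deterministic tool is the ``triangle-like'' inequality
\[
  \dedit\bigl(\ssy^*_{Q^*_s},\,\ssy^{(m)}_{Q^m_s}\bigr) \ \ge\ \dedit\bigl(\ssx_{I^*_{s,\ell^*}},\,\ssx_{\overline{\source^{(m)}(Q^m_s)}}\bigr) \,-\, |I^*_{s,\ell^*} \cap D^*| \,-\, \bigl|\overline{\source^{(m)}(Q^m_s)} \cap D^{(m)}\bigr|,
\]
together with its dual upper bound (used to produce candidates). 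For the base case $s = S$, the candidate $Q := \image^{(m)}(I^*_{S,\ell^*})$ satisfies $\dedit(\ssw^*_S,\ssy^{(m)}_Q) \le 2\gamma t_S$, since $\ell^* \in L^*_1$ already forces the number of $D^*$-deletions in $I^*_{S,\ell^*}$ to be at most $\gamma t_S$ and \Cref{claimhaha} bounds the $D^{(m)}$-deletions there by $\gamma t_S$ as well; hence $\Align$ finds some $Q^m_S$ satisfying (\ref{condition1}). If (\ref{condition2}) failed for the chosen $Q^m_S$, then \Cref{condx1} would force $\dedit(\ssx_{I^*_{S,\ell^*}},\ssx_{\overline{\source^{(m)}(Q^m_S)}}) \ge 4\gamma t_S$, and substituting into the displayed inequality would yield $\dedit(\ssw^*_S,\ssy^{(m)}_{Q^m_S}) > 2\gamma t_S$, contradicting (\ref{condition1}). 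The inductive step for $s < S$ mirrors the base case, invoking \Cref{claim:x-ED-dist} and \Cref{claim:D} (applied at level $s+1$ to the interval $I^*_{s,\ell^*} \subseteq N(I^*_{s+1,\ell^*})$) in place of \Cref{condx1} and \Cref{claimhaha}; the only new task is to verify that the candidate $\image^{(m)}(I^*_{s,\ell^*})$ lies inside the search region $Q^m_{s+1}$, which follows from the inductive hypothesis $|I^*_{s+1,\ell^*} \triangle \overline{\source^{(m)}(Q^m_{s+1})}| \le 4\gamma t_{s+1}$ together with the fact that $I^*_{s,\ell^*}$ occupies roughly the middle third of $I^*_{s+1,\ell^*}$ (using $t_{s+1} = 3t_s$ and $4\gamma \ll 1/3$).

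For the second stage, I would use \Cref{claim:Chernoff} to obtain $G \subseteq [M]$ with $|G| \ge 0.95M$ and $N(I^*_{1,\ell^*}) \cap D^{(m)} = \varnothing$ for every $m \in G$; for each such $m$ we have $\ssy^{(m)}_{Q^m_1} = \ssx_{I^m_1}$ with $I^m_1 := \overline{\source^{(m)}(Q^m_1)} \subseteq N(I^*_{1,\ell^*})$ and $|I^m_1 \triangle I^*_{1,\ell^*}| \le 4\gamma t_1$ from the first stage, which, writing $I^*_{1,\ell^*} = [a,b]$ and $I^m_1 = [a_m,b_m]$, gives $|a - a_m|, |b - b_m| \le 4\gamma t_1$. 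The candidate $\ssx_{\cap_{m \in G} I^m_1}$ thus has length at least $t_1 - 8\gamma t_1 \ge 0.9 t_1$ and appears in $\ssy^{(m)}_{Q^m_1}$ for every $m \in G$, so $\Align$ does select some $\ssw$ of length $\ge 0.9 t_1 > H$ that appears in $\ge 95\%$ of the $\ssy^{(m)}_{Q^m_1}$. By \Cref{claim:unique-lcs}, $\ssw$ occurs in $\ssx_{N(I^*_{1,\ell^*})}$ at a unique position $i$, and for every $m \in G$ whose trace contains $\ssw$ (a set of size $\ge 0.9M$), the uniqueness of $\ssw$ inside $\ssx_{I^m_1} \subseteq \ssx_{N(I^*_{1,\ell^*})}$ forces $\source^{(m)}(\ell^{(m)}) = i$, giving the required $0.9M$-consensus. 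Finally, (\ref{consensus}) follows by observing that $\source^*(\ell^*) - a \ge H$ and $b - \source^*(\ell^*) \ge H$ (since $\source^*$ is strictly increasing and $Q^*_1 = [\ell^* - H : \ell^* + H]$), combined with (\ref{boundsforI}) and the bounds $a - 4\gamma t_1 \le i \le a + (0.1 + 5\gamma)t_1$ (the latter because a $\ssw$ of length $\ge 0.9 t_1$ must fit inside some $\ssx_{I^m_1}$), yielding $\source^*(\ell^*) - 2H \le i \le \source^*(\ell^*)$ for $\gamma = 0.01$. The main technical obstacle in executing this plan is the careful bookkeeping of the triangle-style edit-distance inequality across all $O(\log\log n)$ inductive levels of the first stage, where one must simultaneously guarantee existence of a qualifying $Q^m_s$ and rule out every ``misaligned'' candidate that $\Align$ could otherwise pick.
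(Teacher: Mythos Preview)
Your proposal is correct and follows essentially the same approach as the paper: the same downward induction on $s$ for the first stage (the paper's \Cref{lemma:step-1}) using the triangle inequality to relate trace-level and source-level edit distances, and the same second-stage argument via the set $G$ of deletion-free traces together with the uniqueness from \Cref{claim:unique-lcs}.

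One small bookkeeping slip: your union bound of $5 \cdot 2^{-0.2H}n$ undercounts the bad sets. \Cref{claim:D} is stated for a single $\bD$ and must be applied separately to each $\bD^{(m)}$, $m \in [M]$, so it contributes $M$ bad sets, not one. The paper therefore uses $O(M \cdot 2^{-0.2H}n) \le 2^{-0.1H}n$, justified via $H > M/K \ge \sqrt{M}$ (from $M \ge K^2$), so that $2^{0.1H}$ absorbs the $O(M)$ factor. This is a one-line fix and does not affect the rest of your argument.
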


Assume that $\ssx,D^*,D^{(1)},\ldots,D^{(M)}$ satisfy all conditions of \Cref{lemma:good-rvs}. 
Then we have that all but at most 
$O(M2^{-0.2H}n)\le 2^{-0.1H}n$
  {(where the $M$ comes from a union bound in item (vi) and the inequality follows from $H>M/K\ge \sqrt{M}$ using \Cref{eq:H} and $M\ge K^2$ from \Cref{eq:assumptions} and thus, $2^{0.1H}$ is enough to cover $O(M)$ when $M$ is sufficient large)}
many $\ell^*\in L^*$ satisfy the following 
  list of conditions (below we use $I^*_s$ to denote 
   $I^*_{s,\ell^*}$ for convenience given that $\ell^*$ is fixed in the rest of the proof):
\begin{flushleft}\begin{enumerate}
\item[(i)] $|I^*_{s }|< (1+\gamma)t_s$ for every $s\in [S]$ (\Cref{claim:D-star});
\item[(ii)] Every interval $I\subseteq [n]$ with $|I\bigtriangleup I^*_{S }|\ge 4\gamma t_S$,
  has $\dedit(\ssx_I,\ssx_{I^*_S})\ge 4\gamma t_S$ (\Cref{condx1});
\item[(iii)]
For all $s\in [2:S]$ and intervals $I\subseteq N(I^*_{s})$    
  with $|I\bigtriangleup I^*_{s-1}|\ge 4\gamma t_{s-1}$, it holds that
  $\dedit(\ssx_I,\ssx_{I^*_{s-1}})\ge 4\gamma t_{s-1}$ (\Cref{claim:x-ED-dist});
\item[(iv)] No two subwords of $\ssx_{N(I^*_{1})}$ of length $H$ are the same (\Cref{claim:unique-lcs});
\item[(v)] (\Cref{claimhaha} and \Cref{remark:1}) For all $m\in [M]$, {$|\image^{(m)}(I_S^*)|\ge |I_S^*|-\gamma t_S$} and
  every interval $Q^{m}\subseteq [|\ssy^{(m)}|]$
  of length at most  $(1+2\gamma)t_S$ satisfies
  $$\left|\overline{\source^{(m)}(Q^{m})}\right|\le |Q^{m}|+\gamma t_S;$$
\item[(vi)] (\Cref{claim:D} and \Cref{remark:2}) For all $m\in [M]$ and $s\in [2:S]$, $|\image^{(m)}(I_{s-1}^*)|\ge |I_{s-1}^*|-\gamma t_{s-1}$ and
  every interval $Q^{m}\subseteq \image(N(I^*_s))$ of length at most $(1+2\gamma)t_{s-1}$ satisfies 
  $$\left|\overline{\source^{(m)}(Q^{m})}\right|\le |Q^m|+\gamma t_{s-1};$$
 \item[(vii)] At least $95\%$ of $m\in [M]$ satisfy that $N(I_1^*)\cap D^{(m)}=\varnothing$ (\Cref{claim:Chernoff}).
\end{enumerate}\end{flushleft}

\subsubsection{First stage: Locating a small neighborhood of $\source^\ast(\ell^\ast)$ in each trace
  $\ssy^{(m)}$}

\ignore{







}

\usetikzlibrary{decorations.pathreplacing}
\usetikzlibrary{math}
\tikzmath{\yy=-30; \yx=0; \yys=30; \sh=55; \w=140; }
\begin{figure}[!t]
\centering
\begin{tikzpicture}[d/.style={draw=black,font=\scriptsize}, x=1mm, y=1mm, z=1mm]

  \draw[thick,dotted](0,\yys)--(\w,\yys) node at (-6,\yys)[left]{$\bssy^\ast$};
  \draw[thick,dotted](0,\yx)--(\w,\yx) node at (-6,\yx)[left]{$\bssx$};
  \draw[thick,dotted](0,\yy)--(\w,\yy) node at (-6,\yy)[left]{$\bssy^{(m)}$};

  \draw[d][decorate, decoration={brace}, yshift=+2ex]  (\sh-8,\yys) -- node[above=0.4ex] {$Q^\ast_{s-1}$}  (\sh+8,\yys);

  \fill[fill=red, opacity=0.2](\sh-24,\yys+1)--(\sh+24,\yys+1)--(\sh+24,\yys)--(\sh-24,\yys);
  \draw[d][decorate, decoration={brace,mirror}, yshift=-2ex]  (\sh-24,\yys) -- node[below=0.4ex] {$Q^\ast_{s}$}  (\sh+24,\yys);

  \draw[thick,red](\sh-8,\yys+1)--(\sh+8,\yys+1);


  \fill[fill=red, opacity=0.2](\sh-24,\yx+1)--(\sh+24,\yx+1)--(\sh+24,\yx)--(\sh-24,\yx);
  \draw[d][decorate, decoration={brace}, yshift=2ex]  (\sh-24,\yx) -- node[above=0.4ex] {$I^\ast_s = \overline{\source^\ast(Q^\ast_{s})}$}  (\sh+24,\yx);

  \fill[fill=blue, opacity=0.2](\sh-26,\yx)--(\sh+20,\yx)--(\sh+20,\yx-1)--(\sh-26,\yx-1);
  \draw[d][decorate, decoration={brace,mirror}, yshift=-2ex]  (\sh-26,\yx) -- node[below=0.4ex] {$\overline{\source^{(m)}(Q^m_s)}$}  (\sh+20,\yx);

  \fill[fill=green, opacity=0.2](\sh-40,\yx-1)--(\sh+40,\yx-1)--(\sh+40,\yx-2)--(\sh-40,\yx-2);
  \draw[d][decorate, decoration={brace,mirror}, yshift=-6ex]  (\sh-40,\yx) -- node[below=0.4ex] {$N(I^\ast_s)$}  (\sh+40,\yx);


  \draw[thick,red] (\sh-10,\yx+1)--(\sh+10,\yx+1);
  \draw[thick,blue] (\sh-8,\yx-1)--(\sh+12,\yx-1);




  \fill[fill=blue, opacity=0.2](\sh-24,\yy+1)--(\sh+24,\yy+1)--(\sh+24,\yy)--(\sh-24,\yy);
  \draw[d][decorate, decoration={brace}, yshift=2ex]  (\sh-24,\yy) -- node[above=0.4ex] {$Q^m_s$}  (\sh+24,\yy);

  \fill[fill=green, opacity=0.2](\sh-44,\yy)--(\sh+36,\yy)--(\sh+36,\yy-1)--(\sh-44,\yy-1);
  \draw[d][decorate, decoration={brace,mirror}, yshift=-2ex]  (\sh-44,\yy) -- node[below=0.4ex] {$\image^{(m)}(N(I^\ast_s))$}  (\sh+36,\yy);

  \draw[thick,red](\sh-12,\yy+1)--(\sh+6,\yy+1);

  \draw[thick,blue](\sh-12,\yy-1)--(\sh+8,\yy-1);

\end{tikzpicture}
\caption{First stage of \Align:  The 3 red lines are the subwords $\ssy^{\ast}_{Q^\ast_{s-1}}$, $\ssx_{I^\ast_{s-1} = \overline{\source^\ast(Q^\ast_{s-1})}}$, and $\ssy^{(m)}_{Q^m = \image^{(m)}(I^\ast_{s-1})}$.
  The blue lines are the subwords $\ssx_{\overline{\source^{(m)}(Q^m_{s-1})}}$, and $\ssy^{(m)}_{Q^m_{s-1}}$ found by \Align.
  In the completeness argument, we have $Q^m = \image^{(m)}(I^\ast_{s-1}) \subseteq Q^m_s$ because $I^\ast_{s-1} \subseteq I^\ast_s$ and $\abs{\overline{\source^{(m)}(Q^m_s)} \bigtriangleup I^\ast_s}$ is small.
  In the soundness argument, as $Q^m_{s-1} \subseteq Q^m_s$ and $\abs{\overline{\source^{(m)}(Q^m_s)} \bigtriangleup I^\ast_s}$ is small, we have $I := \source^{(m)}(Q^m_{s-1}) \subseteq N(I^\ast_s)$, which implies $Q^m_{s-1} \subseteq \image^{(m)}(N(I^\ast_s))$.
}
\end{figure}

We prove the following lemma for the first stage of $\Align(\ell^*,\ssy^*,\ssy^{(1)},\ldots,\ssy^{(M)})$ (\cref{alg:line2,alg:line3,alg:line4}):

\begin{lemma} \label{lemma:step-1}
For every $m\in [M]$, 
  the final interval $Q^m_1$ found by \Align in the first stage satisfies 
\begin{equation}\label{eq:last}
\left| \overline{\source^{(m)}( Q_1^m)} \bigtriangleup I_1^*\right|\le 4 \gamma t_1.
\end{equation}
\end{lemma}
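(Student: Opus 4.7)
The plan is to prove by reverse induction on $s = S, S-1, \ldots, 1$ the strengthened statement
\begin{equation} \label{eq:induction-hypothesis}
\left| \overline{\source^{(m)}(Q_s^m)} \bigtriangleup I_s^* \right| \;\le\; 4\gamma t_s
\quad \text{for every } m \in [M],
\end{equation}
which gives \eqref{eq:last} at $s=1$. At each stage I need to show two things: (a) \emph{completeness}, that a valid candidate interval exists so $\Align$ does not abort and fall into the default $\ell^{(m)} = 1$ case; and (b) \emph{soundness}, that any interval $Q_s^m$ actually chosen by $\Align$ satisfies \eqref{eq:induction-hypothesis}.

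For completeness at stage $s$, the natural candidate is $Q^m := \image^{(m)}(I_s^*)$, which by monotonicity of $\image^{(m)}$ is a genuine interval inside $\ssy^{(m)}$ (and, in the inductive step, inside the previous interval $Q_{s+1}^m$: here I will use the hypothesis $|\overline{\source^{(m)}(Q_{s+1}^m)} \bigtriangleup I_{s+1}^*| \le 4\gamma t_{s+1}$ together with the fact that $I_s^*$ occupies roughly the middle third of $I_{s+1}^*$, since $t_{s+1} = 3 t_s$, so the $O(\gamma t_{s+1})$ slack on either side is easily absorbed). Both $\ssw_s^* = \ssy^*_{Q_s^*}$ and $\ssy^{(m)}_{Q^m}$ are obtained from $\ssx_{I_s^*}$ by deleting some positions, and conditions (i), (v) (resp.\ (vi) in the inductive step) bound the number of such deletions by $\gamma t_s$ each. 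By the triangle inequality
\[
  \dedit\bigl(\ssw_s^*,\; \ssy^{(m)}_{Q^m}\bigr) \;\le\; \gamma t_s + \gamma t_s \;\le\; 2\gamma t_s,
\]
so $Q^m$ is a valid candidate and $\Align$ does not fail at stage $s$.

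For soundness at stage $s$, let $Q_s^m$ be the interval picked by $\Align$. Because the acceptance criterion forces $\dedit(\ssw_s^*, \ssy^{(m)}_{Q_s^m}) \le 2\gamma t_s$, we also have $\bigl||Q_s^m| - t_s\bigr| \le 2\gamma t_s$, i.e.\ $|Q_s^m| \le (1+2\gamma)t_s$. Write $I := \overline{\source^{(m)}(Q_s^m)}$. In the inductive step, since $Q_s^m \subseteq Q_{s+1}^m$ and $\overline{\source^{(m)}(Q_{s+1}^m)}$ is contained in $N(I_{s+1}^*)$ by the inductive hypothesis (the $4\gamma t_{s+1}$ slack is far less than the $t_{s+1}$ padding defining $N(\cdot)$), we have $I \subseteq N(I_{s+1}^*)$, which is exactly the hypothesis needed to invoke condition (iii). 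Applying condition (v) or (vi) to $Q_s^m$ gives $\dedit(\ssy^{(m)}_{Q_s^m}, \ssx_I) \le \gamma t_s$, and similarly $\dedit(\ssw_s^*, \ssx_{I_s^*}) \le \gamma t_s$. A three-term triangle inequality then yields
\[
  \dedit\bigl(\ssx_I,\; \ssx_{I_s^*}\bigr) \;\le\; \gamma t_s + 2\gamma t_s + \gamma t_s \;=\; 4\gamma t_s.
\]
Contraposing condition (ii) in the base case $s=S$, or condition (iii) in the inductive step, forces $|I \bigtriangleup I_s^*| \le 4\gamma t_s$, completing the induction.

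The main technical obstacle is purely bookkeeping: verifying that the candidate $\image^{(m)}(I_s^*)$ really sits inside the current search region $Q_{s+1}^m$, and that the chosen $I = \overline{\source^{(m)}(Q_s^m)}$ really sits inside $N(I_{s+1}^*)$ so that condition (iii) applies. Both follow from the geometric fact that $I_s^*$ lies in the middle third of $I_{s+1}^*$ (up to $O(\gamma)$ slack), combined with the inductive hypothesis at level $s+1$, and the fact that $\gamma = 0.01$ is chosen small enough that all the symmetric-difference slacks remain comfortably below the gaps in the telescoping scheme.
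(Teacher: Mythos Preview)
Your proposal is correct and follows essentially the same approach as the paper's proof: reverse induction on $s$ from $S$ down to $1$, with completeness established via the candidate $\image^{(m)}(I_s^*)$ and soundness via the three-term triangle inequality followed by the contrapositive of condition (ii) (base case) or (iii) (inductive step). The paper's proof is structured identically, invoking (i), (v), (vi) exactly where you do; the only difference is presentational---the paper treats the base case $s=S$ and the inductive step $s-1$ separately rather than interleaving them.
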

\begin{proof}
We prove by induction on $s=S,\ldots,1$ that 
\begin{equation}\label{hehe1}
\left| \overline{\source^{(m)}( Q_s^m)}\bigtriangleup I^*_s\right|\le 4 \gamma t_s.
\end{equation}
We start with the base  case $s=S$.
First we establish completeness by showing that there is an interval $Q^{m}\subset [|\ssy^{(m)}|]$ such that 
  $\smash{\dedit(\ssw_S^*,\ssy^{(m)}_{Q^m})\le 2\gamma t_S}$.
To this end, let $Q^m=\image^{(m)}(I^*_S)$, and observe that this is an interval in $[|\ssy^{(m)}|]$. We have
$$
\dedit(\ssw_S^*,\ssy^{(m)}_{Q ^m})\le \dedit(\ssw_S^*,\ssx_{I^*_S})+
  \dedit(\ssy^{(m)}_{Q^m},\ssx_{I_S^*})< 2\gamma t_S,
$$
where we used (i) $|I_S^*|< (1+\gamma)t_S$ to upper bound the first edit distance by $\gamma t_S$,
  and (v) $|Q^m|\ge |I_S^*|-\gamma t_S$ to upper bound the second edit distance by $\gamma t_S$.
Next we establish soundness by showing that any interval $Q_S^m$ picked by $\Align$ satisfies  (\ref{hehe1}).
Let $\smash{I=\overline{\source^{(m)}(Q_S^m)}}$. Then we have
\begin{equation}\label{lastlast}
\dedit(\ssx_I,\ssx_{I_S^*})\le \dedit(\ssx_I,\ssy^{(m)}_{Q_S^m})+
\dedit(\ssy^{(m)}_{Q_S^m},\ssw_S^*)+\dedit(\ssw_S^*,\ssx_{I_S^*})< \gamma t_S+2\gamma t_S
+\gamma t_S=
4\gamma t_S.
\end{equation}
To see that $\smash{\dedit(\ssx_I,\ssy^{(m)}_{Q_S^m})\le \gamma t_S}$ we first notice that 
  $|Q_S^m|\le (1+2\gamma)t_S$ because $\smash{\dedit(\ssy^{(m)}_{Q_S^m},\ssw_S^*)\le 2\gamma t_S}$.
It then follows from item (v) that $\smash{|I|\le |Q_S^m|+\gamma t_S}$, which gives $\smash{\dedit(\ssx_I,\ssy^{(m)}_{Q_S^m})\le \gamma t_S}$.  For the last summand, as in the completeness argument (i) gives that $|I_S^*|< (1+\gamma)t_S$ and hence $\dedit(\ssw_S^*,\ssx_{I^*_S}) \leq \gamma t_S$.  The soundness part then follows from (\ref{lastlast}) and (ii).

With the base case in hand, assume for the inductive step that (\ref{hehe1}) holds for some $s\in [2:S]$. 
We use this to prove it for $s-1$.

The completeness proof is similar to the base case:
let $Q^m=\image^{(m)}(I_{s-1}^*)$. 
It follows from (\ref{hehe1}) on $s$ that $Q^m\subseteq Q_s^m$.
Then 
$$
\dedit(\ssw_{s-1}^*,\ssy^{(m)}_{Q^m})\le \dedit(\ssw_{s-1}^*,\ssx_{I_{s-1}^*})+
  \dedit(\ssy^{(m)}_{Q^m},\ssx_{I_{s-1}^*})\le 2\gamma t_{s-1},
$$
where we used (i) and (vi).

The soundness argument is also similar to the base case. Let $Q_{s-1}^m$ be the interval found by $\Align$  
  and~let $I=\smash{\overline{\source^{(m)}(Q_{s-1}^m)}}$. 
Then 
  $Q_{s-1}^m\subseteq Q_s^m$  and it follows from (\ref{hehe1}) (and the definition
  of $N(\cdot)$) that 
$$
I=\overline{\source^{(m)}(Q_{s-1}^m)}\subseteq \overline{\source^{(m)}(Q_s^m)}\subseteq N(I_s^*)
$$
and thus, $Q_{s-1}^m\subseteq \image(N(I_s^*))$. We also have 
  $|Q_{s-1}^m|\le (1+2\gamma)t_{s-1}$ given that $y^{(m)}_{Q_{s-1}^m}$ has edit distance at most $2\gamma t_{s-1}$
  from $\ssw_{s-1}^*$ (which is of length $t_{s-1}$).
As a result, analogous to (\ref{lastlast}), we have
\begin{equation} \label{eq:lastlastlast}
\dedit(\ssx_I,\ssx_{I_{s-1}^*})\le \dedit(\ssx_I,\ssy^{(m)}_{Q_{s-1}^m})+
\dedit(\ssy^{(m)}_{Q_{s-1}^m},\ssw_{s-1}^*)+\dedit(\ssw_{s-1}^*,\ssx_{I_{s-1}^*})<  
4\gamma t_{s-1},
\end{equation}
where $\dedit(\ssx_I,\ssy^{(m)}_{Q_{s-1}^m})\le \gamma t_{s-1}$ follows as in the base case but now using  (vi) rather than (v).
The soundness follows from (\ref{eq:lastlastlast}) and (iii), and the inductive step is completed.
\end{proof}

\subsubsection{Second stage: 
Determining a consensus location close to $\source^\ast(\ell^\ast)$.}  \label{sec:second-stage}

We finish the proof of \Cref{maindetlemma} with the following lemma for the second stage of $\Align$:

\begin{lemma}
Locations $\ell^{(1)},\ldots,\ell^{(M)}$ returned by $\Align$ satisfy the following two conditions:
(A) at least 90\% of $\source^{(m)}(\ell^{(m)})$, $m \in [M]$, agree on the same location $i \in [n]$ and
(B) the consensus location $i$ satisfies 
\begin{equation}\label{consensus2}
  \source^*(\ell^*) - 2H\le i \leq \source^*(\ell^*) .
\end{equation}
\end{lemma}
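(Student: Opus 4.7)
The plan is to combine the three key ingredients from the probabilistic analysis: (vii) says a $0.95$-fraction of traces have no deletion in $N(I^*_1)$; \Cref{lemma:step-1} says each $Q^m_1$ is ``source-aligned'' with $I^*_1$ up to symmetric difference $4\gamma t_1$; and (iv) gives uniqueness of subwords of length $\ge H$ in $\bssx_{N(I^*_1)}$.

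First, let $G \subseteq [M]$ denote the set of indices $m$ with $N(I^*_1) \cap D^{(m)} = \varnothing$, so by (vii) $|G| \ge 0.95 M$. For each $m \in G$, no bit of $\bssx_{N(I^*_1)}$ has been deleted in $\bssy^{(m)}$, so writing $I^m_1 := \overline{\source^{(m)}(Q^m_1)}$, we have $\ssy^{(m)}_{Q^m_1} = \bssx_{I^m_1}$. Since $I^m_1$ and $I^*_1$ are intervals with $|I^m_1 \bigtriangleup I^*_1| \le 4\gamma t_1 \ll t_1$, they must overlap and their left/right endpoints each shift by at most $4\gamma t_1$. Hence $I^{**} := \bigcap_{m \in G} I^m_1$ is an interval of length at least $|I^*_1| - 8\gamma t_1 \ge (1 - 8\gamma) t_1 \ge 0.9 t_1$, and $\bssx_{I^{**}}$ appears as a subword of $\ssy^{(m)}_{Q^m_1}$ for every $m \in G$. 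This witnesses the existence of a valid string in \cref{line:longest-common-subword}, so the algorithm does find \emph{some} $\ssw$ of length at least $0.9 t_1$ that appears in $\ssy^{(m)}_{Q^m_1}$ for at least $95\%$ of the indices. Let $G'$ denote this set of indices, so $|G'| \ge 0.95 M$ and therefore $|G \cap G'| \ge 0.9 M$.

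Next I will establish (A). For any $m \in G \cap G'$, since $\ssy^{(m)}_{Q^m_1} = \bssx_{I^m_1}$ with no deletions and $I^m_1 \subseteq N(I^*_1)$ (using the symmetric difference bound and the definition of $N(\cdot)$), the string $\ssw$ occurs as a subword of $\bssx_{N(I^*_1)}$, at some position that we may denote $i_m$. The algorithm sets $\ell^{(m)}$ to be the position in $\ssy^{(m)}$ where this occurrence begins, so $\source^{(m)}(\ell^{(m)}) = i_m$. Now $|\ssw| \ge 0.9 t_1 = 0.9(2H+1) \ge H$, and by (iv) no two subwords of $\bssx_{N(I^*_1)}$ of length $H$ agree, hence no two subwords of length $|\ssw|$ agree either. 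Therefore $i_m$ takes the same value $i$ for all $m \in G \cap G'$, yielding $90\%$ consensus.

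For (B), I will bound $i$ from both sides using the same endpoint-closeness argument. For any $m \in G \cap G'$, the occurrence of $\ssw$ inside $\bssx_{I^m_1} = \bssx_{[a_m,b_m]}$ gives $a_m \le i$ and $i + |\ssw| - 1 \le b_m$. The symmetric difference bound yields $a_m \le a_* + 4\gamma t_1$ and $b_m \ge b_* - 4\gamma t_1$, where $[a_*, b_*] = I^*_1 = [\source^*(\ell^* - H), \source^*(\ell^* + H)]$. Condition (i) gives $|I^*_1| \le (1+\gamma) t_1$, which implies $\source^*(\ell^* + H) - \source^*(\ell^*) \le (1 + 3\gamma) H$ and symmetrically for the left side. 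Substituting $|\ssw| \ge 0.9 t_1$ and $t_1 = 2H + 1$ into the inequalities $a_m - 4\gamma t_1 \le \source^*(\ell^* - H)$... etc., and simplifying (with $\gamma = 0.01$) gives $\source^*(\ell^*) - (1 + O(\gamma)) H \le i \le \source^*(\ell^*) - (0.8 - O(\gamma)) H$, both of which are within $[\source^*(\ell^*) - 2H, \source^*(\ell^*)]$ for sufficiently large $H$ (which holds since $H \ge K$).

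The only real obstacle is keeping track of constants in step~three; the main ideas (matching the $95\%$'s to get $90\%$ consensus, and leveraging uniqueness (iv) to pin $i$ down exactly) are clean, and the numerical margins built into $\gamma = 0.01$ and $t_1 = 2H+1$ are set up precisely to make the edge-of-interval calculation go through.
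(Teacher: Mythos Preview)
Your proof is correct and follows essentially the same approach as the paper's (define $G$ via (vii), use \Cref{lemma:step-1} to get $\ssy^{(m)}_{Q^m_1}=\bssx_{I^m_1}$ for $m\in G$, intersect the $I^m_1$'s for completeness, then intersect $G$ with the $95\%$ set and apply (iv) for soundness). One small fix in part (B): for the lower and upper bounds on $i$ you need $a_m\ge a_*-4\gamma t_1$ and $b_m\le b_*+4\gamma t_1$ respectively, not the opposite directions you wrote; both follow equally from the symmetric-difference bound, and with them your arithmetic goes through.
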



\tikzmath{\yyd=-45; \yyc=-35; \yyb=-25; \yya=-15; \yx=0; \yys=20; \sh=50; \w=140; }
\begin{figure}[!t]
\centering
\begin{tikzpicture}[d/.style={draw=black,font=\scriptsize}, x=1mm, y=1mm, z=1mm]

  \draw[thick,dotted](0,\yys)--(\w,\yys) node at (-6,\yys)[left]{$\bssy^\ast$};
  \draw[thick,dotted](0,\yx)--(\w,\yx) node at (-6,\yx)[left]{$\bssx$};
  \draw[thick,dotted](0,\yya)--(\w,\yya);
  \draw[thick,dotted](0,\yyb)--(\w,\yyb);
  \draw[thick,dotted](0,\yyc)--(\w,\yyc);
  \draw[decorate, decoration={brace,mirror}, xshift=-4ex]  (0,\yya+1) -- node[left=1ex] {$\bssy^{(m)}, m \in G$}  (0,\yyc-1);

  \draw[d][decorate, decoration={brace}, yshift=+1ex]  (\sh-20,\yys) -- node[above=0.4ex] {$Q^\ast_1$}  (\sh+20,\yys);
  \draw[thick,red](\sh-20,\yys+1)--(\sh+20,\yys+1);


  \fill[fill=red,opacity=0.2](\sh-30,\yx+1)--(\sh+24,\yx+1)--(\sh+24,\yx)--(\sh-30,\yx);
  \draw[d][decorate, decoration={brace}, yshift=+5ex]  (\sh-30,\yx) -- node[above=0.4ex] {$N(I^\ast_1)$}  (\sh+24,\yx);

  \draw[thick,red](\sh-24,\yx+1)--(\sh-14,\yx+1);
  \draw[thick,red](\sh-12,\yx+1)--(\sh-5,\yx+1);
  \draw[thick,red](\sh-4,\yx+1)--(\sh+4,\yx+1);
  \draw[thick,red](\sh+5,\yx+1)--(\sh+10,\yx+1);
  \draw[thick,red](\sh+11,\yx+1)--(\sh+18,\yx+1);
  \draw[d][decorate, decoration={brace}, yshift=+1ex]  (\sh-24,\yx) -- node[above=0.4ex] {$I^\ast_1 = \overline{\source^\ast(Q^\ast_1)}$}  (\sh+18,\yx);

  \draw[thick,blue](\sh-14,\yx-1)--(\sh+26,\yx-1);

  \draw[thick,green](\sh-26,\yx-2)--(\sh+14,\yx-2);

  \draw[thick,purple](\sh-20,\yx-3)--(\sh+20,\yx-3);

  \draw[thick,orange](\sh-14,\yx-4)--(\sh+14,\yx-4);


  \draw[thick,blue](\sh-28,\yya+1)--(\sh+12,\yya+1);

  \draw[thick,green](\sh-17,\yyb+1)--(\sh+23,\yyb+1);

  \draw[thick,purple](\sh-20,\yyc+1)--(\sh+20,\yyc+1);

\end{tikzpicture}
\caption{Second stage of \Align:
  The red line is the subword $\ssy^\ast_{Q^\ast_1} = \ssw^\ast_1$ and it appears as the disconnected red segments in $\ssx_{\source^\ast(Q^\ast_1)}$.
  The blue, green and purple lines are the subwords $\ssy^{(m)}_{Q^m_1}, m\in G$ and $\ssx_{I^m = \overline{\source^{(m)}(Q^m_1)}}, m \in G$.
  Since $m \in G$, we have that $N(I^\ast_1) \cap D^{(m)} = \emptyset$, and so $\smash{\ssy^{(m)}_{Q_1^m}=\ssx_{I^m}}$.
  The orange line is the common subword $\ssx_{\cap_{m\in G} I^m}$ that appears in the three subwords $\ssy^{(m)}_{Q^m_1}$.
}
\end{figure}

\begin{proof}
It follows from \Cref{lemma:step-1} that for every $m \in [M]$, the interval $Q_1^m$ satisfies (\ref{eq:last}).
Let $G$ be the set of $m\in [M]$ such that $N(I_1^*)\cap D^{(m)}=\varnothing$; by (vii) we have that $|G| \geq 0.95 M.$ 
Let $\smash{I^m =\overline{\source^{(m)}(Q^m_1)}}$. 
It follows from (\ref{eq:last}) that every $m\in [M]$ 
  satisfies $\smash{|I^m\bigtriangleup I_1^*|\le 4\gamma t_1}$ 
  and thus $I^m\subseteq N(I_1^*)$. 
As $N(I_1^*)\cap D^{(m)}=\varnothing$ 
  we have $\smash{\ssy^{(m)}_{Q_1^m}=\ssx_{I^m}}$ 
  (and $\source^{(m)}$ induces a bijection between $Q_1^m$ and $I^m$)
  for each $m\in G$.
Moreover, $$ {\left|\bigcap_{m\in G} I^m\right|\ge  
(1-2\cdot 4\gamma) t_1\ge 0.9t_1},$$
which implies the completeness part: $\ssw:= \ssx_{\cap_{m\in G} I^m}$
  appears as a subword
  in at least $95\%$ of $\smash{\ssy^{(m)}_{Q^m_1}}$ (i.e., every $m\in G$)
  and has length at least $0.9t_1$.
  
Finally, we prove the soundness part: Assume that $\ssw$ is a string of length at least $0.9t_1$
  and appears as a subword in at least $95\%$ of $\smash{\ssy^{(m)}_{Q^m_1}}$, $m=1,\dots,M$.
Then at least $90\%$ of $m\in [M]$ have~$m\in G$ and 
  contain~$\ssw$~as~a subword.
Let $G'\subseteq G$ denote the set of such $m\in G$.
It follows from (iv)  
  that $\source^{(m)}(\ell^{(m)})$ are the same for all $m\in G'$, which consists
  of at least $90\%$ of $[M]$.
To prove (\ref{consensus2}), we take any $m\in G'$ and have that $\source^{(m)}(\ell^{(m)})$ is at least $0.9t_1$
  away from
  the right end of $I^m$; on the other hand,
  $\source^{*}(\ell^*)$ is no more than $H+\gamma t_1$ away from the right end of $I_1^*$, using 
  $|I_1^*|< (1+\gamma)t_1$.
Given that the right ends of $I^m$ and $I_1^*$ differ by no more than $4\gamma t_1$,
  we have  
$
\source^{(m)}(\ell^{(m)})\le \source^*(\ell^*)
$. 
Similarly, $\source^{(m)}(\ell^{(m)})$ is at least as large as the left end of $I^m$
  but $\source^*(\ell^*)$ is similarly no more than $H+\gamma t_1$ away from the left end of $I_1^*$.
Given that their left ends differ by no more than $4\gamma t_1$, 
  we have that $\source^{(m)}(\ell^{(m)})-\source^*(\ell^*)\le 4\gamma t_1+H+\gamma t_1$, which is at most $ 2H$ by (\ref{eq:gamma-tau}) and the definition of $t_1 = 2H+1.$
\end{proof}

\section{The \BMA~algorithm and proof of \Cref{thm:bma}} \label{sec:bma}

The goal of this section is to prove \Cref{thm:bma} which is restated below in full detail.
As mentioned in the introduction, the \BMA~algorithm (which stands for Bitwise Majority Alignment) was first described and analyzed in \cite{BKKM04}. 
Recall the two parameters $\delta$ (deletion rate) and $M$ (number of traces)
  that satisfy (\ref{eq:assumptions}) for some sufficiently large 
  constant $K$, and the positive integer 
  $H\le 2\log n$ given in $(\ref{eq:H})$. 
Let us set some parameters: define
\begin{equation} \label{eq:LGR}
L:=8H, \quad G := L/2, \quad \text{~and~}R:=L2^{0.01L}.
\end{equation}
We prove in this section that \BMA reconstructs any source string 
  $\smash{\tsx\in \{0,1\}^R}$ exactly with $M$ traces from $\Del_\delta(\tsx)$ with high probability\footnote{We use $\tsx$ instead of $\mathsf{x}$ for
    the source string because later in the next section the role of $\tsx$ will be played by various different substrings of $\sx$ of length $R$.},
  when $\tsx$ does not contain any ``\emph{long deserts}.''
We now recall the definition of deserts from \cite{CDLSS20lowdeletion}.

\begin{definition}
A string is said to be a 
  \emph{$k$-desert} for some $k\ge 1$ if it is the prefix of $\sss^{\infty}$ for some string $\sss\in \{0,1\}^k$.
We say a string is a \emph{long desert} if it is a $k$-desert of length $L$ for some $k\le G$. 
\end{definition}



The algorithm \BMA is described in \Cref{alg:BMA},
  and its input consists of $M$ traces $\sz^{(1)},\ldots,\sz^{(M)}$ of 
  $\tsx$.\footnote{Note that in $\MAIN$, we run $\BMA$ on $M+1$ strings obtained
    from $\ssy^*,\ssy^{(1)},\ldots,\ssy^{(M)}$ instead of $M$ strings.
  In its analysis, however, we pretend the string from $\ssy^*$ is not present and 
    focus on what happens when running $\BMA$ on the other $M$ strings only.
  This is why we focus on analyzing $\BMA$ running on $M$ strings in this section.} 
We restate the main theorem of this section:

\begin{theorem} [Detailed statement of \Cref{thm:bma}]  \label{thm:bma-easy}
Let $\tsx\in \{0,1\}^R$ be a string that does not contain any
  long desert.
Let $\bssz^{(1)},\ldots,\bssz^{(M)}\sim \Del_\delta(\tsx)$ be independent traces.
With probability at least $1 - 2^{-H}$, 
  \BMA returns exactly $\tsx$ and~in every round $t\in [R]$, the majority is
  reached by at least $90\%$ of the $M$ bits that are pointed to in $\bssz^{(1)},\ldots,\bssz^{(M)}$. 
\end{theorem}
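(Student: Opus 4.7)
The plan is to inductively track the ``alignment state'' of each trace and show that under the no-long-desert hypothesis, the bulk of traces stay correctly aligned across all $R$ rounds. For each trace $m \in [M]$ and round $t$, let $p^{(m)}_t$ denote the pointer into $\bssz^{(m)}$ at the start of round $t$, and let $\sigma^{(m)}(p)$ denote the source position in $\tsx$ from which $\bssz^{(m)}_p$ originated. Define the \emph{offset} $d^{(m)}_t := \sigma^{(m)}(p^{(m)}_t) - t \geq 0$; call trace $m$ \emph{aligned} at round $t$ if $d^{(m)}_t = 0$, in which case the bit at the pointer equals $\tsx_t$. The core invariant I would maintain is that at every round $t \in [R]$, at most $0.05 M$ of the $M$ traces are misaligned. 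This invariant alone implies both conclusions of the theorem, since the at-least-$0.95M$ aligned traces all contribute $\tsx_t$ to the majority vote, which guarantees that \BMA outputs $\tsx$ bit by bit and that every majority has a margin exceeding $90\%$.

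To establish the invariant I would analyze one round of \BMA conditional on it holding at the previous round. The majority bit at round $t$ equals $\tsx_t$, so each aligned trace has its pointer advanced, and the new offset $d^{(m)}_{t+1}$ equals the length of the deletion run in $\bssz^{(m)}$ starting at source position $t+1$; this is $0$ with probability $1-\delta$ and at least $k$ with probability $\delta^k$. For a misaligned trace with offset $d \in [1,G]$, the bit at its pointer is $\tsx_{t+d}$: if $\tsx_{t+d} \neq \tsx_t$ the pointer stays and the offset strictly decreases to $d-1$; if $\tsx_{t+d} = \tsx_t$ the pointer advances and the offset can only grow if a subsequent deletion run is hit. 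The decisive observation is that if a trace remained misaligned at some fixed offset $d \leq G$ for $L$ consecutive rounds, then $\tsx_{[t:t+L-1]}$ would be a $d$-desert of length $L$, contradicting the no-long-desert hypothesis on $\tsx$. Hence every small-offset misalignment is resolved within at most $L = 8H$ rounds.

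Combining the per-round analysis with a ``queue'' argument: new misalignments appear at rate at most $\delta$ per trace per round, and each small-offset one persists for at most $L$ rounds, so the expected number of misaligned traces with offset in $[1,G]$ at any round is at most $O(\delta L M) = O(\delta H M) \leq M/K^{3/2}$, where the last bound uses \Cref{eq:bound-on-deltaH}. Large-offset misalignments ($d > G$) require a deletion run of length greater than $G = 4H$ at some source position, which has probability $\delta^{4H} = 2^{-\Omega(H\log(1/\delta))}$ per trace per position; summing over $R \cdot M$ trace-position pairs still leaves total failure probability well below $2^{-H}$. Since the misalignment indicators at a fixed round depend only on the individual traces' deletion patterns and are thus independent across $m$, a Chernoff bound shows that the number of misaligned traces exceeds $0.05M$ with probability $\exp(-\Omega(M))$, and a union bound over all $R = L\cdot 2^{0.01 L}$ rounds absorbs into the target $2^{-H}$ failure probability since $M$ is much larger than $H$ (using $M \geq K^2$ and $H \leq M\log(1/(\delta M))/K$).

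The main obstacle I anticipate is the bookkeeping for compound events: bounding the probability that an already-misaligned trace collides with a second long deletion run and escalates to offset $d > G$, or that a cluster of rare events in adjacent rounds pushes the misalignment count briefly above $0.05M$ and breaks the inductive argument. The quantitative strengthening over \cite{CDLSS20lowdeletion} comes from the tight use of \Cref{eq:bound-on-deltaH}, which exploits the regime $\delta M \leq 1/K$ to make $\delta H$ significantly sub-constant; getting the constants so that both the $\delta H M$ expected count and the tail of the offset distribution are each at most $0.01M$ is the quantitative heart of the proof.
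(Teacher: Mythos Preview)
Your high-level picture --- track per-trace offsets, use the no-long-desert hypothesis to argue offsets decay, then apply concentration across traces and a union bound over rounds --- matches the paper's. But two of the quantitative steps you rely on are actually false, and fixing them is where the real work lies.

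First, the claim ``every small-offset misalignment is resolved within at most $L$ rounds'' does not follow from the desert argument you give. What the constant-offset observation shows is that if the offset stays at a \emph{fixed} value $d\le G$ for $L$ consecutive rounds you get a $d$-desert of length $L$. With no deletions this means the offset drops by at least one over each window of $L$ rounds, so an offset of $d$ needs up to $dL$ deletion-free rounds to clear, not $L$; and with intervening deletions the offset can bounce around in $[1,G]$ indefinitely. Your queue heuristic (arrival rate $\delta$, service time $\le L$) therefore does not bound the stationary misalignment probability. The paper handles this by sampling $\idist$ every $L$ steps, coupling it to an explicit dominating Markov chain on $\Z_{\ge 0}$ whose transitions are (\ref{transition1})--(\ref{transition2}), and proving a stationary-type tail bound (\Cref{claimhehe2}) that yields $\Pr[\idist(t^*)>0]\le 2\delta L$ for \emph{every} $t^*$. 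That per-trace, per-round bound is the substantive lemma; the queue picture is only an intuition for why it should hold.

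Second, and more fatally, your union bound cannot close. You assert that the Chernoff tail is $\exp(-\Omega(M))$ and that this absorbs the $R$ rounds ``since $M$ is much larger than $H$.'' But $H=(M/K)\log(1/(\delta M))$, so $H\gg M$ whenever $\log(1/(\delta M))\gg K$ --- for instance at $\delta=n^{-2}$, $M=K^2$ one has $H=\Theta(K\log n)$ while $M=K^2$ is constant. Then $R=L\,2^{0.01L}=2^{\Theta(H)}$ swamps $e^{-\Omega(M)}$. The fix is that once you know each trace is misaligned with probability at most $2\delta L$ (which is exactly the lemma above), the correct tail is $\binom{M}{0.1M}(2\delta L)^{0.1M}\le 2^{M}(2\delta L)^{0.1M}$, and since $\delta L=O(\delta H)\le\sqrt{\delta M}/K$ this is at most $2^{-\Omega(HK)}$, which \emph{does} absorb $R\cdot 2^{-\Omega(HK)}\le 2^{-H}$. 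So the strength of the bound comes from the smallness of $\delta L$, not from $M$ alone.

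A smaller point: the indicator that trace $m$ is misaligned at round $t$ is \emph{not} a function of $\bD^{(m)}$ alone in the actual \BMA, because pointer updates depend on the majority vote of all traces. The paper resolves this by defining an ``ideal'' process in which the vote is replaced by the true bit $\tsx_t$; the ideal offsets are then genuinely independent across $m$, and one argues that if the ideal process has $\ge 0.9M$ aligned traces at every round then the actual process coincides with it. You implicitly assume this decoupling but never set it up.
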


Similar to the previous section, we write $D^{(m)} \subseteq [R]$ to
  denote the set of positions deleted in $\tsx$ to obtain $\sz^{(m)}$,
  and use them to define $\source^{(m)}(\cdot)$; the only difference is that 
  we set $$\source^{(m)}\bigl(|\sz^{(m)}|+\ell\bigr)=R+\ell$$ for every $m\in [M]$ and $\ell\ge 1$
  since the pointer into $\ssz^{(m)}$ may move beyond $\sz^{(m)}$ into the padded $\ast$'s
  (this can be viewed as adding $\ast$'s to the 
  end of the unknown $\tsx$ which are never deleted and are where the $\ast$'s at the end of $\sz^{(m)}$ come from).
As introduced in \Cref{alg:BMA}, let $\curr^{(m)}(t)$ denote the location of the pointer of $\sz^{(m)}$ at the start of the $t$-th step of \BMA.
In addition, let
$$\pos^{(m)}(t) := \source^{(m)} \bigl(\curr^{(m)}(t)\bigr)\quad \text{and}\quad 
\dist^{(m)}(t) := \pos^{(m)}(t) - t.$$
Informally, $\dist^{(m)}(t)$ captures the number of positions in $\tsx$ that the pointer into $\ssz^{(m)}$ has gotten ``ahead of where it should be.’’

To prove \Cref{thm:bma-easy}, it suffices to show that with high probability,  for every $t^*\in [R]$ it holds that
  $\dist^{(m)}(t^*)=0$ for at least $90\%$ of $m\in [M]$.
We will analyze the behavior of $\{\dist^{(m)}(\cdot)\}_{m \in [M]}$  over~random traces $\smash{\bssz^{(1)},\ldots,\bssz^{(M)}\sim\Del_\delta(\tsx)}$. 
The high level goal of the analysis is to show that for each $m\in [M]$,
  the sequence of random variables $\dist^{(m)}(1),\ldots,\dist^{(m)}(R)$ are nonnegative
  and \emph{negatively drifted} (i.e., $\dist^{(m)}(t)$ tends to decrease as $t$ grows).  

Note that $\{\dist^{(m)}(\cdot)\}_{m \in [M]}$ are not independent 
  over $m$ due to correlations from the consensus $\bsw$ they produce together and thus this ensemble of random variables can be difficult to analyze. To ease the~analysis we introduce a new set of random variables denoted by $\smash{\idist^{(m)}(\cdot)}$ for each ${m \in [M]}$. They are identical to $\dist^{(m)}(\cdot)$
with one key difference: in Step~4 of $\smash{\BMA}$, we set $\sw_t = \tsx_t$ instead of the majority of the bits. Note that this makes $\smash{\{\idist^{(m)}(\cdot)\}_{m \in [M]}}$ independent over $m$ and in fact, identically distributed.~Hence, it suffices to analyze any one of them which we denote by $\idist(\cdot)$ over the draw of $\bssz\sim \Del_\delta(\tsx)$; we let $\bD$ denote the set of positions deleted in $\bssz$.
We define $\icurr(\cdot)$ and $\ipos(\cdot)$ similarly. 
  
The following is a key technical lemma.


\begin{lemma} \label{lemma:rw-easy}
For every $t^*\in [R]$, $ \idist (t^*) = 0$ with probability at least $1- 2\delta L $ over $\bssz\sim \Del_\delta(\tsx)$.
\end{lemma}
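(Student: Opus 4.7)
The plan is to identify a high-probability event $E$ on the deletion set $\bD$ that deterministically implies $\idist(t^*)=0$, and then bound $\Pr[E^c] \le 2\delta L$. The natural candidate is the event that no position of $\tsx$ in an appropriate window of size at most $2L$ near $t^*$ is in $\bD$; for instance $E = \{\bD \cap [t^*-L+1, t^*+L] = \varnothing\}$ gives $\Pr[E^c]\le 2\delta L$ directly by a union bound, since each position is deleted independently with probability $\delta$.

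The heart of the proof is to show that $E$ forces $\idist(t^*)=0$. Recall the one-step dynamics: at time $t$, we compare $\bssz_{\icurr(t)}=\tsx_{\ipos(t)}$ with $\tsx_t$, i.e.\ we check whether $\tsx_t=\tsx_{t+\idist(t)}$. On a match, the pointer advances and $\idist$ increases by the length of the immediately following run of deleted positions in $\tsx$; on a mismatch, the pointer stays and $\idist$ decreases by $1$. Under $E$, no deleted positions are encountered in the active stretch of time, so $\idist$ can only stay constant (on matches) or decrease by $1$ (on mismatches).

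The no-long-desert hypothesis then controls runs of matches. If $\idist$ remains at a constant value $v \le G$ for $\ell$ consecutive steps in the window, then $\tsx_t=\tsx_{t+v}$ for $\ell$ consecutive values of $t$, which forces $\tsx$ to contain a $v$-periodic substring of length $\ell+v$. Since $v\le G$, the no-long-desert assumption rules out length $\ge L$, so $\ell+v<L$, bounding each ``block'' of $\idist$ at value $v$. Summing these per-value bounds over $v$ shows that $\idist$ must strictly descend through its values and hit $0$ before time $t^*$; once at $0$ it stays at $0$, since $\idist(t)=0$ produces a match and (under $E$) no jump follows.

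The main obstacle is ensuring that the block-length argument actually applies, i.e.\ that $\idist$ at the entry to the no-deletion window is already bounded by $G$, so that every intermediate value it passes through satisfies the desert constraint. The slack between $L$ and $G=L/2$ is what makes this work: with a window of size $2L$ and the per-block constraints $\ell\le L-v-1$ for $v\le G$, the total budget of $L$ steps available for descent is tight enough that $\idist$ is forced to $0$ but loose enough to allow some initial value $\le G$. Handling this entry condition---e.g.\ by showing that a deletion-free window on both sides of $t^*$ already caps $\idist$ appropriately---is the most delicate bookkeeping step, and is exactly what pins down the factor of $2$ in the bound $1-2\delta L$.
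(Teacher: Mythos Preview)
Your approach has a genuine gap that cannot be repaired with local bookkeeping. The event $E=\{\bD\cap[t^*-L+1,t^*+L]=\varnothing\}$ does \emph{not} deterministically force $\idist(t^*)=0$, because the entry value $\idist(t^*-L+1)$ is determined by deletions \emph{before} the window and can exceed $G$. Concretely: take $\tsx=(0^G1)^{R/(G+1)}$, which has period exactly $G+1$ and therefore contains no long desert (any length-$L$ subword has one or two $1$'s spaced $G+1$ apart, ruling out $k$-periodicity for every $k\le G$). Take $t^*=2L$ and delete exactly positions $1,\dots,G+1$. Then $E$ holds (the deleted block lies entirely to the left of the window), yet $\idist(1)=G+1$ and, since $\tsx$ is $(G+1)$-periodic and no further deletions occur, every step is a match and $\idist(t)\equiv G+1$ for all $t$; in particular $\idist(t^*)=G+1>0$. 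So ``deletion-free window on both sides of $t^*$'' does not cap the entry value, contrary to your final paragraph.

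There is a second, independent problem: even granting an entry value $d\le G$, your budget arithmetic is off by a full order. The no-long-desert hypothesis gives $\ell_v\le L-v$ for each level $v\le G$, but summing over $v=1,\dots,d$ yields $\sum_{v=1}^d(L-v)=\Theta(L^2)$ when $d=\Theta(G)$, not $L$. So an $L$-step window to the left of $t^*$ is far from enough to guarantee descent to $0$, and enlarging the window to $\Theta(L^2)$ would destroy the target probability bound $1-2\delta L$.

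The paper's proof is entirely different and does not attempt to find a single deterministic implication. It looks at $\idist$ only at the $L$-spaced times $t_0,t_0+L,\dots,t_0+sL=t^*$, observes that in each length-$L$ block the increment is stochastically dominated by (a shifted sum of $L$ geometric variables) and, crucially, that whenever the current value is at most $G$ and the block sees no deletions the value must strictly decrease (this is where the no-long-desert hypothesis enters, once per block). This yields a dominating Markov chain $\bX_0,\dots,\bX_s$ with explicit transitions (\Cref{transition1}--\Cref{transition2}), for which one proves by induction the tail bounds \Cref{eqeqeq111}--\Cref{eqeqeq222}; these give $\Pr[\bX_s=0]\ge 1-2\alpha$ with $\alpha=\delta L$. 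The point is that the drift-to-zero is amortized over all $s\le 2^{0.01L}$ blocks, not extracted from a single deletion-free window.
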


We first use \Cref{lemma:rw-easy} to prove \Cref{thm:bma-easy}.

\begin{proof}[Proof of \Cref{thm:bma-easy} using \Cref{lemma:rw-easy}.]
Let $\bssz^{(1)},\ldots,\bssz^{(M)}\sim \Del_\delta(\tsx)$.
For each $t \in [R]$, let $E_t$ be the event that 
%
$$\sum_{m\in [M]} \mathbf{1}\left[\idist^{(m)}(t)=0\right] \ge 0.9M$$
for each $t\in [R]$.
When the event $E_t$ holds for every $t\in [R]$, 
  we have $\bssw=\tsx$ by an induction on $t$.
This implies that the two sets of random variables ($\smash{\dist^{(m)}(\cdot)}$
  and $\smash{\dist^{(m)}_{\mathsf{ideal}}(\cdot)}$) are indeed identical,
  which in turn implies for every $t \in [R]$, 
  $\dist^{(m)}(t)=0$ for at least $90\%$ of $m\in [M]$.

As a result it suffices to understand the probability of $E_t$.
Given that these random variables are independent, it follows from \Cref{lemma:rw-easy} and \Cref{eq:H} that for every $t\in [R]$:
  \[
    \Pr\bigl[E_t\bigr] \ge 1 - 2^M\cdot \bigl(2\delta L \bigr)^{0.9M}
    \ge 1-\left(\frac{48\delta M}{K}\log \frac{1}{\delta M}\right)^{\frac{0.9HK}{\log (1/(\delta M))}}
    \ge 1-2^{-0.45 HK}.
  \]
where the last inequality uses $$\frac{48\delta M}{K}\log \left(\frac{1}{\delta M}\right)\le \sqrt{\delta M}$$ which holds when $K$ is sufficiently large.
  It follows from a union bound  (using $R=L2^{0.01L}$) that
$$
\Pr\bigl[E_t \text{~holds for all~}t\in [R]\bigr] \ge 1- R\cdot 2^{-0.45 HK}\ge 1-2^{-H} 
$$
by setting $K$ to be sufficiently large.
This finishes the proof of the theorem.
\end{proof}



\begin{figure}[t]
  \centering
\setstretch{1.2}
  \begin{algorithm}[H]
    \caption{Algorithm {\BMA}}\label{alg:BMA}
		\DontPrintSemicolon
		\SetNoFillComment
		\KwIn{A multiset
		$\{\sz^{(1)},\ldots,\sz^{(M)}\}$
of $M$ strings} 
		\KwOut{Either $\eps$, the empty string, or a string $\ssw\in \{0,1\}^{R}$}
		For each $m\in [M]$, concatenate $R$ many $\ast$'s to the end of $\sz^{(m)}$\\
		Set $t=1$ and $\current^{(m)}(t)=1$ for each $m\in [M]$\\
\While{$t\le R$}{
Set $\sw_t \in \{0,1,\ast\}$ to be the majority of the $M$ symbols $\smash{\sz^{(m)}_{\current^{(m)}(t)}}$, $m\in [M]$\\
For each $m\in [M]$, set 
\[
\current^{(m)}(t+1)=\begin{cases}
\current^{(m)}(t) + 1 & \text{\  if $\sz^{(m)}_{\current^{(m)}(t)} = \sw_t$}\\[-0.3ex]
\current^{(m)}(t) & \text{\ otherwise}\end{cases}
\]

Increment $t$
}  
\Return $\sw$ if $\sw$ does not contain any $*$ (so $\sw\in \{0,1\}^R$) or $\eps$ if $\sw$ contains at least one $*$
\end{algorithm}\caption{The Algorithm $\BMA$}
\label{figg:BMA}
\end{figure}

\subsection{Proof of \Cref{lemma:rw-easy}}

In the rest of the section we prove \Cref{lemma:rw-easy}.
We start with three simple claims (which may also be found in \cite{CDLSS20lowdeletion});
these claims hold for any trace $\sz$ (and deletions $D$):
\begin{claim} \label{claim:dist}
For each $t\in [R]$, letting $b$ be the \emph{(}${\ipos (t)}$\emph{)}-th
  bit of $\tsx$, we have
  \begin{flushleft}\begin{enumerate}
    \item If $\tsx_t\ne b$, then $\idist (t+1)=\idist (t)-1$.
    \item If $\tsx_{t}=b$, then $\idist (t+1)=\idist (t)+\ell$, where 
        $\ell$ is the nonnegative integer with 
\begin{equation}\label{eq:nextbit}
\ipos(t)+1, \ldots, \ipos(t)+\ell\in D\quad\text{and}\quad
        \ipos(t)+\ell+1\notin D.
\end{equation}
  \end{enumerate}\end{flushleft}
\end{claim}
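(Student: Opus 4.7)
The plan is to prove both parts by unwinding the definitions of $\icurr$, $\ipos$, and $\idist$ together with the pointer-update rule of $\BMA$ in its idealized variant (where $\sw_t = \tsx_t$). First I would fix $t\in[R]$ and set $b := \tsx_{\ipos(t)}$; by the definition of $\source$, the bit currently pointed to in $\sz$ satisfies $\sz_{\icurr(t)} = \tsx_{\ipos(t)} = b$. Thus the comparison ``$\sz_{\icurr(t)} = \sw_t$'' inside $\BMA$ becomes exactly a comparison between $b$ and $\tsx_t$.

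For the first case ($\tsx_t \neq b$), the pointer update rule keeps $\icurr$ fixed, so $\icurr(t+1) = \icurr(t)$ and consequently $\ipos(t+1) = \ipos(t)$. Subtracting $t+1$ from both sides of this equality and using the definition $\idist(s) = \ipos(s) - s$ yields $\idist(t+1) = \ipos(t) - (t+1) = \idist(t) - 1$, as desired.

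For the second case ($\tsx_t = b$), the pointer advances by one, so $\icurr(t+1) = \icurr(t) + 1$. The heart of this step is identifying $\ipos(t+1) = \source(\icurr(t)+1)$: since $\source$ enumerates the surviving (non-deleted) positions of $\tsx$ in order, the $(\icurr(t)+1)$-st surviving position is obtained by starting from $\ipos(t)$ and advancing past however many consecutive deleted positions follow. This is precisely the definition of $\ell$ in \eqref{eq:nextbit}, giving $\ipos(t+1) = \ipos(t) + \ell + 1$. Subtracting $t+1$ then yields $\idist(t+1) = \idist(t) + \ell$.

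Since the argument is essentially a direct unpacking of definitions, there is no real obstacle; the only mild subtlety is handling the boundary case in which the updated pointer moves past $|\sz|$ into the padded $\ast$'s. This is precisely what the convention $\source(|\sz|+k)=R+k$ stated at the beginning of \Cref{sec:bma} is designed to handle, so both identities continue to hold in that regime without any special case analysis.
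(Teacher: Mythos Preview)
Your proposal is correct and follows essentially the same approach as the paper's proof, which also argues directly from the pointer-update rule: in case~1 the pointer stays fixed so $\icurr(t+1)=\icurr(t)$, and in case~2 it advances by one so the new $\ipos$ is the next surviving position $\ipos(t)+\ell+1$. Your version is simply more detailed, and your remark about the padded-$\ast$ boundary case is a nice clarification that the paper leaves implicit.
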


\begin{proof}
The first item follows from the observation that
 $\icurr(t+1) = \icurr (t)$. 

For the second item, we have 
  $\smash{\icurr (t+1)=\icurr (t)+1}$. It
 now points to the next bit in $\sz^{(m)}$, which 
   is the bit of $\tsx$ indexed by $\ipos (t)+\ell+1$ with $\ell$ defined in (\ref{eq:nextbit}).
\end{proof}

We next have the following two easy observations:  

\begin{claim}\label{clm:dist-positivity}
We have $\idist(t) \ge 0$ and $\ipos(t)\le R+1$ for every $t\in [R]$. 
\end{claim}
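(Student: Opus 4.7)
The plan is to prove both inequalities by simultaneous induction on $t \in [R]$, using \Cref{claim:dist} for the inductive step on $\idist(\cdot)$ and a separate inductive argument that the pointer $\icurr(\cdot)$ never passes the first padding symbol.

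I would first dispatch the bound $\ipos(t) \le R+1$ by proving the intermediate invariant $\icurr(t) \le |\bssz|+1$ for all $t \in [R]$. The base case $\icurr(1) = 1 \le |\bssz|+1$ is immediate. For the inductive step, note that at each round $\icurr$ increments by at most $1$, so if $\icurr(t) \le |\bssz|$, then $\icurr(t+1) \le |\bssz|+1$. The only delicate case is $\icurr(t) = |\bssz|+1$: here $\bssz_{\icurr(t)}$ is one of the padded $\ast$'s concatenated to $\bssz$ in line 1 of $\BMA$, while the ``ideal'' update rule uses $\ssw_t = \tsx_t \in \{0,1\}$, so $\ssw_t \neq \ast$ and the pointer does not advance. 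This yields $\icurr(t+1) = \icurr(t) = |\bssz|+1$, completing the induction. The bound $\ipos(t) \le R+1$ then follows from the convention $\source(|\bssz|+\ell) = R+\ell$.

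For $\idist(t) \ge 0$, I would again induct on $t$. The base case gives $\idist(1) = \source(1) - 1 \ge 0$ since $\source(1) \ge 1$. For the inductive step, split on whether $\ipos(t) \le R$ or $\ipos(t) = R+1$ (the only two possibilities by the previous paragraph). When $\ipos(t) = R+1$, the pointer is parked at a $\ast$ and stays put by the argument above, so $\idist(t+1) = R+1-(t+1) = R-t \ge 0$. When $\ipos(t) \le R$, apply \Cref{claim:dist}: in the case $\tsx_t = b$ we get $\idist(t+1) = \idist(t) + \ell \ge \idist(t) \ge 0$; in the case $\tsx_t \ne b$ we get $\idist(t+1) = \idist(t)-1$, and we must rule out $\idist(t) = 0$. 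But $\idist(t) = 0$ means $\ipos(t) = t$, so $b = \tsx_{\ipos(t)} = \tsx_t$, contradicting the case assumption. Hence $\idist(t) \ge 1$ and $\idist(t+1) \ge 0$.

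The main technical care required is the clean treatment of the ``pointer past the end'' regime: one must verify both that the ideal update rule (as opposed to the majority rule) freezes the pointer at position $|\bssz|+1$, and that \Cref{claim:dist} does not need to be invoked there (since the hypothesis ``$b$ is the $(\ipos(t))$-th bit of $\tsx$'' is vacuous when $\ipos(t) = R+1$). Beyond this bookkeeping, the argument is a short induction with no substantial combinatorial content.
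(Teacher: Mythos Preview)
Your proof is correct and follows essentially the same approach as the paper's: induction on $t$, using \Cref{claim:dist} for the step on $\idist(\cdot)$ and the observation that the pointer freezes once it hits the first padded $\ast$. The paper handles the two parts in the reverse order and is terser about the boundary case $\ipos(t)=R+1$; your version is slightly more careful in separating out that regime before invoking \Cref{claim:dist}, but the underlying ideas are identical.
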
  
\begin{proof}
The proof of the first part is by induction. 
Using Claim~\ref{claim:dist}, $\idist (t+1) \ge \idist (t)-1$. So if  $\idist (t)>0$ then $\idist (t+1) \ge 0$. Otherwise, if $\idist (t)=0$, we can  apply the second item of \Cref{claim:dist} to conclude that $\idist (t+1)\ge 0$. 
For the second part note that once $\ipos(t)$ reaches $R+1$ (i.e., $\icurr(t)$ reaches
  $|\ssy^*|+1$),
  it cannot move further as $\ssx$ is a string in $\{0,1\}$ but 
  the current bit in $\ssy^*$ is $*$.
\end{proof}

\ignore{  \[
    \Pr[ G_\ideal(t) \bigm\mid \bigwedge_{s=1}^{t-1} G_\ideal(s) ] \ge 1 - n'^{2} \cdot (\delta M)^{\Theta(M)}
  \]
  We draw $M$ traces $\by_1, \ldots, \by_M$ from $\Del_\delta(x)$.
  Let $\dist^1(\cdot), \ldots, \dist^M(\cdot)$ denote the $\dist(\cdot)$ of $\by_1, \ldots, \by_M$, respectively.
  For each $t \in [n']$, let $G(t)$ be the indicator of $\sum_{i=1}^M \dist^i(t) > M/2$.
  We prove by induction on $t$ that $\bigwedge_{s=1}^t G(s)$ and (therefore $w_{[1:t]} = x_{[1:t]}$) with probability $1 - t \cdot n'^{2} \cdot (\delta M)^{\Theta(M)}$, from which the theorem follows immediately.
  The base case follows from the fact that $G(1)$ and $G_\ideal(1)$ are identical, as the random variables $\dist^{(j)}(1)$ (for any choice of $j$) are independent. 

  For the inductive step, we have
  \begin{align*}
    \Pr\Bigl[ \bigwedge_{s=1}^t G(s) \Bigr]
    &\ge \Pr\Bigl[ G(t) \bigm\mid \bigwedge_{s=1}^{t-1} G(s) \Bigr] \Pr\Bigl[ \bigwedge_{s=1}^{t-1} G(s) \Bigr] \\
    &\ge \Pr\Bigl[ G(t) \bigm\mid \bigwedge_{s=1}^{t-1} G(s) \Bigr] - \frac{t-1}{n^4} \\
    &= \Pr\Bigl[ G(t) \bigm\mid \bigwedge_{s=1}^{t-1} G(s) \wedge \big( w_{[1:t-1]} = x_{[1:t-1]} \bigr) \Bigr] - (t-1) \cdot n'^{2} \cdot (\delta M)^{\Theta(M)} \\
    &= \Pr\Bigl[ G_\ideal(t) \bigm\mid \bigwedge_{s=1}^{t-1} G_\ideal(s) \Bigr] -(t-1) \cdot n'^{2} \cdot (\delta M)^{\Theta(M)}\\
    &\ge 1 - t \cdot n'^{2} \cdot (\delta M)^{\Theta(M)}.
  \end{align*}
  The equalities are because when $w_{1:t-1} = x_{1:t}$, then each process $\{\dist^i(s)\}_{s=1}^t$ is identical to $\{\dist_\ideal^i(s)\}_{s=1}^t$.
  However, note that conditioning on $\bigwedge_{s=1}^{t-1} G_\ideal(s)$ removes the independence across the $\dist_\ideal^i(t)$. Now, finally plugging in the value of $n'$ and $t=n'$, we have that $$\Pr\Bigl[ \bigwedge_{s=1}^{n'} G(s) \Bigr] \ge 1 -  n'^{3} \cdot (\delta M)^{\Theta(M)} \ge 1 - (C\delta M)^{c M}. $$
  Now, note that if $\bigwedge_{s=1}^{n'} G(s)$, then it follows that for every $1 \le s \le n'$, $\sum_{i=1}^M \dist^{s} (i) \le M/2$. This immediately implies that $w_{[1:n']} = x_{[1:n']}$. 
\end{proof}}


\begin{claim} \label{claim:dist-desert}
  If $\idist(t) = \cdots = \idist(t+L) = k$ for some $k \in [G]$, then $\tsx$ has a 
  long desert.
\end{claim}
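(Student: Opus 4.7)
The plan is to argue that the hypothesis of constancy forces $\tsx$ to be $k$-periodic on a window of length $L+k$, which directly exhibits a long desert.

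First I would observe that for each $s \in [t, t+L-1]$ the transition from $\idist(s) = k$ to $\idist(s+1) = k$ must preserve the value $k$. By \Cref{claim:dist}, the case $\tsx_s \ne \tsx_{\ipos(s)}$ would strictly decrease $\idist$, so we must be in the second case with $\ell = 0$: namely $\tsx_s = \tsx_{\ipos(s)} = \tsx_{s + k}$ (using $\ipos(s) = s + \idist(s) = s+k$) together with the constraint that $\ipos(s)+1 \notin \bD$. In particular $\ipos(s) \le R$, so $\tsx_{s+k}$ is a genuine bit of $\tsx$ rather than padding (had $\ipos(s)$ entered the padded region, the current bit would be $\ast \ne \tsx_s \in \{0,1\}$, and again the first case of \Cref{claim:dist} would force a decrease).

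Second, I would parlay these $L$ equalities $\tsx_s = \tsx_{s+k}$ for $s \in [t,t+L-1]$ into periodicity of the substring $\tsx_{[t:t+L-1]}$. Setting $\sss := \tsx_{[t:t+k-1]} \in \{0,1\}^k$, an easy induction on $j = 0, 1, \ldots, L-1$ shows that $\tsx_{t+j} = \tsx_{t + (j \bmod k)}$: when $j < k$ this is trivial, and when $j \ge k$ the period relation applied at $s = t+j-k \in [t,t+L-1]$ yields $\tsx_{t+j} = \tsx_{t+j-k}$, after which the induction hypothesis on $j - k$ completes the step. Hence $\tsx_{[t:t+L-1]}$ is the length-$L$ prefix of $\sss^{\infty}$, i.e.~a $k$-desert of length $L$. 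Since $k \in [G] = [L/2]$ by assumption, this is by definition a long desert contained in $\tsx$, finishing the proof.

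The proof is essentially mechanical once the key observation is isolated; the only subtlety worth care is the padding-boundary issue above (ensuring $\ipos(s) \le R$ throughout the plateau), but as noted this is automatic from the assumption that $\idist$ does not drop. No probabilistic argument is needed --- this claim is a deterministic structural consequence of \Cref{claim:dist} alone, which is why it can be invoked freely in the probabilistic arguments governing $\idist(\cdot)$ that appear earlier in the section.
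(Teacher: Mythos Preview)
Your proof is correct and follows essentially the same approach as the paper's: use \Cref{claim:dist} to deduce that the constancy of $\idist$ forces the matching case with no deletions, giving $\tsx_{s} = \tsx_{s+k}$ for $s\in[t:t+L-1]$, and conclude that $\tsx_{[t:t+L-1]}$ is a $k$-desert of length $L$ with $k\le G=L/2$. Your version is more explicit than the paper's three-line sketch---in particular you spell out the padding-boundary check and the periodicity induction---but the underlying idea is identical.
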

\begin{proof}
{We have that $\tilde{\ssx}_{t+\ell}$ is equal to the $\pos_{\ideal}(t+\ell)$-th bit of $\tilde{\ssx}$ for all $\ell \in [0:L-1]$, and hence} using \Cref{claim:dist}, we have  
$
\tsx_{t+\ell} = \tsx_{t+k+\ell}  
$
for all $\ell\in [0:L-1]$.
The claim follows. 
\end{proof}

We now start to prove \Cref{lemma:rw-easy}.
Let $t^*\in [R]$ be the round we consider in  \Cref{lemma:rw-easy}, with
  $t^*=t_0+ sL$ such that $t_0\in \{0, 1,$ $\ldots,L-1\}$ and $s\le 2^{0.01L}$.
We observe that 
\begin{equation} \label{eq:seq-of-dists}
\idist(t_0), \hspace{0.06cm}\idist(t_0+L),\hspace{0.06cm}\ldots,\hspace{0.06cm}\idist(t_0+ sL)
\end{equation} is a Markov process and 
  look at how $\idist(t_0+ (\ell+1)L)$ changes conditioned on $\idist(t_0+ \ell L)$.
  
To this end, let us condition on $\idist(t_0+ \ell L)=\Delta\ge 0$, so $\ipos(t_0+ \ell L)=t_0+ \ell L+\Delta$.
Conditioning on this, each bit  of $\tsx$
  after $t_0+\ell L+\Delta$ is deleted independently and added
  to $\bD$ with probability $\delta$.
Let $\bbeta$ be the nonnegative random variable such that \emph{either}
\begin{align*}
&t_0+ \ell L+\Delta +\bbeta \text{~is at most~}R \text{~and does not belong to~}\bD \quad\text{and}\\
&
\bigl\{t_0+ \ell L+\Delta+ 1,\ldots,t_0+ \ell L+\Delta+\bbeta\bigr\}\cap \overline{\bD}=  L ,
\end{align*}
i.e., $t_0+ \ell L+\Delta+\bbeta$ is the unique location in $\ssx$ that is the $L$-th undeleted position after $t_0+ \ell L+\Delta$,
\emph{or}
   $\bbeta$ is chosen using $t_0+\ell L+\Delta +\bbeta=R+1$ if no such $\bbeta$ exists.
Since $\ipos(\cdot)$ can move forward by at most $L$ undeleted positions in steps $t_0 + \ell L
+ 1, \dots, t_0 + (\ell + 1)L$, 
  it follows from the second part of \Cref{clm:dist-positivity} that
$$
\ipos\bigl(t_0+(\ell+1)L\bigr)\le t_0+\ell L+\Delta +\bbeta$$ and thus, we have
  the following upper bound:
$$
\idist\bigl(t_0+ (\ell+1)L\bigr)\le \idist\bigl(t_0+ \ell L\bigr) +\bbeta-L.
$$
Moreover, 
when $ \Delta\le G$ and $\bbeta\le L$ (including $\bbeta=L$), we have 
$$
\idist\bigl(t_0+(\ell+1)L\bigr)\le \max\Bigl(\idist\bigl(t_0+\ell L\bigr) -1,0\Bigr).
$$
It holds for $\bbeta=L$ because otherwise by \Cref{claim:dist-desert}, the subword of $\tsx$ in ${[t_0+\ell L:t_0+(\ell+1)L-1]}$ would be a long desert,
  contradicting with our assumption that $\tsx$ has no long deserts.

 {In the next portion of the analysis we relate this Markov process to a simpler one for which the transition probabilities are the same for all states.}  Note that 
  $\bbeta$ is distributed as $$\min\bigl(R+1-(t_0+\ell L+\Delta),\bbeta^*\bigr)$$
  where $\bbeta^*$ denotes the sum of $L$ i.i.d. geometric random variables with success probability $1-\delta$.\footnote{We use the version of the geometric distribution {for which the outcome of a draw is the total number of trials up to and including the first success (hence the support is $\{1,2,3,\dots\}$).}}~So $\bbeta$ is 
  stochastically dominated by $\bbeta^*$.\footnote{Recall that a random 
  variable $\bX$ is said to stochastically dominate $\bY$ if 
$\Pr[\bX \ge a] \ge \Pr[\bY \ge a]$ for all $a$.} 
Let $\alpha:=\delta L$, which can be made sufficiently small as $\delta L \le 8/K$ and $K$ can be made sufficiently large.
We use the following rough estimates for the probability of $\bbeta^*=L+c$ for $c=0,1,\ldots$:
When $c=0$ we have 
\begin{equation}\label{hehehehehe11}
\Pr\big[\bbeta^*=L\big]=(1-\delta)^L\ge 1-\delta L=1-\alpha> \frac{1-2\alpha}{1-\alpha}.
\end{equation}
(The reason for using the lower bound $(1-2\alpha)/(1-\alpha)$ in the second inequality will become clear soon.)
When $c\ge 1$,
\begin{equation}\label{hehehehehe22}
  \Pr\big[ \bbeta^* = L+c \big]
  = 
    \binom{L+c-1}{c} \delta^c (1-\delta)^L \le \big(L\delta \big)^c = \alpha^c.
\end{equation}
Inspired by these estimates, we introduce the following simpler Markov chain
  $\bX_0,\bX_1,\ldots,\bX_s\ge 0$, where (1) $\bX_0$ is distributed the same as $\idist(t_0)$, and (2)
  for each $\bX_{\ell+1}$, if $\bX_{\ell}> G$, then 
\begin{equation}\label{transition1}
\bX_{\ell+1}=\begin{cases}\bX_\ell & \text{with probability $(1-2\alpha)/(1-\alpha)$}\\
\bX_\ell+c & \text{with probability $\alpha^c$ for each $c\ge 1$}\end{cases};
\end{equation}
if $\bX_\ell\le G$, then
\begin{equation}\label{transition2}
\bX_{\ell+1}=\begin{cases}\max(\bX_\ell-1,0) & \text{with probability $(1-2\alpha)/(1-\alpha)$}\\
\bX_\ell+c & \text{with probability $\alpha^c$ for each $c\ge 1$}\end{cases}.
\end{equation}
Note that the use of $(1-2\alpha)/(1-\alpha)$ makes sure that the probabilities sum to $1$.
Below we will analyze $\bX_0, \bX_1, . . . , \bX_s$
in lieu of \Cref{eq:seq-of-dists}.

\Cref{lemma:rw-easy} follows directly by combining the following two claims:

\begin{claim}\label{claimhehe1}
 $\Pr\bigl[ \idist(t^*) \ge c \bigr] \ge \Pr\bigl[ \bX_s \ge c \bigr]$ for every $c$.
\end{claim}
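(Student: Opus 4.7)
My plan is to prove the claim by constructing a coupling of the sequence $\bigl(\idist(t_0 + \ell L)\bigr)_{\ell=0}^s$ with the Markov chain $(\bX_\ell)_{\ell=0}^s$, driven by the deletion process $\bD$, under which the pointwise inequality $\idist(t_0 + \ell L) \ge \bX_\ell$ holds for every $\ell$. Evaluating at $\ell = s$ then immediately gives $\Pr[\idist(t^*) \ge c] \ge \Pr[\bX_s \ge c]$ for every $c \ge 0$, which is the stated inequality.

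I build the coupling by induction on $\ell$. For the base case, I set $\bX_0 := \idist(t_0)$, which is legal by the definition of $\bX_0$. For the inductive step, I use the random variable $\bbeta^*$ (the sum of $L$ i.i.d.\ $\Geo(1-\delta)$ random variables associated with the $\ell$-th block) to drive the transition of $\bX$: when $\bbeta^* = L + c$ for $c \ge 1$, I set $\bX_{\ell+1} = \bX_\ell + c$; when $\bbeta^* = L$, I route $\bX_{\ell+1}$ to its down/stay state (i.e.\ $\max(\bX_\ell - 1, 0)$ if $\bX_\ell \le G$, else $\bX_\ell$) with the probability needed to reconcile the marginal law of $\bX_{\ell+1}$ with (\ref{transition1})--(\ref{transition2}), using the inequalities \Cref{hehehehehe11} and \Cref{hehehehehe22} to guarantee that the required probability mass is available. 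The slack in \Cref{hehehehehe22}, namely $\alpha^c - \Pr[\bbeta^* = L+c] \ge 0$, is where the coupling has flexibility; I will absorb it by auxiliary independent randomness rather than by tying it to $\bbeta^*$ alone.

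Verifying the invariant $\idist(t_0 + (\ell+1)L) \ge \bX_{\ell+1}$ is the central task. In the $\bbeta^* = L$ branch with $\bX_{\ell+1}$ in the down/stay state: \Cref{claim:dist-desert} combined with the no-long-desert assumption on $\tsx$ forces $\idist$ to decrease by at least $1$ across the block when $\idist(t_0+\ell L) \in [1, G]$, matching the $\bX$ decrease; when $\idist(t_0+\ell L) = 0$ or $> G$, the target $\bX_{\ell+1} \le \bX_\ell \le \idist(t_0 + \ell L)$, and \Cref{claim:dist,clm:dist-positivity} bound $\idist$ from below by~$0$, closing this case. In the $\bbeta^* = L + c$ branch with $\bX_{\ell+1} = \bX_\ell + c$: \Cref{claim:dist} tracks how $\ipos$ advances, and the $c$ extra deleted positions must be absorbed into $\idist$'s net gain across the block, so $\idist(t_0+(\ell+1)L) \ge \idist(t_0+\ell L) + c - (\text{mismatch count})$, which I will show exceeds $\bX_\ell + c = \bX_{\ell+1}$ by bounding the mismatch count using the no-long-desert hypothesis.

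The main obstacle, and the most delicate part of the argument, is the case in which the coupling must route $\bX_{\ell+1}$ to an upward jump even when $\bbeta^* = L$ (necessary to correctly realize the $\alpha^c$ marginal on upward jumps after accounting for the contribution of $\bbeta^* = L + c$). A naive choice here would violate the invariant since $\idist$ cannot increase in this branch; I will therefore refine the coupling using additional independent randomness tied to the current value of $\idist(t_0 + \ell L)$, routing the auxiliary upward jumps of $\bX_{\ell+1}$ only to those states that are already dominated by $\idist$. Correctly allocating this auxiliary randomness while preserving the marginal law of $\bX$ is where the technical heart of the proof lies. Handling the truncation $\bbeta = \min(R + 1 - (t_0 + \ell L + \Delta), \bbeta^*)$ near the right endpoint of $\tsx$ is a routine matter dispatched by \Cref{clm:dist-positivity}.
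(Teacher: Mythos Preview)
Your coupling is built to establish the wrong direction of stochastic domination. The claim as typeset has the inequality reversed; what the paper actually proves (and what is needed to combine with \Cref{claimhehe2} to get $\Pr[\idist(t^*)=0]\ge 1-2\alpha$) is that $\bX_s$ stochastically dominates $\idist(t^*)$, i.e.\ $\Pr[\bX_s\ge c]\ge \Pr[\idist(t^*)\ge c]$. The transition estimates make this unavoidable: by \Cref{hehehehehe22} the chain $\bX$ puts \emph{more} mass on upward jumps than $\bbeta^*$ does, and by \Cref{hehehehehe11} it puts \emph{less} mass on the down/stay move. So $\bX$ is the heavier-tailed process, not $\idist$.

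Concretely, your coupling breaks in two places. First, in the branch $\bbeta^*=L+c$ you assert $\idist(t_0+(\ell+1)L)\ge \idist(t_0+\ell L)+c-(\text{mismatch count})$; this is false. If there are $k$ mismatches then $\icurr$ advances $L-k$ times, so $\ipos$ advances only to the $(L-k)$-th surviving position ahead, picking up $c'\le c$ deletions rather than all $c$ of them; the net change of $\idist$ is $c'-k$, not $c-k$. The paper only ever uses the \emph{upper} bound $\idist(t_0+(\ell+1)L)\le \idist(t_0+\ell L)+\bbeta-L$. Second, your ``main obstacle'' is fatal: on the event $\bbeta^*=L$ and $\idist(t_0+\ell L)=\bX_\ell\in[1,G]$, the no-desert argument forces $\idist$ to strictly decrease, yet $\bX$ must jump up with probability $\sum_{c\ge 1}\bigl(\alpha^c-\Pr[\bbeta^*=L+c]\bigr)>0$ to realize its marginal. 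No rerouting ``only to states already dominated by $\idist$'' is possible here, because there is no slack and the marginal law of $\bX$ is fixed.

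The paper's proof goes the other way and is short: it shows by induction that $\bX_\ell$ stochastically dominates $\idist(t_0+\ell L)$, using two observations --- (i) the one-step law of $\bX_{\ell+1}$ given $\bX_\ell=a$ is monotone in $a$, and (ii) for each fixed $a$, the law of $\bX_{\ell+1}\mid\bX_\ell=a$ dominates the law of $\idist(t_0+(\ell+1)L)\mid\idist(t_0+\ell L)=a$ (this is exactly what \Cref{hehehehehe11}--\Cref{hehehehehe22} encode). These two facts plus the inductive hypothesis give the domination at step $\ell+1$ by a standard monotone-coupling argument.
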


\begin{claim}\label{claimhehe2}
 $\Pr\big[\bX_s=0\big]\ge 1-2\alpha.$
\end{claim}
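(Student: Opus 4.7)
The plan is to analyze the Markov chain $\bX_0,\bX_1,\ldots,\bX_s$ by comparing it with an auxiliary ``uncapped'' chain $\tilde{\bX}$, and then bounding the rare event that the chain ever visits a state above $G$.

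First, introduce the chain $\tilde{\bX}$, which coincides with $\bX$ at states in $[0,G]$ but, unlike $\bX$, decreases by $1$ (rather than staying in place) with probability $p := (1-2\alpha)/(1-\alpha)$ when above $G$. By the natural monotone coupling, $\tilde{\bX}_\ell \le \bX_\ell$ for all $\ell$, with equality so long as neither chain ever exceeds $G$.  The chain $\tilde{\bX}$ is an irreducible, positive-recurrent Markov chain on $\Z_{\ge 0}$, and solving the balance equations via the generating function $\Pi(z) := \sum_{k \ge 0} \pi_k z^k$ of its stationary distribution $\pi$ yields
\[
\Pi(z) \,=\, \frac{p \pi_0 (1-\alpha z)}{p - 2\alpha z} .
\]
The normalization $\Pi(1) = 1$ then gives
\[
\pi_0 \,=\, \frac{1 - 4\alpha + 2\alpha^2}{(1-\alpha)(1-2\alpha)} \,\ge\, 1 - 2\alpha
\]
for $\alpha$ sufficiently small (which holds since $\alpha = \delta L \le 8/K$ by \Cref{eq:LGR} and \Cref{eq:assumptions}, where $K$ can be made as large as we wish). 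Moreover, expanding $\Pi(z)$ as a power series shows that $\pi_k$ decays geometrically in $k$ at rate $\Theta(\alpha)$.

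Next, the coupling gives
\[
\Pr[\bX_s = 0] \,\ge\, \Pr[\tilde{\bX}_s = 0] \,-\, \Pr[\tilde{\bX}_\ell > G \text{ for some } \ell \le s],
\]
since $\bX$ and $\tilde{\bX}$ agree on the event that neither chain ever exceeds $G$. The second probability on the right is bounded by a union bound over $\ell \le s$ of the stationary probability that $\tilde{\bX} > G$, yielding a bound of order $s \cdot (2\alpha)^G$. Given $G = L/2$, $s \le 2^{0.01L}$, and $\alpha$ small, this quantity is exponentially smaller than $\alpha$.

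Finally, I would show that $\Pr[\tilde{\bX}_s = 0] \ge \pi_0 - o(\alpha)$, even though the initial distribution $\bX_0 \sim \idist(t_0)$ need not be stationary. Since $\tilde{\bX}$ is monotone in its starting state and $\idist(t_0)$ is the state at time $t_0 \le L-1$ of an analogous chain starting with nearly all mass at $0$ (namely $\idist(1)$ has distribution $\delta^k(1-\delta)$), the distribution of $\bX_0$ is stochastically dominated by $\pi$ up to a small correction; the inequality then follows from the stochastic monotonicity of $\tilde{\bX}$ in its initial distribution together with the stationary bound on $\pi_0$. The main obstacle will be precisely this last step: controlling the effect of the non-stationary initial distribution, particularly for small $s$ where mixing has not had time to take effect. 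One natural route is a direct mixing-time bound for $\tilde{\bX}$ using its strong negative drift and the geometric tail of $\pi$; a cleaner alternative is to prove by induction on $t$ that $\idist(t)$ is stochastically dominated by $\pi$, from which the desired bound follows immediately by monotonicity.
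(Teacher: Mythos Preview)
Your approach via the stationary distribution of an auxiliary chain is sound and genuinely different from the paper's argument, though the two turn out to be closely related. The paper proceeds by a direct induction on $\ell$, maintaining the invariant
\[
\Pr[\bX_\ell = c] \le \frac{\alpha}{1-2\alpha}\left(\frac{2\alpha(1-\alpha)}{1-2\alpha}\right)^{c-1}\quad\text{for }c\in[G],
\]
together with a separate (linearly growing in $\ell$) bound on $\Pr[\bX_\ell > G]$; the inductive step is a bare-hands computation showing these bounds are preserved under the transition kernel. If you expand your generating function you will find $\pi_c = \pi_0\cdot\frac{\alpha}{1-2\alpha}\bigl(\frac{2\alpha(1-\alpha)}{1-2\alpha}\bigr)^{c-1}$, so the paper's invariant is literally $\pi_c/\pi_0$: the paper is carrying out your stationarity argument by hand without naming $\pi$. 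Your route is more conceptual and explains \emph{why} that particular invariant works; the paper's is more self-contained and avoids the auxiliary chain and the coupling.

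One caution about your final paragraph: the proposal to ``prove by induction on $t$ that $\idist(t)$ is stochastically dominated by $\pi$'' does not go through as stated, because $\idist$ is not a Markov chain. Its one-step transition from state $k$ is either a deterministic decrease by $1$ or an increase by a geometric, depending on whether $\tsx_t=\tsx_{t+k}$; in the ``match'' case the step can only move upward, so the inductive step fails. What you actually need is the single fact $\bX_0=\idist(t_0)\preceq\pi$, and this you should establish directly via the tail bound $\Pr[\idist(t_0)\ge c]\le \alpha^c/(1-\alpha)$ (the same estimate the paper uses for its base case; one then checks it is at most $\Pr_\pi[\ge c]=\frac{\alpha}{(1-\alpha)(1-2\alpha)}\beta^{c-1}$). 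From there your induction runs on $\ell$ for the chain $\tilde{\bX}_\ell$, which \emph{is} Markov and monotone, and stationarity of $\pi$ closes the loop; this also supplies the uniform bound $\Pr[\tilde{\bX}_\ell>G]\le\Pr_\pi[>G]$ you need for the union bound in the preceding step.
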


\begin{proof}[Proof of \Cref{claimhehe1}]
We prove by induction that for every $\ell\in [0:s]$, $\bX_\ell$ stochastically dominates $\idist(t_0+\ell L)$.
The basis is trivial since $\bX_0$ has the same distribution as $\idist(t_0)$.

To prove the case with $\bX_{\ell+1}$ using $\bX_\ell$, we make two simple observations:
\begin{flushleft}\begin{enumerate}
\item First, for any $a \ge b$,
  the distribution of $\bX_{\ell+1}$ conditioned on $\bX_{\ell}=a$ stochastically dominates
    the distribution of $\bX_{\ell+1}$ conditioned on $\bX_{\ell}=b$.
\item Next for any $a$, it follows from \Cref{hehehehehe11,hehehehehe22} that the distribution of $\bX_{\ell+1}$ conditioned on $\bX_\ell=a$
  stochastically dominates that of $\idist(t_0+(\ell+1)L)$ conditioned on $\idist(t_0+\ell L)=a$.

\end{enumerate}\end{flushleft}
It follows from these two observations, as well as the inductive hypothesis on $\ell$,
  that $\bX_{\ell+1}$ stochastically dominates $\idist(t_0+(\ell+1)H)$.
This finishes the proof of the claim.
\end{proof}

\begin{proof} [Proof of \Cref{claimhehe2}]
We prove by induction on $\ell=0,1,\ldots,s$ that the distribution of $\bX_\ell$ satisfies
\begin{align}
\label{eqeqeq111} \Pr\bigl[\bX_\ell = c\bigr] &\le \frac{\alpha}{1-2\alpha} \cdot \left(\frac{2\alpha(1-\alpha)}{1-2\alpha}\right)^{c-1},\qquad\text{for every $c\in [G]$};\\[0.6ex]
\label{eqeqeq222} \Pr\bigl[\bX_\ell > G\bigr] &\le (\ell+1)\cdot  \frac{\alpha}{1-\alpha} \left(\frac{2\alpha(1-\alpha)}{1-2\alpha}\right)^{G}. 
\end{align}
Before working on the induction, we have from these two items that
  \begin{align*}
    \Pr\bigl[ \bX_s = 0\bigr]
    &\ge 1 - \frac{\alpha}{1-2\alpha}\cdot  \sum_{a \ge 0} \left( \frac{2 \alpha (1-\alpha)}{1 - 2\alpha} \right)^a - (s+1) \cdot \frac{\alpha}{1-\alpha} \left( \frac{2\alpha(1-\alpha)}{1-2\alpha} \right)^{G}  \\[0.2ex]
    &\ge 1 - \frac{\alpha}{1-4\alpha+2\alpha^2} - \frac{\alpha}{2} \\[0.5ex]
    &\ge 1-2\alpha.
  \end{align*}
  The second inequality 
    used $s\le 2^{0.01L}$, $G=L/2$ and the fact that $\alpha$ can be made sufficiently small.
  The last inequality also used that $\alpha$ is sufficiently small.
We work on the induction below.
  
For the base case $\bX_0$, recall that this random variable has the same distribution as $\idist(t_0)$, and hence we have for each $c\ge 1$,
  \[
    \Pr\bigl[\idist (t_0) = c\bigr]
    \le \sum_{a\ge c} \binom{t_0+c-1}{a} \delta^a 
    \le \sum_{a\ge c} (L\delta)^a
    \le \frac{\alpha^c}{1-2\alpha} \cdot \left(\frac{2(1-\alpha)}{1-2\alpha}\right)^{c-1} ,
  \]
 (where the second inequality above uses $t_0 \leq L-1$ and the third uses that $\alpha$ is sufficiently small) and also
  \[
    \Pr\bigl[\idist (t_0) > G\bigr]
    \le \sum_{c\ge G+1}\sum_{a\ge c} \binom{t_0+c-1}{a} \delta^a  
    \le \frac{\alpha^{G+1}}{(1-\alpha)^2}
    \le \frac{\alpha}{1-\alpha} \left(\frac{2\alpha(1-\alpha)}{1-2\alpha}\right)^{G} .
  \]

For the induction, we assume the statement holds for $\ell$ and use it to prove the case with $\ell+1$.
  Using \Cref{eqeqeq111}, for every $c\in[G-1]$, we have
  \begin{align}\nonumber
    \Pr\bigl[\bX_{\ell+1}=c\bigr]
    &\le \sum_{a=0}^{c-1} \alpha^{c-a}\cdot \Pr\bigl[\bX_\ell=a\bigr] + \frac{1-2\alpha}{1-\alpha}\cdot \Pr\bigl[\bX_\ell=c+1\bigr]
    \tag{by \Cref{transition2}}
    \\ 
\nonumber    &\le \alpha^c + \sum_{a=1}^{c-1} \alpha^{c-a} \frac{\alpha}{1-2\alpha} \left(\frac{2\alpha(1-\alpha)}{1-2\alpha}\right)^{a-1} + \frac{1-2\alpha}{1-\alpha} \frac{\alpha}{1-2\alpha} \left(\frac{2\alpha(1-\alpha)}{1-2\alpha}\right)^c \\
\label{hehehe333}    &= \alpha^c + \frac{\alpha^c}{1-2\alpha} \sum_{a=1}^{c-1} \left( \frac{2 (1-\alpha)}{1 - 2\alpha} \right)^{a-1} + \frac{\alpha}{1-\alpha} \left(\frac{2\alpha(1-\alpha)}{1-2\alpha}\right)^c \\
\nonumber    &= \alpha^c \left(1 + \frac{1}{1-2\alpha} \frac{ \left(\frac{2(1-\alpha)}{1-2\alpha}\right)^{c-1} - 1}{ \frac{2(1-\alpha)}{1-2\alpha} - 1} + \frac{\alpha}{1-\alpha} \left( \frac{2 (1-\alpha)}{1 - 2\alpha} \right)^c \right) \\
\label{hehehe444}   &= \alpha^c \left( \left(\frac{2 (1-\alpha)}{1-2\alpha}\right)^{c-1} + \frac{\alpha}{1-\alpha} \left(\frac{2(1-\alpha)}{1-2\alpha}\right)^c \right) \\
\nonumber    &= \frac{\alpha}{1-2\alpha} \left(\frac{2\alpha(1-\alpha)}{1-2\alpha}\right)^{c-1} . 
  \end{align}
The proof for $c=G$ is similar except that we do not have the term for $\bX_\ell=G+1$.

Finally using \Cref{eqeqeq111,eqeqeq222} we have  
  \begin{align}
\Pr\bigl[\bX_{\ell+1}>G\bigr] 
\nonumber    &\le \Pr\bigl[\bX_{\ell}>G\bigr] + \sum_{b>G} \sum_{a=0}^{G} \alpha^{b-a} \cdot \Pr\bigl[\bX_\ell=a\bigr] 
\tag{by definition of $\bX_{\cdot}$}\\
\nonumber    &\le \Pr\bigl[\bX_{\ell}>G\bigr] + \frac{1}{1-\alpha} \sum_{a=0}^{G} \alpha^{G+1-a}\cdot  \Pr\bigl[\bX_\ell=a\bigr] \\
\nonumber    &\le \Pr\bigl[\bX_{\ell}>G\bigr] + \frac{\alpha^{G+1}}{1-\alpha} \left(1+\frac{1}{1-2\alpha}\sum_{a=1}^{G} 
    \left(\frac{2 (1-\alpha)}{1-2\alpha}\right)^{a-1}\right) \tag{by \Cref{eqeqeq111}}\\[0.1ex]
\label{hehehe555}    &= \Pr\bigl[\bX_{\ell}>G\bigr] + \frac{\alpha}{1-\alpha} \left(\frac{2\alpha(1-\alpha)}{1-2\alpha}\right)^{G}
\\[0.3ex]
\nonumber &\le (\ell+2)\cdot \frac{\alpha}{1-\alpha}\left(\frac{2\alpha(1-\alpha)}{1-2\alpha}\right)^{G},
\tag{by \Cref{eqeqeq222}}
  \end{align}
where \Cref{hehehe555} used similar derivation between \Cref{hehehe333,hehehe444} earlier.
This finishes the induction and the proof of the claim. \end{proof} 

\ignore{
\begin{proof}[Proof of \Cref{claim:distvsdistupper}]
  For $1 \le c \le K-1$,
  \begin{align*}
    \Pi^1(c)
    &= \sum_{j \ge c} \pi^1(j) \\
    &\le \sum_{j \ge c} \left( \sum_{i=0}^{j-1} \pi^0(i) \Pr[\bG = j-i] + \pi^0(j+1) \Pr[\bG = 0] \right) \\
    &= \sum_{i=0}^{c-1} \pi^0(i) \sum_{j \ge c} \Pr[\bG = j-i] 
      + \pi^0(c) \sum_{j \ge c+1} \Pr[\bG = j-i] 
      + \sum_{c+1 \le i \le k-1} \pi^0(i) \sum_{j \ge c+1} \Pr[\bG = j-i] 
      + \sum_{i \ge k} \pi^0(i) \\
    &= \sum_{i=0}^{c-1} \pi^0(i) \sum_{j \ge c} \Pr[\bG = j-i]
      + \pi^0(c) \sum_{j \ge c+1} \Pr[\bG = j-i]  + \sum_{i \ge c+1} \pi^0(i)  \\
    &\le \sum_{i=0}^{c-1} \pi^0(i) \Bigl( \sum_{j \ge c+1} \alpha^{j-i} \Bigr) + \pi^0(c) \sum_{j \ge c+1} \alpha^{j-i} + \sum_{i \ge c+1} \pi^0(i) \\
    &= \sum_{i=0}^{c-1} \frac{\alpha^{c - i}}{1 - \alpha} \pi^0(i) + \frac{\alpha}{1-\alpha} \pi^0(c) + \sum_{i \ge c+1} \pi^0(i) \\
    &\le \sum_{i=0}^{c-1} \frac{\alpha^{c-i} - \alpha^{c-i+1}}{1-\alpha} \Pi^0(i) + \left(1 - \frac{\alpha}{1-\alpha}\right) \Pi^0(c+1) \\
    &\le \sum_{i=0}^{c-1} \frac{\alpha^{c-i} - \alpha^{c-i+1}}{1-\alpha} \Pi_\upper^0(i) + \left(1 - \frac{\alpha}{1-\alpha}\right) \Pi_\upper^0(c+1) \\
    &= \sum_{i=0}^{c-1} \pi_\upper^0(i) \sum_{j \ge c+1} \alpha^{j-i} + \pi_\upper^0(c) \sum_{j \ge c+1} \alpha^{j-i} + \sum_{i \ge c+1} \pi_\upper^0(i) \\
    &=  \Pi_\upper^1(c)
  \end{align*}
  By a similar calculation, we have $\Pi^1(K) \le \Pi_\upper^1(K)$.
\end{proof} }


\section{Main Algorithm}\label{sec:combine}


\begin{figure}[t!]
  \centering
\setstretch{1.2}
  \begin{algorithm}[H]
    \caption{{\MAIN}}\label{fig:BMA}
		\DontPrintSemicolon
		\SetNoFillComment
		\KwIn{A positive integer $n$ and a tuple of $(M+1)$ strings $\ssy^*,\ssy^{(1)},\ldots,\ssy^{(M)}$  
		}
		\KwOut{
		A binary string $\ssw$}
		Set $\ell^*=5\tau\log n$ and $\ssw=\epsilon$\\
		\While{$\ell^*\le |\ssy^*|-R$ and $\ell^*\le |\ssy^*|-5\tau\log n$}{
		  Run $\Align\hspace{0.04cm}(\ell^*,\ssy^*,\ssy^{(1)},\ldots,\ssy^{(M)})$ to obtain a 
		  tuple of locations $(\ell^{(1)},\ldots,\ell^{(M)})$\\
		  Run $\smash{\BMA\hspace{0.04cm}(\ssy^*,\ssy^{(1)},\ldots,\ssy^{(M)};\ell^*,\ell^{(1)},\ldots,\ell^{(m)})}$ to obtain 
		    a binary string ($\eps$ or in $\{0,1\}^R$)\\
		  Concatenate the string returned by $\BMA$ to the end of $\ssw$\\
   		  Set $\ell^*$ to be the final pointer of $\ssy^*$ in the run of 
		  $\BMA$ above  and increment it\\		  
		  }
\Return $\ssw$.
\end{algorithm}\caption{The Algorithm $\MAIN$}\label{figg:MAIN}
\end{figure}

Let $\delta$ and $M$ be two parameters that satisfy \Cref{eq:assumptions,eq:H}.
Recall the following parameters used in our analysis of $\Align$ and $\BMA$:
$$
\tau=500,\quad
H =\frac{M}{K}\log \left(\frac{1}{\delta M}\right)\le {2\log n},\quad L =8H,\quad G =L/2=4H\quad\text{and}\quad
R =L2^{0.01L}.
$$
(Note that $L \leq 16 \log n,$ $G \leq 8 \log n$ and $R \le O(n^{0.16} \log n).$)
Our main (deterministic) algorithm \MAIN\ is described in \Cref{figg:MAIN}, where we 
  use
$$
\BMA\big(\ssy^*,\ssy^{(1)},\ldots,\ssy^{(M)};\ell^*,\ell^{(1)},\ldots,\ell^{(M)}\big)
$$
to denote running $\BMA$ on the suffix of $\ssy^*$ starting at location $\ell^*$ and the suffix of 
  each $\ssy^{(m)}$ starting at location $\ell^{(m)}$.
Recall that $\BMA$ either returns the empty string $\eps$ or a string $\ssw\in \{0,1\}^R$.

We prove the following theorem about the performance of $\MAIN$:

\begin{theorem}\label{theo:mainReconstruct}
Let $\bssx\sim \{0,1\}^n$ and $\bssy^*,\bssy^{(1)},\ldots,\bssy^{(M)}\sim \Del_\delta(x)$.
With probability at least $1-1/n$, $\MAIN$ on $\bssy^*,\bssy^{(1)},\ldots,\bssy^{(M)}$
  returns a string $\bssw$ with edit distance at most $2^{-0.01 H}n$ from $\bssx$.
\end{theorem}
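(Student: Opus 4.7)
The plan is to combine \Cref{thm:align-restated} and \Cref{thm:bma-easy} with a short counting bound for the long deserts in $\bssx$. First I would fix a single ``good event'' $E$ that holds with probability at least $1 - 1/n$ over $\bssx, \bssy^*, \bssy^{(1)}, \ldots, \bssy^{(M)}$: (a) $\Align$ succeeds on $\bssx$ with respect to the traces, so the ``bad set'' $B^* \subseteq [|\bssy^*|]$ of reference locations at which $\Align$ fails has $|B^*| \le 2^{-0.1 H} n$; (b) the number of positions $i \in [n]$ at which a long desert of $\bssx$ begins is at most $O(n \cdot 2^{-L/2})$, which follows from $\Pr[\bssx_{[i:i+L-1]} \text{ is a } k\text{-desert}] = 2^{k-L}$ summed over $1 \le k \le G = L/2$, together with a McDiarmid-style concentration argument of the kind used in \Cref{sec:align}; (c) $|\bssy^*|, |\bssy^{(m)}| = (1-\delta) n \pm o(n)$ for every $m$, by a standard Chernoff bound.

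Conditioning on $E$, I would classify each iteration $t$ of the main loop of $\MAIN$ as \emph{good} if the current pointer $\ell^*_t$ lies outside $B^*$ \emph{and} the $R$-bit subword of $\bssx$ starting at the consensus location $i_t$ (guaranteed by \Cref{thm:align-restated} to satisfy $\source^*(\ell^*_t) - 2H \le i_t \le \source^*(\ell^*_t)$) contains no long desert. In a good iteration, applying \Cref{thm:bma-easy} to $\tsx := \bssx_{[i_t : i_t + R - 1]}$ gives that $\BMA$ returns exactly $\tsx$ and that at least $0.9 M$ of the pointers $\ell^{(1)}_t, \ldots, \ell^{(M)}_t$ point to the correct bit in every round. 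A short additional argument that tracks the reference pointer $\curr^*$ inside $\BMA$ (which the proof of \Cref{thm:bma-easy} sets aside) establishes that after the $R$ rounds the updated pointer $\ell^*_{t+1}$ satisfies $\source^*(\ell^*_{t+1}) = i_t + R \pm O(H)$. Since $\ell^*$ advances by at least $1$ in every iteration, if $T_{\text{good}}$ and $T_{\text{bad}}$ denote the respective counts, we obtain $T_{\text{bad}} \le |B^*| + O(n \cdot 2^{-L/2})$ and $T_{\text{good}} \le n/R + 1$.

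Finally I would bound $\dedit(\bssw, \bssx)$ via the identity $\dedit(\bssw, \bssx) = |\bssw| + |\bssx| - 2 |\mathrm{LCS}(\bssw, \bssx)|$. The concatenation of the good-iteration outputs is a common subsequence of $\bssw$ and $\bssx$ of length at least $n - O(T_{\text{bad}} \cdot R) - O(T_{\text{good}} \cdot H)$, where the first term accounts for the bits of $\bssx$ skipped over during bad iterations and the second for the $O(H)$-sized overlaps or gaps at the seams between consecutive good iterations, while $|\bssw| \le n + O(T_{\text{bad}} \cdot R)$ since each iteration emits at most $R$ bits. Substituting $R = L \cdot 2^{0.01 L}$ and $L = 8 H$ yields
\[
\dedit(\bssw, \bssx) \le O(T_{\text{bad}} \cdot R) + O(T_{\text{good}} \cdot H) = O(H \cdot 2^{-0.02 H} n) + O(H \cdot 2^{-3.92 H} n) + O(n / 2^{0.08 H}),
\]
and each of the three terms is at most $2^{-0.01 H} n / 3$ for $H$ sufficiently large (which is forced by \Cref{eq:assumptions,eq:H}).

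The main obstacle I anticipate is the step in the second paragraph showing $\source^*(\ell^*_{t+1}) = i_t + R \pm O(H)$ after a good iteration. The resolution is that, conditioned on the $\BMA$ majority bits in this iteration being exactly $\bssx_{i_t}, \ldots, \bssx_{i_t + R - 1}$, the quantity $\source^*(\curr^*(t')) - (i_t + t' - 1)$ is itself a Markov chain of the same negatively-drifted type analyzed in \Cref{lemma:rw-easy}, only now starting from an initial value in $[0, 2H]$ rather than $0$; the same tail bound shows the offset stays $O(H)$ with probability $1 - o(1)$ per iteration, and a union bound over the $O(n/R)$ good iterations absorbs the extra failure probability into the $1/n$ slack of $E$.
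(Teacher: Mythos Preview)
Your overall architecture matches the paper's, but there is a genuine gap in how you handle the per-iteration success of $\BMA$ and of the reference-pointer tracking.

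You propose to invoke \Cref{thm:bma-easy} once per good iteration on $\tsx = \bssx_{[i_t:i_t+R-1]}$, and likewise to handle the $\ell^*$-drift by a per-iteration tail bound, then union bound over $O(n/R)$ iterations. This fails in the parameter regime where $H$ is a constant independent of $n$ (allowed by \Cref{eq:assumptions,eq:H}: take $M = K^2$ and $\delta M$ bounded away from $0$). Then $R = O(1)$, there are $\Theta(n)$ iterations, and the per-iteration failure probability $2^{-H}$ from \Cref{thm:bma-easy} is a constant; the union bound gives nothing. There is also a dependence issue you do not address: $i_t$ is produced by $\Align$ and hence depends on the very trace randomness you then want to feed into \Cref{thm:bma-easy}, so you cannot apply that theorem with ``fresh'' traces at location $i_t$.

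The paper sidesteps both problems simultaneously by replacing the union bound with global concentration. It proves (\Cref{keykeyBMAlemma}) that over the single draw of $\bssx$ and the traces, $\BMA$ started at $\image^{(m)}(i)$ succeeds for all but at most $2^{-0.1H}n$ \emph{fixed} locations $i\in[n-R+1]$, via McDiarmid applied to the count of bad locations; similarly (\Cref{difflemma}) it bounds the number of $\ell^*$ for which the reference-pointer offset after $R$ rounds exceeds $G$. These bad locations are then charged to the edit distance (each contributes at most $O(R)$ to $\dedit(\bssw,\bssx)$), not to the failure probability. So where you would need the probability of any iteration failing to be $o(1/n)$, the paper only needs the \emph{fraction} of bad locations to be $2^{-\Omega(H)}$. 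Your good-event $E$ is missing exactly these two ingredients; once they are added (and condition (ii) in the paper's deterministic analysis is stated for \emph{all} $i$ in the $2H$-window, which also absorbs the dependence on $\Align$), your three-step edit-distance accounting goes through essentially as you wrote it.
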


Similar to the analysis of $\Align$ in \Cref{sec:align}, we divide the analysis of 
  $\MAIN$ into two parts:
In \Cref{mainanalysispart1}, we begin by describing some good events over the randomness of $\bssx$,
  $\bD^*$ and $\bD^{(m)}: m \in [M]$,
  and show that these events happen with probability at least $1-1/n$.
The rest of the analysis in \Cref{mainanalysispart2} will be entirely deterministic.
We show that $\MAIN$ on $\ssy^*,\ssy^{(1)},\ldots,\ssy^{(M)}$ must 
  return a string $\ssw$ with small edit distance from $\ssx$ when all the events described
  in \Cref{mainanalysispart1} hold.

\subsection{Probabilistic Analysis}\label{mainanalysispart1}

We start by showing that for $\bssx\sim \{0,1\}^n$, the number of length-${(2R)}$ subwords
of $\bssx$ that 
  contain at least one long desert is small. 

\begin{lemma}\label{lem:fewlongdeserts}
With probability at least $1-\exp(-n^{0.1})$ over $\bssx\sim \{0,1\}^n$,
the number of $i\in [n- {2R}+1]$ such that $\bssx_{[i:i+ {2R}-1]}$ has at least one long desert
  is at most $ 2^{-0.13H}n$.
\end{lemma}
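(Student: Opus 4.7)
\medskip

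\noindent\textbf{Proof plan for \Cref{lem:fewlongdeserts}.}

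My plan is to apply the McDiarmid bounded-differences inequality to the random variable $f(\bssx_1,\dots,\bssx_n)$ counting the ``bad'' starting positions $i \in [n-2R+1]$ such that $\bssx_{[i:i+2R-1]}$ contains a long desert. The proof will proceed in three short steps: (i) bound the probability that a single position starts a long desert, (ii) use this to upper bound $\E[f]$, and (iii) invoke McDiarmid to conclude concentration around this small mean.

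\medskip

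\noindent\emph{Step (i): single-position probability.} For a fixed $j$, the event that $\bssx_{[j:j+L-1]}$ is a $k$-desert is precisely the event that the $L-k$ bits $\bssx_{j+k},\ldots,\bssx_{j+L-1}$ are each determined by the first $k$ bits, which occurs with probability $2^{-(L-k)}$. A union bound over $k\in[1:G]$ (recall $G=L/2$) yields that the probability that $\bssx_{[j:j+L-1]}$ is a long desert is at most $\sum_{k=1}^{L/2} 2^{-(L-k)} \le 2^{1-L/2}$.

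\medskip

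\noindent\emph{Step (ii): controlling the mean.} For each $i$, the event that $\bssx_{[i:i+2R-1]}$ contains a long desert is a union over at most $2R$ choices of the starting position $j\in[i:i+2R-L]$, so a union bound and Step~(i) give $\Pr[\bssx_{[i:i+2R-1]}\text{ has a long desert}] \le 2R\cdot 2^{1-L/2}$. Summing over $i$ and plugging in $L=8H$ and $R=L2^{0.01L}=8H\cdot 2^{0.08H}$, I get
\[
\E[f] \;\le\; n \cdot 2R \cdot 2^{1-L/2} \;\le\; 32nH\cdot 2^{-3.92H} \;\le\; 2^{-0.13H}n/2,
\]
using that $H$ is at least a large constant (since $H\ge K$).

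\medskip

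\noindent\emph{Step (iii): concentration via McDiarmid.} Flipping a single bit $\bssx_k$ can change whether position $j$ starts a long desert only for $j$ in the interval $[k-L+1:k]$ (at most $L$ positions), and each such change affects at most $2R$ values of $i$. Hence $f$ has bounded differences with each $c_k \le 2LR$. Applying \Cref{thm:Mcdiarmid} with $t=2^{-0.13H}n/2$ yields
\[
\Pr\bigl[f - \E[f]\ge t\bigr] \;\le\; \exp\!\left(-\frac{t^2}{2nL^2R^2}\right) \;=\; \exp\!\left(-\frac{2^{-0.26H}n}{8L^2R^2}\right).
\]
Using $L^2R^2 = O(H^4)\cdot 2^{0.16H}$ and $H\le 2\log n$, the exponent simplifies to $n/(2^{0.42H}\cdot\mathrm{poly}(H)) \ge \tilde\Omega(n^{0.16})$, which is comfortably at least $n^{0.1}$ for large $n$. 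Combining with the mean bound from Step~(ii) yields $f \le 2^{-0.13H}n$ with probability $1-\exp(-n^{0.1})$.

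\medskip

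I don't anticipate any real obstacle here: this is a standard first-moment-plus-McDiarmid argument of the same flavor as the analyses of \Cref{claim:D-star,claim:x-ED-dist}. The only thing to be mildly careful with is keeping the numerical constants consistent (in particular, ensuring the slack between $2^{-3.92H}$ in $\E[f]$ and the target $2^{-0.13H}n$ is comfortable, and that the variance term $L^2R^2$ is small enough to give a subexponentially-small failure probability in $n$, which works out because $R = n^{o(1)}$ via $L \le 16\log n$).
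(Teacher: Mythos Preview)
Your proposal is correct and follows essentially the same approach as the paper: bound the probability that a single length-$L$ window is a long desert by $2^{1-L/2}$, pass to an expectation bound, and apply McDiarmid. The only cosmetic difference is that the paper applies McDiarmid to the count of length-$L$ windows that are long deserts (with bounded difference $L$) and afterwards multiplies by $O(R)$ to convert to the count of bad $i$'s, whereas you apply McDiarmid directly to $f$; both routes work, and in fact your Lipschitz bound can be sharpened to $c_k \le 2R$ (since flipping $\bssx_k$ only affects those $i$ with $k\in[i:i+2R-1]$), though the looser $2LR$ you use is already sufficient.
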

\begin{proof}
Fix any $k\le G$.
The probability of a random length-$L$ string being a $k$-desert is at most
$$
2^k\cdot \frac{1}{2^L}.
$$
As a result, the probability of a random length-$L$ string being a long desert is at most
$$
\sum_{k=1}^{G} \frac{2^k}{2^L}\le 2^{-0.4 L}.
$$
Consider the number of length-$L$ subwords in $\bssx\sim \{0,1\}^n$ that are long deserts.
Given that changing each bit can only change the number by no more than $L$,
  it follows from McDiarmid's inequality (\Cref{thm:Mcdiarmid}) that with probability at least $1-\exp(-n^{0.1})$,
  this number is at most
$$
2^{-0.4L}n+O(n^{0.55}L)\le 2^{-0.22 H}n
$$
using $H\le 2\log n$.
When this happens, the number of indices $i$ we care about in the statement of the lemma
is at most $ {(4R-L)} \cdot 2^{-0.22H}n\le 2^{-0.13H}n$ using $R=L2^{0.01L}$ and $L=8H$.
\end{proof}

We modify the definition of $\image^{(m)}$ so that it is well-defined for every $i\in [n]$: $\image^{(m)}(i)$ is the smallest location
$\ell\in [|\ssy^{(m)}|]$ such that $\source^{(m)}(\ell)\ge i$, or set $\image^{(m)}(i)=|\ssy^{(m)}|+1$ if 
  no such $\ell$ exists. 
(Note that when the latter happens, the suffix of $\ssy^{(m)}$ starting at $|\ssy^{(m)}|+1$ is the empty string $\eps$.)
We say $\BMA$ \emph{succeeds on $\ssy^{(1)},\ldots,\ssy^{(M)}$ at location $i\in [n-R+1]$ of $\ssx$}
  if running $\BMA$ on $\ssy^{(1)},\ldots,\ssy^{(M)}$~starting at $\image^{(1)}(i),\ldots,
  \image^{(M)}(i)$ returns exactly the $R$-bit string $\ssx_{[i:i+R-1]}$ and moreover,
  the consensus is achieved by at least $90\%$ of strings in every round of \BMA's execution.

The following lemma is a direct corollary of \Cref{lem:fewlongdeserts} and \Cref{thm:bma-easy}:
  
\begin{lemma}\label{keykeyBMAlemma}
Let $\bssx\sim \{0,1\}^n$ and $\bssy^{(1)},\ldots,\bssy^{(M)}\sim \Del_\delta(\bssx)$.
With probability at least $1-2\exp(-n^{0.1})$, $\BMA$ succeeds on all but at most 
  $2^{-0.1H}n$ locations $i\in [n-R+1]$ in $\bssx$.
\end{lemma}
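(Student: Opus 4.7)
The plan is to combine the two inputs provided by the excerpt. \Cref{lem:fewlongdeserts} handles the $\bssx$-side failure mode (positions at which the local window $\bssx_{[i:i+2R-1]}$ contains a long desert), and \Cref{thm:bma-easy} handles the per-location failure of $\BMA$ over the randomness of the deletions. Call $i \in [n-R+1]$ \emph{clean} if $\bssx_{[i:i+2R-1]}$ has no long desert, and \emph{dirty} otherwise. By \Cref{lem:fewlongdeserts}, with probability at least $1-\exp(-n^{0.1})$ over $\bssx$ the number of dirty $i$ is at most $2^{-0.13H}n$, which already fits inside the allowed budget of $2^{-0.1H}n$; the real task is therefore to bound the number of clean $i$ at which $\BMA$ nevertheless fails.

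For a fixed clean $i$ and a fixed $\bssx$, I would show that $\Pr[\BMA \text{ fails at } i \mid \bssx] \le 2^{-H}$, with the probability taken over $\bD^{(1)},\ldots,\bD^{(M)}$. The suffix of $\bssy^{(m)}$ starting at $\image^{(m)}(i)$ is a trace of $\bssx_{[i:n]}$, so running $\BMA$ on these suffixes amounts to running $\BMA$ on $M$ traces of a source string whose first $2R$ symbols are desert-free. The proof of \Cref{thm:bma-easy} then applies essentially verbatim: its only use of the ``no long desert'' hypothesis is through \Cref{claim:dist-desert}, which inspects windows of length at most $L+G \le 2L$ starting at positions $t \le R$ of the source, all of which lie inside $\bssx_{[i:i+2R-1]}$. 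Consequently \Cref{lemma:rw-easy} still yields $\Pr[\idist(t^*) = 0] \ge 1 - 2\delta L$ for every $t^* \in [R]$, and the union-bound step in the proof of \Cref{thm:bma-easy} delivers the desired $2^{-H}$ bound.

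The main obstacle is promoting this per-location bound into a high-probability bound on the total count $Y$ of clean-but-failing $i$'s. Here $\E[Y \mid \bssx] \le n \cdot 2^{-H}$, and the target is $\Pr[Y > 2^{-0.1H}n - 2^{-0.13H}n \mid \bssx] \le \exp(-n^{0.1})$. Because $2^{-H}$ can be as large as $1/n^2$ in the regime $H \approx 2\log n$, the slack between mean and target is only polynomial in $n$, so Markov's inequality and a direct application of \Cref{thm:Mcdiarmid} to the $nM$ individual deletion indicators (each with per-coordinate bounded difference $O(R)$) both fall short. I would instead exploit block independence: partition $[1:n-R+1]$ into consecutive blocks of length $10R$, and observe that ``$\BMA$ fails at $i$'' depends only on $\bD^{(m)}\cap[i:i+2R-1]$, so the per-block failure counts $Y_j$ for non-adjacent blocks are mutually independent given $\bssx$. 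Splitting the blocks into even and odd sub-families and applying Hoeffding's inequality to the sum of the independent $Y_j \in [0,10R]$ within each sub-family yields, for $t = 2^{-0.11H}n$,
\[
  \Pr\bigl[\,Y \ge \E[Y \mid \bssx] + t \bigm\mid \bssx\,\bigr]
  \;\le\; 2\exp\!\bigl(-\Omega(2^{-0.22H}n/R)\bigr).
\]
Using $R \le O(n^{0.17}\log n)$ and $H \le 2\log n$ from \eqref{eq:H}, the exponent is at most $-n^{0.3}$, comfortably beating $-n^{0.1}$. A final union bound with \Cref{lem:fewlongdeserts}, together with the arithmetic $2^{-0.13H}n + n\cdot 2^{-H} + 2^{-0.11H}n \le 2^{-0.1H}n$ (which holds because \eqref{eq:assumptions} forces $H$ to exceed a sufficiently large absolute constant), yields the claimed $1-2\exp(-n^{0.1})$.
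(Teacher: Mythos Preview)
Your decomposition is exactly the paper's: bound the dirty locations via \Cref{lem:fewlongdeserts}, get a $2^{-H}$ per-clean-location failure probability from \Cref{thm:bma-easy}, and then apply a concentration inequality over the $nM$ deletion indicators to control the total count of clean-but-failing locations.

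The one substantive divergence is the concentration step, and there your claim that direct McDiarmid ``falls short'' is wrong---the paper uses precisely direct McDiarmid, and it works. Let $f$ count the clean $i$ for which the conclusion of \Cref{thm:bma-easy} fails on $\tsx=\ssx_{[i:i+R-1]}$; flipping one indicator $\bX^{(m)}_k$ affects at most the $R$ windows containing position $k$, so $\sum c_k^2 = nMR^2$. With $t=\Theta(2^{-0.1H}n)$ the McDiarmid exponent is
\[
  \Omega\!\left(\frac{(2^{-0.1H}n)^2}{nMR^2}\right)=\Omega\!\left(\frac{2^{-0.2H}\,n}{MR^2}\right).
\]
From \eqref{eq:H} and $\delta M\le 1/K$ one gets $M = HK/\log(1/(\delta M)) \le HK/\log K = O(\log n)$, while $R=8H\cdot 2^{0.08H}$ and $H\le 2\log n$ give $R^2=O(n^{0.32}\log^2 n)$ and $2^{-0.2H}\ge n^{-0.4}$. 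Plugging in, the exponent is $\Omega(n^{0.28}/\mathrm{polylog}(n))\gg n^{0.1}$. Your block-Hoeffding route is a valid alternative and even yields a slightly larger exponent, but the detour is unnecessary; the paper's one-line McDiarmid already does the job.
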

\begin{proof}
It follows from \Cref{lem:fewlongdeserts} that with probability at least $1-\exp(-n^{0.1})$, a random
$\bssx\sim \{0,1\}^n$ has no more than $2^{-0.13H}n$ length-${(2R)}$ subwords that contain at least
  one long desert.
Let $\ssx$ be such a string and
fix any location $i$ such that $\smash{x_{[i:i+R-1]}}$ contains no long deserts.
If the conclusion of \Cref{thm:bma-easy} holds on $\tsx:=\smash{\ssx_{[i:i+R-1]}}$ over subwords of $\bssy^{(m)}$ that originate from $\smash{\ssx_{[i:i+R-1]}}$, then
  $\BMA$ must succeed at location $i$.
It follows from \Cref{thm:bma-easy} that this happens with probability at least $1-2^{-H}$.

We will apply McDiarmid's inequality.
Note that each of the $nM$ independent random variables (each of which indicates whether or not a bit of $\ssx$ is included in $\bD^{(m)}$)
  can only change the number we care about (i.e. the number of locations $i$ such that 
  the conclusion of \Cref{thm:bma-easy} holds  on $\tsx:=\ssx_{[i:i+R-1]}$  over subwords of $\smash{\bssy^{(m)}}$ that originate from $\ssx_{[i:i+R-1]}$)
  by no more than $R$.~It~follows that with probability at least $\smash{1-\exp(-n^{0.1})}$,
  $\BMA$ succeeds on all except $\smash{2^{-0.1H}n}$ many locations $i$ in $\ssx$ such that $\ssx_{[i:i+R-1]}$ has no long deserts.
The lemma follows.
\end{proof}

\begin{lemma}\label{skmplem}
  $\bD^*\sim \calD$  satisfies the following two properties with 
  probability at least $1-1/n^2$: \textup{(}note that $\abs{\bssy^\ast} = n - \abs{\bD^\ast}$\textup{)}
\begin{flushleft}\begin{enumerate}
\item $\source^*(5\tau \log n)=O(\log n)$ and 
  $$\min\big(\source^*(\abs{\bssy^\ast}-R),\, \source^*(\abs{\bssy^\ast}-5\tau \log n)\big)\ge n-(2R+O(\log n));$$
\item There are at most  $ 2^{-0.2H}n$ values of $i\in [n]$ such that $\source^*(\ell)<i<\source^*(\ell+1)$
  for some $\ell\in [\abs{\bssy^\ast}-1]$ for which 
  $\source^*(\ell+1)-\source^*(\ell)\ge 2H$. 
\end{enumerate}
\end{flushleft}
\end{lemma}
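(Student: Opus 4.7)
The plan is to establish Properties 1 and 2 separately, both via standard concentration arguments on the randomness of $\bD^\ast$, then conclude by a union bound.

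For Property 1, both statements follow from Chernoff bounds on the number of undeleted positions in short intervals near the two ends of $[n]$. The claim $\source^\ast(5\tau \log n) = O(\log n)$ amounts to: among the first $C \log n$ positions of $\bssx$ (for a constant $C$ to be chosen), at least $5\tau \log n$ are undeleted. Since $\delta \le 1/K$, the expected number of undeleted positions is $(1-\delta) C \log n$, so for $C$ sufficiently large in terms of $\tau$ and $K$, a Chernoff bound gives this event with probability $1 - n^{-10}$. Similarly, $\source^\ast(|\bssy^\ast| - R) \ge n - (2R + O(\log n))$ is equivalent to the last $2R + O(\log n)$ positions of $\bssx$ containing at least $R$ undeleted positions; the expectation is $(1-\delta)(2R + O(\log n)) \gtrsim 2R$, well above the threshold $R$, so again a Chernoff bound gives this with probability $1 - n^{-10}$. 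The same calculation handles $\source^\ast(|\bssy^\ast| - 5\tau \log n)$ (which is only a weaker requirement).

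For Property 2, the key observation is that a position $i \in [n]$ satisfies the condition exactly when $i \in \bD^\ast$ and $i$ lies in a maximal run of consecutive deletions of length at least $2H - 1$. Equivalently, $i$ is ``bad'' iff some window of $2H-1$ consecutive positions containing $i$ is entirely contained in $\bD^\ast$. For each $i$ there are at most $2H-1$ such windows, and each has probability $\delta^{2H-1}$ of being all-deleted, so by a union bound the expected number of bad positions is at most
\[
n \cdot (2H-1) \cdot \delta^{2H-1} \le 2^{-0.3 H} \cdot n,
\]
using $\delta \le 1/K$ with $K$ sufficiently large and $2H - 1 \le O(\log n)$. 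To upgrade this to a high-probability statement, I would apply McDiarmid's inequality (\Cref{thm:Mcdiarmid}), viewing $\bD^\ast$ as determined by $n$ independent $\delta$-biased indicators. The Lipschitz-type observation is that flipping a single indicator changes the bad count by at most $O(H)$: if two neighbouring maximal runs were each of length $< 2H - 1$ before a flip, then after flipping an intermediate position they merge into a run of length at most $2(2H-2) + 1 = 4H - 3$, which is the worst case. With Lipschitz constants $c_k = O(H) = O(\log n)$ and $\sum_k c_k^2 = O(n H^2)$, McDiarmid yields that the number of bad positions exceeds $2^{-0.2H} n$ with probability at most $\exp(-n^{\Omega(1)})$.

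The only non-routine piece is the Lipschitz bound in the McDiarmid step, and the main thing to verify is precisely that merging two sub-threshold runs is the worst case (merging one bad run with a sub-threshold run, or two bad runs, changes the count by only $O(H)$ or $1$ respectively). A union bound over the $O(1)$ failure events above yields total failure probability at most $1/n^2$, as required.
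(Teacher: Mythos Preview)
Your proposal is correct and follows essentially the same approach as the paper: Property~1 by Chernoff and Property~2 by bounding the expected bad count and then applying McDiarmid with an $O(H)$ Lipschitz bound. The only minor difference is that the paper bounds the per-position bad probability via the simpler observation that if $i$ is counted then either $[i-H+1:i]\subseteq \bD^\ast$ or $[i:i+H-1]\subseteq \bD^\ast$ (giving $2\delta^{H}\le 2^{-H}$), whereas you use the slightly sharper but equivalent-in-spirit bound $(2H-1)\delta^{2H-1}$; both suffice.
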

\begin{proof}
  The first part follows from a Chernoff bound.

For the second part, note that for $i$ to be counted, it must be the case that 
  either $\smash{[i- H+1:i]}$ $\smash{\subseteq \bD^*}$ or $\smash{[i:i+ H-1]\subseteq \bD^*}$, which occurs
  with probability at most $2\cdot \delta^{H}\le 2^{-H}$.
The second part then follows from an application of McDiarmid's inequality.
\end{proof}

To describe our final condition on $\bssx\sim \{0,1\}^n$ and $\bD^*\sim \calD$,
  we introduce a procedure which we call $\SBMA$. $\SBMA $ takes as input a string $\ssy^*$, a location $\ell^*$ such that $|\ssy^*|\ge \ell^*+R$,  
  and 
  a reference string $\ssz\in \{0,1\}^R$.
  Let $\curr^*(1)=\ell^*$.
The $\SBMA$ procedure repeats the following for $R$ rounds:
In the $t$-th round, $t\in [R]$,
  $\SBMA$ compares $\ssz_t$ with $\smash{\ssy^*_{\curr^*(t)}}$ and 
  set $\curr^*(t+1)=\curr^*(t)+1$ if they match and 
  {$\curr^*(t+1)=\curr^*(t)$ if they do not match.}
  {For each $t$ we also define $\pos^*(t)= \source^* \bigl(\curr^*(t)\bigr).$}
  \ignore{
}
After the final ($R$-th) round $\SBMA $ outputs 
\[
\pos^*(R+1) :=
\source^* \bigl(\curr^*(R+1)\bigr).
\]

 {Intuitively, $\SBMA$ outputs the final location of the pointer $\ell^\ast$ after a successful run of \BMA.}
Now we state the final condition on $( {\bssx,\bD^*})$.
We say $\BMA^*$ \emph{succeeds on a source string $\ssx\in \{0,1\}^n$ with 
  respect to $\ssy^*$} if for all but at most $2^{-0.1H}n$ many $\ell^*\in [|\ssy^*|-R ]$, we have 
\begin{equation}\label{haheeq}
\SBMA \big(\ssy^*,\ell^*,\ssx_{[i:i+R-1]}\big) \le  i+R +G,\quad\text{for every $i\in [\source^*(\ell^*)-2H:\source^*(\ell^*)]$.}
\end{equation}
Note that from an argument similar to the proof of \Cref{clm:dist-positivity} we always have 
\begin{equation}\label{hahehaheeq}
\SBMA \big(\ssy^*,\ell^*,\ssx_{[i:i+R-1]}\big)\ge i+R,
\end{equation}
so intuitively, $\SBMA$ succeeds on $\ssx$ with respect to $\ssy^*$ if for almost every $\ell^\ast$, the output of $\SBMA$ on input 
$(\ssy^*,\ell^*,\ssx_{[i:i+R-1]})$ is close to the ``right value'' $i + R$.

\begin{lemma}\label{difflemma}
With probability at least $1-2\exp(-n^{0.1})$ over $\bssx\sim \{0,1\}^n$ and $\bD^*\sim \calD$, $\BMA^*$ succeeds on $\bssx$ with respect
 to $\bssy^*$.
\end{lemma}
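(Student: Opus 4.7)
The plan is to imitate the analysis of \Cref{lemma:rw-easy}: for each pair $(\ell^*, i)$ with $i \in [\source^*(\ell^*) - 2H, \source^*(\ell^*)]$, view the $\SBMA$ process as inducing a Markov chain on $\dist^*(t) := \pos^*(t) - (i + t - 1)$ that follows exactly the same transition rules as the $\idist$ process of \Cref{sec:bma} (with source string $\ssx_{[i:i+R-1]}$), but starting from $\dist^*(1) = \delta_0 := \source^*(\ell^*) - i \in [0, 2H]$ instead of $0$. Condition (\ref{haheeq}) is equivalent to asking that for almost every $\ell^*$ and every $i$ in its window, this chain does not reach level $G = 4H$ within $R$ rounds. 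As a first step, I would apply \Cref{lem:fewlongdeserts} to obtain that, with probability at least $1 - \exp(-n^{0.1})$, at most $2^{-0.13H}n$ indices $i$ are ``bad'' in the sense that $\bssx_{[i:i+2R-1]}$ contains a long desert; injectivity of $\source^*$ then forces at most $(2H+1)\cdot 2^{-0.13H}n \le 2^{-0.12H}n$ values of $\ell^*$ to have any bad $i$ in their window, and for the remaining ``good'' $i$ the analog of \Cref{claim:dist-desert} is available throughout all $R$ rounds. By the argument of \Cref{claimhehe1}, the super-step chain $\bX_0,\ldots,\bX_s$ (with $s = R/L$) initialized at $\bX_0 = \delta_0$ stochastically dominates $\dist^*$ at super-step scale.

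To bound $\Pr[\max_{t \le s} \bX_t > G]$ when $\bX_0 = \delta_0$, I would couple $\bX$ with a zero-initial chain $\bY$ (with $\bY_0 = 0$) using the same sampled up/down action at each round. Because the event of interest depends only on the trajectories up to the first time $G$ is exceeded, I would first replace both chains by their ``$\le G$''-only variants $\bX', \bY'$ (which always apply the floor-at-zero transition rule, never the ``stay-in-place'' rule above $G$); these agree with the originals until the first excursion above $G$, so the hitting-time event is preserved. Under the coupled dynamics the two invariants $\bX'_t \ge \bY'_t$ and $\bX'_t - \bY'_t \le \delta_0$ are maintained, giving $\Pr[\max_t \bX_t > G] \le \Pr[\max_t \bY_t > G - \delta_0] \le \Pr[\max_t \bY_t > 2H]$. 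The inductive bounds \Cref{eqeqeq111,eqeqeq222} apply verbatim to $\bY$ starting at $0$ (the base case is trivially satisfied since $\Pr[\bY_0 = c] = 0$ for $c \ge 1$), and reading off these bounds at the intermediate threshold $2H$ in place of $G$, followed by a union bound over $\ell \in [0, s]$, yields $\Pr[\max_t \bY_t > 2H] \le (s+1)\cdot O\!\left(\alpha(2\alpha)^{2H}\right)$. Substituting $\alpha = \delta L \le 8\sqrt{\delta M}/K$, $s = 2^{0.08H}$, and using $(\delta M)^H \le 2^{-H\log K}$ for $K$ sufficiently large, this is bounded by $2^{-\Omega(H)}$.

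Let $N$ denote the total number of pairs $(\ell^*, i)$ with good $i$ for which $\SBMA$ fails. Summing the per-pair bound over at most $n \cdot (2H+1)$ pairs yields $\E[N] \le 2^{-\Omega(H)}n$. I would then apply \Cref{thm:Mcdiarmid} to $N$ as a function of the $2n$ independent random bits defining $\bssx$ and $\bD^*$: a careful accounting shows that each bit change affects at most $O(R H)$ pairs (the bit can influence either the $O(R)$-long segment of $\bssy^*$ that an $\SBMA$ trajectory traverses, or the $O(R)$-long range of reference windows $\ssx_{[i:i+R-1]}$ that contain the position of the flipped bit, and for each affected $i$ only $O(H)$ values of $\ell^*$ matter). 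Since $R = O(n^{0.16} \log n)$ and $H = O(\log n)$, the resulting Lipschitz constant is $\tilde O(n^{0.16})$, which is comfortably small enough to conclude via McDiarmid that $N \le 2^{-0.2H}n$ with probability at least $1 - \exp(-n^{0.1})$. Combined with the deterministic bad-$i$ contribution $2^{-0.12H}n$ from the first paragraph, the total number of $\ell^*$ violating (\ref{haheeq}) is at most $2^{-0.1H}n$ with probability at least $1 - 2\exp(-n^{0.1})$, as required.

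The main obstacle is the coupling step itself: the original chains' downward transitions differ depending on whether the state is $\le G$ or $> G$, which can break the invariant $\bX_t - \bY_t \le \delta_0$ once $\bX$ exits the $\le G$ regime. The resolution described above is to substitute the ``$\le G$''-only variants $\bX', \bY'$, which preserves the event of interest (first exceedance of $G$) while allowing the coupling invariants to be maintained along every sample path.
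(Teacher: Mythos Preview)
Your proposal is correct and follows the same three-step skeleton as the paper's proof: (i) invoke \Cref{lem:fewlongdeserts} to control the pairs $(\ell^*,i)$ whose window contains a long desert; (ii) for the remaining pairs, bound the per-pair failure probability of \Cref{haheeq} by reducing to a zero-started chain evaluated at the lower threshold $2H$; (iii) apply McDiarmid to pass from expectation to a high-probability bound on the number of bad pairs.

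The only real divergence is in how step (ii) is carried out. The paper argues directly at the fine time scale that $\idist(t)+\delta_0$ stochastically dominates $\dist^*(t)$ (this is immediate because the two processes share the same one-step kernel and the shifted chain has a floor at $\delta_0$ rather than $0$), and then reads off $\Pr[\dist^*(R+1)>G]\le \Pr[\idist(R+1)>2H]$ from \Cref{claimhehe1} and the tail bounds \Cref{eqeqeq111,eqeqeq222}. Your route instead passes to the super-step chain $\bX$ started at $\delta_0$, substitutes the ``$\le G$-only'' variant to keep the coupling invariants $\bY'_t\le \bX'_t\le \bY'_t+\delta_0$ intact, and then union-bounds $\Pr[\max_\ell \bY_\ell>2H]$. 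Both arrive at a $2^{-\Omega(H)}$ per-pair bound; the paper's shift argument is a touch shorter since it needs neither the variant substitution nor the union bound over super-steps, but your coupling is perfectly valid.

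Two minor remarks. First, in the McDiarmid step the paper fixes $\ssx$ first (so the concentration is only over $\bD^*$), whereas you concentrate jointly over $(\bssx,\bD^*)$; either works. Second, your threshold $2^{-0.2H}n$ is a bit aggressive: with Lipschitz constant $\tilde O(RH)=\tilde O(n^{0.16})$ and $H$ close to $2\log n$, the McDiarmid exponent comes out closer to $n^{0.08}/\polylog(n)$ than $n^{0.1}$. The paper's stated $2^{-0.15H}n$ has the same issue, and in either case the resulting probability is still far below $1/n$, so the downstream use in \Cref{corocoro} is unaffected; but if you want the literal $\exp(-n^{0.1})$ in the lemma statement, relax the threshold to, say, $2^{-0.12H}n$.
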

\begin{proof}
We will show that with high probability, the number of pairs $\ell^*$ and $i\in [\source^*(\ell^*)-2H:\source^*(\ell^*)]$
  that violate \Cref{haheeq} is at most $2^{-0.1H}n.$
First, note that it follows from \Cref{lem:fewlongdeserts} that with probability at least $\smash{1-\exp(-n^{0.1})}$,
the number of length-$ {(2R)}$ subwords of $\bssx\sim \{0,1\}^n$ that have at least one long desert is 
  at most $\smash{2^{-0.13H}n}$.
Fixing such an $\ssx\in \{0,1\}^n$ in the rest of the proof, we show that with
  probability at least $1-\exp(-n^{0.1})$ over $\bD^*\sim \calD$,
  $\BMA^*$ succeeds on $\ssx$ with respect to $\bssy^*$, from which the lemma follows.


To this end, we consider an $\ell^*$ and an $i$ in the window such that 
  $\ssx_{[i:i+2R-1]}$ has no long deserts. (By doing this we skipped no more than $2H\cdot 2^{-0.13H}n$ many pairs, which is much smaller than our target of $2^{-0.1H}n$.)
The idea of the argument is to upper-bound the probability that~\Cref{haheeq} is violated at $\ell^*$ and $i$ over $\bD^*\sim \calD$, and then apply McDiarmid's inequality to finish the proof.

We note that whether \Cref{haheeq} holds or not only depends
  on the $[i:i+R+G]$ window of $\ssx$, because if $\pos^*(t)$ ever becomes
  larger than $i+R+G$ then \Cref{haheeq} is already violated.
Since $R+G<2R$, this subword $\ssx_{[i:i+R+G]}$ of $\ssx$ has no long deserts.
Similar to the analysis of $\BMA$, we define $\dist^*(t)$ for each $t= 1,\ldots,R,R+1$ as
\[
  \dist^*(t)
  := \pos^*(t) -t - (i-1),
\]
where $\pos^*(t)$ is from the execution of $\SBMA$.
Note that 
\[
  \dist^\ast(1) = \pos^\ast(1) - i = \source^\ast(\ell^\ast) - i \in [0,2H] .
\]
So the condition (\ref{haheeq}) can be restated as $\dist^*(R+1)\le G$.


Recall the random variables $\{\idist(t)\}_{t\in[R+1]}$ defined in \Cref{sec:bma}.
We claim that the random variable $\idist(t) + (\source^\ast(\ell^*) - i)$ stochastically dominates $\dist^\ast(t)$ for every $t \in [R+1]$.
To see this, observe that for every $c \ge 0$,
\[
  \Pr\bigl[ \idist(1) \le c\bigr]
  \le \Pr\bigl[ \dist^\ast(1) \le c + (\source^\ast(\ell^\ast) - i) \bigr]
  = 1 .
\]
Moreover, conditioned on $\idist(t) + (\source^\ast(\ell^*) - i) = \dist^\ast(t)$, the random variable $\idist(t+1) + (\source^\ast(\ell^*) - i)$ stochastically dominates $\dist^\ast(t+1)$. (They are identical when $\idist(t) \ne 0$.)
Also, for any $a \ge b$, we have that $\dist^\ast(t+1)$ conditioned on $\dist^\ast(t) = a$ stochastically dominates $\dist^\ast(t+1)$ conditioned on $\dist^\ast(t) = b$.

It follows from \Cref{claimhehe1} and \Cref{eqeqeq222} that
\begin{align*}
  \Pr[\dist^\ast(R+1) > G]
  &\le \Pr[\idist(R+1) > G - (\source^\ast(\ell^\ast) - i)] \\
  &\le \Pr[\idist(R+1) > G - 2H] \\
  &\le (s+1) \cdot \frac{\alpha}{1-\alpha} \left(\frac{2\alpha(1-\alpha)}{1-2\alpha}\right)^{2H} \\
  &\le 2^{-H} ,
\end{align*}
where we used $s = R/L = 2^{0.01L}$, and $G = 4H$, and $\alpha$ sufficiently small.

\ignore{

Similar to the analysis of $\BMA$ in \Cref{sec:bma},
  we let $\bX_0,\ldots,\bX_{s}$ be a Markov chain, with $s=R/L$, defined as follows:
$\bX_0=\source^*(\ell^*)-i\le 2H$ and the transition from $\bX_t$ to $\bX_{t+1}$ for each $t$ is 
  given by \Cref{transition1} and \Cref{transition2}.
It follows from the proof of \Cref{claimhehe1} that 
  the probability of $\dist^*(R+1)>G$ is at most that of $\bX_s> G$ 
  so we focus on the probability of the latter.\xnote{I feel that there must be better way to do what I did below.} 
Also note for the latter to happen, there must exist $t_1<t_2$ such that the following event $\calE_{t_1,t_2}$ holds:
  $$\bX_{t_1}\le 2H,\quad \bX_{t_2}>G=4H \quad\text{and}\rnote{Why is it ``$0 <$'' rather than ``$0 \leq$''  in what follows - couldn't the sequence of values go down from the initial $\bX_{t_1}$ value to 0 and then make it up to $G+1$?}\quad \red{0<}\bX_{t_1+1},\ldots,\bX_{t_2-1}\le G .$$
Below we consider each of these $s^2$ pairs of $t_1$ and $t_2$, upper-bound
  the probability of $\calE_{t_1,t_2}$ and then apply a union bound.

Fix $t_1<t_2$.
We let $\bY_1,\ldots,\bY_s$ denote the following random variables derived from the Markov chain $\bX_0,\bX_1,\ldots,\bX_s$:
  $\bY_\ell=-1$ if the transition from $\bX_{ \ell-1}$ to $\bX_{ \ell}$ follows the first line
  (of \Cref{transition1} or \Cref{transition2}) and $\bY_\ell=c$ if the transition follows the second line
   with $c\ge 1$.
We~note that $\bY_1,\ldots,\bY_s$ are i.i.d. random variables.
Also $\calE_{t_1,t_2}$ implies that $\sum_{\ell=t_1+1}^{t_2} \bY_\ell>2H$ so~it suffices to upper-bound
  the probability that $\sum_{\ell=t_1+1}^{t_2} \bY_\ell>2H$.
For notational convenience we focus on $\sum_{\ell\in [k]} \bY_\ell$ with $k=t_2-t_1$.

For this purpose, let $\bZ_1,\ldots,\bZ_k$ be the following i.i.d. random variables:
$$
\bZ_\ell=\begin{cases} -1 & \text{with probability $1-2\alpha$}\\
c & \text{with probability $(2\alpha)^c(1-2\alpha)$, for each $c \geq 1$}\end{cases}
$$
Then each $\bY_\ell$ is dominated by $\bZ_\ell$ (as $\alpha$ is sufficiently small so that $(2\alpha)^c(1-2\alpha)\ge \alpha^c$).
Furthermore, $\bZ_\ell$ is dominated by $2\bW_\ell-3$, where $\bW_\ell$
  follows the geometric distribution with $\bW_\ell=c+1$ with probability
  $(2\alpha)^c (1-2\alpha)$.
So it suffices to upper-bound the probability of 
 $\sum_{\ell\in [k]}  \bW_\ell> H+1.5k$.
 
Finally, the probability of $\sum_{\ell\in [k]}\bW_\ell>H+1.5k$ can be upper-bounded by
$$
2^{H+1.5k}\cdot (2\alpha)^{H+0.5k}\le 2^{-H}
$$
when $\alpha$ is sufficiently small.
By a union bound over $s^2$ many $\calE_{t_1,t_2}$ (recall that $s=R/L=2^{0.08H}$), we have that \Cref{haheeq} is violated
  with probability at most $2^{-0.8H}$.
}

We now apply McDiarmid's inequality.
The expected number of pairs $\ell^*$ and $i$ such that $\ssx$~in ${[i:i+2R-1]}$ has no
  long deserts and  \Cref{haheeq} is violated is at most
  $2^{-H}\cdot 2Hn$.
Since \Cref{haheeq} only depends on deletions in the window of $[i:i+R+4H]$,
  each random variable can only change the number we care about above by $O(R)$.
It follows from McDiarmid's inequality that with probability at least $1-\exp(-n^{0.1})$,
  the number of such pairs is at most $2^{-0.15H}n$.
When this happens, the total number of pairs of $\ell^*$ and $i$ that violate \Cref{haheeq}
  (including those $i$'s in \Cref{lem:fewlongdeserts} that have at least one long desert in $[i:i+2R]$ of $\ssx$)
  is at most 
  $$(2H {+1})\cdot 2^{-0.13H}n +2^{-0.15H}n\le 2^{-0.1H}n.$$
This finishes the proof of the lemma.
\end{proof}

We are now ready to present the list of conditions on $\bssx,\bssy^*,\bssy^{(1)},\ldots,\bssy^{(M)}$:

\begin{corollary}\label{corocoro}
With probability at least $1-1/n$, $\bssx\sim \{0,1\}^n$ and $\bssy^*,\bssy^{(1)},\ldots,\bssy^{(M)}\sim\Del_\delta(\bssx)$
  satisfy all conditions stated in \Cref{thm:align-restated}, \Cref{keykeyBMAlemma}, \Cref{skmplem} and \Cref{difflemma}.
\end{corollary}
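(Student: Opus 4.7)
The plan is to prove this corollary via a straightforward union bound over the four probabilistic statements it references. Each of Theorem~\ref{thm:align-restated}, Lemma~\ref{keykeyBMAlemma}, Lemma~\ref{skmplem}, and Lemma~\ref{difflemma} individually guarantees its relevant event with high probability over the joint randomness of some subset of $\bssx, \bD^*, \bD^{(1)}, \ldots, \bD^{(M)}$; since $\bssy^*$ is determined by $(\bssx,\bD^*)$ and each $\bssy^{(m)}$ is determined by $(\bssx,\bD^{(m)})$, every event can be viewed as defined on the common probability space of these underlying independent random variables.

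Concretely, the failure probabilities are $1/n^2$ for Theorem~\ref{thm:align-restated}, $2\exp(-n^{0.1})$ for Lemma~\ref{keykeyBMAlemma}, $1/n^2$ for Lemma~\ref{skmplem}, and $2\exp(-n^{0.1})$ for Lemma~\ref{difflemma}. Applying the union bound, the probability that any of these four events fails is at most
\[
\frac{2}{n^2} + 4\exp(-n^{0.1}),
\]
which is at most $1/n$ for all sufficiently large $n$ (recall from \Cref{eq:assumptions} that $\delta \geq 1/n^2$, so $n$ is taken to be sufficiently large). This establishes the claimed $1-1/n$ lower bound on the probability that all four events hold simultaneously.

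There is no real obstacle here since the hard work has already been carried out in proving the four individual lemmas; the only thing to check is that these statements are indeed over compatible probability spaces and that the sum of failure probabilities is genuinely $o(1/n^{1-o(1)})$, both of which are immediate. Hence the corollary follows.
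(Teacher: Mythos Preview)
Your proposal is correct and matches the paper's approach: the corollary is stated without proof in the paper precisely because it follows immediately from a union bound over the four referenced results, exactly as you have written. One tiny quibble is that the parenthetical ``$\delta \geq 1/n^2$, so $n$ is taken to be sufficiently large'' is a non sequitur (the lower bound on $\delta$ does not by itself force $n$ large); you could simply say the bound holds for all sufficiently large $n$, which is implicit throughout the paper.
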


\subsection{Deterministic Analysis}\label{mainanalysispart2}

We prove that when $\ssx,\ssy^*,\ssy^{(1)},\ldots,\ssy^{(M)}$ satisfy all conditions 
  in \Cref{thm:align-restated}, \Cref{keykeyBMAlemma}, \Cref{skmplem}, and \Cref{difflemma}, 
  the string $\ssw$ that $\MAIN$ returns on $\ssy^*,\ssy^{(1)},\ldots,\ssy^{(M)}$ must have edit distance at most $2^{-0.01H}n$ from $\ssx$. 
This, together with \Cref{corocoro}, finishes the proof of \Cref{theo:mainReconstruct}.

\def\goodset{\textsf{Good}}
\def\badset{\textsf{Bad}}




We start the proof with some notation. Let $P$ be the following interval of locations of $\ssy^*$:
$$
P=\Big[5\tau \log n: |\ssy^*|-\max\big(5\tau \log n,R\big)\Big].
$$
Let $Q$ be the set of locations $\ell^*\in P$ such that all three conditions below hold:
\begin{flushleft}\begin{enumerate}
\item[(i)] The output $(\ell^{(1)},\ldots,\ell^{(M)})$
  of $\Align\hspace{0.04cm}(\ell^*,\ssy^*,\ssy^{(1)},\ldots,\ssy^{(M)})$ satisfies
\begin{enumerate}
\item At least 90\% of $\source^{(m)}(\ell^{(m)})$, $m \in [M]$, agree on the same location $i^* \in [n]$, and \vspace{-0.1 cm}
\item Their consensus location $i^*$ satisfies 
  $i^*\in [\source^*(\ell^*) - 2H: \source^*(\ell^*) ]$;
\end{enumerate}
\item[(ii)] Running $\BMA$ on $ {\ssy^\ast,} \ssy^{(1)},\ldots,\ssy^{(M)}$~starting at $\image^\ast(\ssy^\ast), \image^{(1)}(i^*),\ldots, \image^{(M)}(i^*)$ returns $\ssx_{[i^*:i^*+R-1]}$ and 
  the consensus is achieved by at least $90\%$ of strings in every round;
\item[(iii)] The pair $\ell^*$ and $i^*$ satisfies $\SBMA (\ssy^*,\ell^*,\ssx_{[i^*:i^*+R-1]}) \le  i^*+R +G$.
\end{enumerate}\end{flushleft}
Using conditions from \Cref{thm:align-restated} (for (i)) \Cref{keykeyBMAlemma} (for (ii)), and \Cref{difflemma} (for (iii)),
  we have $|P\setminus Q|\le O(H2^{-0.1H}n)$.
If $\MAIN$ uses {a value $\ell^*$ that belongs to $Q$} in an execution of the main loop, the string concatenated
  to $\ssw$ during this execution of the loop must be $\ssx_{[i^*:i^*+R-1]}$ for some 
  $i^*\in [\source^*(\ell^*) - 2H: \source^*(\ell^*) ]$.
Furthermore, before incrementing $\ell^*$ at the end of this loop,
  we have from (iii) that $\source^*(\ell^*)\le i^*+R+G$.
In the rest of the proof, we write
$\ell^*_1,\ell^*_2,\ldots \in P$ to denote the 
  locations of $\ssy^*$ that are used in each execution of the main loop of $\MAIN$. 
We use $\textsf{Good}$ to denote the set of $k$ such that $\ell^*_k\in Q$ and
  $\textsf{Bad}$ to denote the set of $k$ such that $\ell^*_k\notin Q$.
For each $k$, we write $\ssw^k$ to denote the binary string concatenated
  to the end of $\ssw$ in Step~5 of the $k$-th execution of the loop,
so the output string of $\MAIN$ is $\ssw=\ssw^1\ssw^2\cdots$.
Our analysis bounding the edit distance between $\ssw$ and $\ssx$ will
  proceed in three steps.


\smallskip

{\bf First step:} In the first step we delete from $\ssw$ every $\ssw^k$ with $\ell_k^*\in \badset$.
Let $\ssw'$ denote the concatenation of all and only the $\ssw^k$ for which $\ell^*_k\in \goodset$.
We have that the edit distance between $\ssw$ and $\ssw'$ is at most 
\begin{equation*} 
|\badset|\cdot R\le |P\setminus Q|\cdot R\le  O(H 2^{-0.1H}n)
\cdot R\le  {2^{-0.011H}n},
\end{equation*}
where for the last step we recall that $R=8H2^{0.08H}.$

\smallskip 

{\bf Second step:}
 {After the first step $\ssw'$ is a concatenation of subwords of $\ssx$ but because these subwords are not necessarily disjoint $\ssw'$ is not necessarily a subsequence of $\ssx$ yet.}
In the second step we delete some bits of $\ssw'$ to obtain a subsequence of $\ssx$.
For each $k\in \goodset$, recall that
$i^*_k\in [\source^*(\ell_k^*)-2H: \source^*(\ell_k^*)]$ and
$\ssw^k=\ssx_{[i^*_k:i^*_k+R-1]}$, and that $\ssw'$ is the concatenation of $\ssw^k$ across all $k \in \goodset.$
For any two consecutive $k'<k$ in $\goodset$, we also have 
  $\source^*(\ell^*_{k})> i^*_{k'}+R$ using \Cref{hahehaheeq} and so
the windows $[
 {\source^\ast(\ell^\ast_k)}:i^*_k+R-1]$ are disjoint and thus, defining $\ssw''$ to be 
the concatenation of the subwords $\ssx_{[
 {\source^\ast(\ell^\ast_k)}:i^*_k+R-1]}$ of $\ssx$, we get that $\ssw''$ is a subsequence of $\ssx$.
To bound the edit distance between $\ssw'$ and $\ssw''$ we note that each 
new window $[
\source^\ast(\ell^\ast_k):i^*_k+R-1]$ can be obtained from $[i^*_k:i^*_k+R-1]$ by deleting
  no more than $2H$ indices at the beginning.
Using
$$
n=|\ssx|\ge |\ssw''|\ge|\ssw'|-|\goodset|\cdot 2H=|\goodset|\left(R-2H\right)
\quad \text{and}\quad 2H<R/2,
$$
we have that $|\goodset|\le 2n/R$ and thus,
the edit distance between $\ssw'$ and $\ssw''$ is at most
\begin{equation*} 
|\goodset|\cdot 2H \le \frac{2n}{R}\cdot 2H\le{2^{-0.07H}n}.
\end{equation*}

\smallskip

{\bf Third step:}
Given that $\ssw''$ (the concatenation of subwords of $\ssx$ in ${[i_k^{**}:i_k^*+R-1]}$ for 
  each $k\in \goodset$) is a subsequence of $\ssx$, to bound its edit distance
  from $\ssx$ it suffices to bound the number of $j\in [n]$ such that 
  $j\notin [i_k^{**}:i_k^*+R-1]$ for any $k\in \goodset$.
The following two cases cover every such $j\in [n]$:
\begin{flushleft}\begin{enumerate}
\item $j<\source^*(5\tau \log n)$ or $j>\source^*(|\ssy^*|-\max(5\tau\log n,R))$. There are only $O(R+\log n)$ many such $j$ by the first part of \Cref{skmplem}.
\item Otherwise, there is a unique $k$-th loop such that $\source^*(\ell^*_k) \le j<\source^*(\ell^*_{k+1})$. 
\end{enumerate}\end{flushleft}
We split the second case further into two cases: $k\in \goodset$ or $k\in \badset$.

We start with the case when $k\in \badset$ and bound the total number of $\ssx_i$ skipped.
In this loop, $\BMA$ starts with location $\ell_k^*$ of $\ssy^*$ and ends at location $\ell^*_{k+1}-1$.
Given that $\BMA$ only has $R$ rounds we have 
  $\ell^*_{k+1}-1\le \ell_k^*+R$.
The number of $j$'s skipped because of these loops is thus captured by 
\begin{align*}
  \sum_{k\in \badset} \bigl( \source^*(\ell^*_{k+1})-\source^*(\ell^*_k) \bigr)
  &= \sum_{k\in \badset}\sum_{\ell=\ell^*_k}^{\ell^*_{k+1}-1} \bigl( \source^*(\ell+1)-\source^*(\ell) \bigr) \\
  &{\le |\badset|R+ \sum_{k\in \badset}\sum_{\ell=\ell^*_k}^{\ell^*_{k+1}-1} \bigl( \source^*(\ell+1)-\source^*(\ell)-1 \bigr) }.
\end{align*} 

Using the second part of \Cref{skmplem}, the above can be upperbounded by
\[
   {|\badset| R+ |\badset| R\cdot 2H+2^{-0.2H}n}
   \le 2^{-0.011H}n.
\]

We finish with the case when $k\in \goodset$ and bound the total number of $j$'s skipped
  because of some $k\in \goodset$.
Given that every bit of $\ssx$ in the window of 
  $[ {
\source^\ast(\ell^\ast_k)}:i^*_k+R-1]$ is included,
  the number of $j$'s skipped is captured by
\begin{align*}
  \source^*(\ell_{k+1}^*) &- (i_k^*+R)\\ 
                          &=\source^*(\ell_{k+1}^*)-\source^*(\ell_{k+1}^*-1) 
                          +\source^*(\ell_{k+1}^*-1)-(i_k^*+R).
\end{align*}
Note that from (iii) we have $\source^*(\ell_{k+1}^*-1)-(i_k^*+R)\le G$.
As a result, the total number of $j$'s skipped is at most (again using the second part of \Cref{skmplem} and a similar argument as above)
$$
|\goodset| G +\sum_{k\in \goodset} \bigl( \source^*(\ell_{k+1}^*)-\source^*(\ell_{k+1}^*-1)\gray{-1} \bigr)
\le |\goodset|G+2^{-0.2H}n+|\goodset|\cdot 2H.
$$
Using $|\goodset|\le 2n/R$, the above is at most $2^{-0.07H}n$.

To summarize, the edit distance between $\ssw$ and $\ssx$ is at most 
$$
\stackrel{\text{first step}}{\overbrace{2^{-0.011H}n}}+
\stackrel{\text{second step}}{\overbrace{2^{-0.07H}n}}+
\stackrel{\text{third step, case 1}}{\overbrace{\vphantom{2^{-0.07H}}O(R+\log n)}}+
\stackrel{\stackrel{\text{\scriptsize{third step, case 2,}}}{k \in \goodset}}{\overbrace{2^{-0.011H}n}}+
\stackrel{\stackrel{\text{\scriptsize{third step, case 2,}}}{k \in \badset}}{\overbrace{2^{-0.07H}n}}
\le 2^{-0.01H}n.
$$
This finishes the proof of \Cref{theo:mainReconstruct}.
\qed


\section{Proof of \Cref{thm:mainlower}: Lower bound on expected edit distance from few traces}
\label{sec:lowerbound}

In this section we prove \Cref{thm:mainlower}. Recall that $\bssx \sim \{0,1\}^n$, $M \leq \Theta(1/\delta)$,
$\bssy^{(1)},\dots,\bssy^{(M)} \sim \Del_\delta(\bssx)$,  and {\tt A} is an arbitrary algorithm which, on input
$\delta$ and $\bssy^{(1)},\dots,\bssy^{(M)}$, outputs a hypothesis string $\widehat{\ssx}$ for $\bssx$. We prove \Cref{thm:mainlower} by
showing that $\E[\dedit(\widehat{\ssx},\bssx)] \geq n\cdot (\delta M)^{O(M)}$.

The main idea is to reduce the approximate trace reconstruction problem to the problem of computing the exact length of many runs of $0$'s ($1$-deserts) 
that are either of length $M$ or of length $M+1$.
We consider many instances of (a slight variation of) the following atomic problem: distinguish between a run of length $M$ and a run of length $M+1$, given $M$ ``traces" of the run at deletion rate $\delta$. 
This problem is equivalent to that of distinguishing between $\bX \sim \Bin(M, 1-\delta)$ and $\bX' \sim \Bin(M + 1, 1-\delta)$ given samples from a distribution that is either $\bX$ or $\bX'$. 
 (For technical reasons the actual atomic problem we work with, described in the next subsection, is the problem of distinguishing between two product distributions over non-negative integers which are closely related to these binomial distributions.)


\subsection{The atomic problem} \label{sec:atomic}

Consider the following two product distributions over pairs of non-negative integers:
\[
	\calD_0 := \Bin(M, 1-\delta) \times \Bin(M+1, 1-\delta); \quad \calD_1 := \Bin(M+1, 1-\delta) \times \Bin(M, 1-\delta).
\]


We consider a uniform prior distribution $\calP$ over $\{\calD_0, \calD_1\}$. In this subsection we prove the following lemma:

\begin{lemma} \label{lem:optdist}
Let ${\cal D}_{\bb} \sim \calP$, and let $p$ be the optimal (minimal) failure probability of any algorithm which is given a sample $\bS_M$ consisting of $M$ independent draws from ${\cal D}_{\bb}$ and aims to identify whether $\bb=0$ or $\bb=1$.  Then $p \geq (\delta M)^{c M}$ for some absolute constant $c>0.$
\end{lemma}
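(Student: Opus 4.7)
The plan is to use Le Cam's two-point method, reducing the optimal testing error to a Bhattacharyya coefficient calculation. With the uniform prior $\calP$ over $\{\calD_0, \calD_1\}$, the Bayes-optimal failure probability satisfies $p = \tfrac{1}{2}(1 - \mathrm{TV}(\calD_0^{\otimes M}, \calD_1^{\otimes M}))$. Writing $\mathrm{BC}(A,B) := \sum_x \sqrt{A(x)B(x)}$ for the Bhattacharyya coefficient, a Cauchy--Schwarz argument applied to $\sqrt{A(x)B(x)} = \sqrt{\min(A(x),B(x))}\cdot\sqrt{\max(A(x),B(x))}$ yields $\mathrm{BC}(A,B)^2 \leq (1-\mathrm{TV}(A,B))(1+\mathrm{TV}(A,B)) = 1 - \mathrm{TV}(A,B)^2$, hence $1 - \mathrm{TV}(A,B) \geq \tfrac{1}{2}\mathrm{BC}(A,B)^2$. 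Combining this with the tensorization identity $\mathrm{BC}(A^{\otimes M}, B^{\otimes M}) = \mathrm{BC}(A,B)^M$ gives
\[
  p \;\geq\; \tfrac{1}{4}\,\mathrm{BC}(\calD_0, \calD_1)^{2M}.
\]

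I would next exploit the fact that $\calD_0$ and $\calD_1$ are the same product distribution with its two factors swapped. Writing $P := \Bin(M,1-\delta)$ and $Q := \Bin(M+1,1-\delta)$, a direct expansion yields
\[
  \mathrm{BC}(P \otimes Q, Q \otimes P) = \sum_{x,y} \sqrt{P(x)Q(x)}\sqrt{P(y)Q(y)} = \mathrm{BC}(P,Q)^2,
\]
so it remains to lower bound the single coefficient $\mathrm{BC}(P,Q)$. Plugging in the explicit binomial probabilities and simplifying gives
\[
  \mathrm{BC}(P,Q) \;=\; \sqrt{(M+1)\delta}\cdot \E_{Y \sim P}\!\left[(M+1-Y)^{-1/2}\right].
\]
Retaining only the $Y = M$ atom inside the expectation (which contributes $P(M)\cdot 1 = (1-\delta)^M$) yields $\mathrm{BC}(P,Q) \geq \sqrt{(M+1)\delta}\,(1-\delta)^M$. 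The theorem hypothesis $M \leq \Theta(1/\delta)$ forces $\delta M = O(1)$ and therefore $(1-\delta)^M = \Omega(1)$, so $\mathrm{BC}(P,Q) = \Omega(\sqrt{\delta M})$.

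Chaining the bounds, $\mathrm{BC}(\calD_0, \calD_1) = \Omega(\delta M)$ and so $p \geq \tfrac{1}{4}\mathrm{BC}(\calD_0,\calD_1)^{2M} \geq (\delta M)^{cM}$ for some absolute constant $c>0$, as required. The entire argument is essentially a textbook Le Cam computation, so I do not anticipate any deep obstacle; the only point needing care is to absorb the implicit $\Omega(\cdot)$ constant in ``$\mathrm{BC}(P,Q) = \Omega(\sqrt{\delta M})$'' into the exponent of $(\delta M)^{cM}$. This is handled by splitting into the small-$\delta M$ regime (where the extra constant factor is dominated by, say, an additional factor of $(\delta M)^M$, giving $c = 3$) and the regime in which $\delta M = \Theta(1)$ (where the desired lower bound $(\delta M)^{cM}$ is trivially satisfied for any sufficiently large $c$).
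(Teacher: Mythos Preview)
Your argument is correct and the calculations check out. The route, however, is genuinely different from the paper's. The paper does not invoke total variation or the Bhattacharyya coefficient at all; instead it exhibits a single concrete outcome of $\bS_M$, namely the $M$-tuple $((M,M-1),\dots,(M,M-1))$, checks directly that the Bayes predictor outputs $1$ on it, and then computes that this outcome has probability $(\Theta(M\delta))^{\Theta(M)}$ under $\calD_0$. One half of that probability is already a lower bound on the Bayes error. Your approach trades this ad hoc ``one bad point'' calculation for the standard Le Cam machinery: you pay the price of a few extra inequalities (the $\mathrm{BC}$-vs-$\mathrm{TV}$ bound, tensorization, and a case split to absorb constants into the exponent), but you get a cleaner, more reusable argument that would adapt immediately to other pairs of product distributions. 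The paper's proof is shorter and entirely elementary, needing nothing beyond the binomial pmf; yours is more systematic. Both ultimately lean on the same phenomenon, namely that the atom $Y=M$ of $\Bin(M,1-\delta)$ has mass $\Theta(1)$ while the corresponding atom of $\Bin(M+1,1-\delta)$ has mass $\Theta(M\delta)$.
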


\begin{proof}
The optimal failure probability is achieved by the Bayes optimal predictor, which outputs 
0 if $\Pr[\calD_0 | \bS_M] \geq \Pr[\calD_1 | \bS_M]$ and outputs 1 if $\Pr[\calD_0 | \bS_M] < \Pr[\calD_1 | \bS_M]$. By Bayes' theorem, for $b \in \{0,1\}$ we have
\[
\Pr[\calD_b | \bS_M] = {\frac {\Pr[\bS_M  | \calD_b] \Pr[\calD_b]}{\Pr[\bS_M]}} =
 {\frac {\Pr[\bS_M  | \calD_b]}{2\Pr[\bS_M]}}
\]
so for any fixed outcome $S_M$ of the random variable $\bS_M$, the Bayes optimal predictor outputs 0 on $S_M$ if and only if 
\[
\Pr_{\bS_M \sim ({\cal D}_0)^M}[\bS_M=S_M] \geq
\Pr_{\bS_M \sim ({\cal D}_1)^M}[\bS_M=S_M].
\]

Consider the particular outcome of the $M$ draws which is

\[
\overbrace{(M,M-1), \dots, (M,M-1)}^{M\text{~pairs}},
\]
i.e.,~in each draw the outcome of the first coordinate is $M$ and the outcome of the second coordinate is $M-1$. It is clear that the Bayes optimal predictor, on this input, will output 1; to see this rigorously, the probability of this outcome under $\calD_1$ is 
\begin{equation} \label{eq:pair1}
\left({M + 1 \choose M} (1-\delta)^M \delta \cdot
{M \choose M-1} (1-\delta)^{M-1} \delta
\right)^M = 
\left((M+1)M (1-\delta)^{2M-1}\delta^2\right)^M
\end{equation}
while its probability under $\calD_0$ is 
\begin{equation} \label{eq:pair0}
\left({M \choose M} (1-\delta)^M  \cdot
{M+1 \choose M-1} (1-\delta)^{M-1} \delta^2
\right)^M = {\frac 1 {2^M}} \cdot \eqref{eq:pair1}
\end{equation}
But the probability of this outcome when the source distribution is $\calD_0$ is (recalling that $M \leq \Theta(1/\delta)$)
\[
\eqref{eq:pair0} = {\frac 1 {2^M}}\cdot  \eqref{eq:pair1} = \left(\Theta(M^2 \delta^2)\right)^M=
 \left(\Theta(M \delta)\right)^{\Theta(M)}.
\]
Since the probability that the source distribution is $\calD_0$ is $1/2$, it follows that the optimal failure probability for any $M$-sample algorithm for this distinguishing problem is at least
\[
{\frac 1 2}  \cdot \left(\Theta(M \delta)\right)^{\Theta(M)} = (M \delta)^{\Theta(M)}. \qedhere
\]
\end{proof}

\subsection{Direct sum (Paired Run Length Problem)} \label{sec:PRLP}

We define the \textbf{Paired Run Length Problem (PRLP)} as follows. Fix $B \in \N$. An instance of the PRLP is specified by a binary vector $\ssz = (\ssz_1, \ssz_2, \ldots, \ssz_B) \in \zo^B$. 
For an instance $\ssz$ of the PRLP, an algorithm is given as input samples of $B$-tuples of $M$ pairs from the product distribution $\calD_{\ssz} = \calD_{\ssz_1} \times \calD_{\ssz_2} \times \cdots \times \calD_{\ssz_B}$. 
It then must return some $\hat{\ssz} \in \zo^*$, 
with the objective of minimizing $\E[\dedit(\ssz, \hat{\ssz})]$.

We begin by recording a warmup lemma which states that the PRLP cannot be solved \emph{exactly} with success probability better than that obtained by solving each instance independently.

\begin{lemma} \label{lem:PRLP_exact}
Let $p = p(M, \delta) < 1/2$ be the optimal failure probability of any algorithm for the atomic problem from \Cref{lem:optdist}.
For a uniform $\bssz \sim \zo^B$, let {\tt A}$_{\mathrm{PRLP}}$ be any algorithm for the PRLP that is given $M$ samples from $\calD_{\bssz}$, and let $\hat{\bssz}$ be its output. Then $\Pr[\hat{\bssz} = \bssz] \leq (1-p)^B$.
\end{lemma}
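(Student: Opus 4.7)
\medskip

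\noindent\textbf{Proof plan for \Cref{lem:PRLP_exact}.} The plan is to reduce the problem to the atomic distinguishing problem of \Cref{lem:optdist} coordinate by coordinate, using that the prior over $\bssz$ is a product distribution and the sample $\bS = (\bS^{(1)},\dots,\bS^{(B)})$ drawn from $\calD_{\bssz}$ has the property that the pairs $(\bssz_i, \bS^{(i)})$ are mutually independent across $i \in [B]$.

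First, I would reduce to the case of deterministic algorithms: a randomized algorithm's success probability is an average over its internal coins of the success probabilities of the induced deterministic algorithms, so it suffices to upper-bound $\Pr[\hat{\bssz}=\bssz]$ when $\hat{\bssz}$ is a deterministic function of $\bS$. I may also assume without loss of generality that $\hat{\bssz}(\bS) \in \zo^B$, since shorter or longer outputs never match $\bssz$.

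Next, I would condition on $\bS$. By the independence of the coordinates under the prior and the product structure of $\calD_{\bssz}$, the posterior distribution of $\bssz$ given $\bS$ factors as $\Pr[\bssz = v \mid \bS] = \prod_{i=1}^B \Pr[\bssz_i = v_i \mid \bS^{(i)}]$ for each $v \in \zo^B$. In particular, writing $g = \hat{\bssz}(\bS)$,
\[
\Pr[\hat{\bssz}=\bssz \mid \bS] \;=\; \prod_{i=1}^B \Pr[\bssz_i = g_i \mid \bS^{(i)}]
\;\le\; \prod_{i=1}^B \max_{b \in \zo} \Pr[\bssz_i = b \mid \bS^{(i)}].
\]
Taking expectation over $\bS$ and using the mutual independence of $\bS^{(1)},\dots,\bS^{(B)}$ (which lets the expectation of the product split into a product of expectations) yields
\[
\Pr[\hat{\bssz} = \bssz] \;\le\; \prod_{i=1}^B \E_{\bS^{(i)}}\!\left[\max_{b \in \zo} \Pr[\bssz_i = b \mid \bS^{(i)}]\right].
\]

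Finally, each factor on the right is exactly the success probability of the Bayes-optimal predictor for a single instance of the atomic problem of \Cref{lem:optdist}, which equals $1-p$ by definition of $p$ as the optimal failure probability. This gives $\Pr[\hat{\bssz}=\bssz] \le (1-p)^B$, completing the proof.

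The argument is essentially a standard ``direct sum for optimal Bayes error under product priors,'' and I do not expect any real obstacle; the only mildly delicate point is noticing that the posterior factorization lets us pull the per-coordinate maximum outside the product before taking the expectation, and that independence of the $\bS^{(i)}$ then turns the expectation of the product into a product of expectations.
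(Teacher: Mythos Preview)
Your proof is correct and takes a genuinely different route from the paper's. The paper argues via the chain rule
\[
\Pr[\hat{\bssz}=\bssz]=\prod_{i=1}^B \Pr\bigl[\hat{\bssz}_i=\bssz_i \mid \hat{\bssz}_k=\bssz_k \text{ for all } k<i\bigr],
\]
and then, by contradiction, if some factor exceeds $1-p$ it constructs an explicit algorithm for the atomic problem by embedding the unknown bit at position $i^*$, simulating the remaining $B-1$ coordinates itself, and using rejection sampling until the conditioning event holds. Your approach instead conditions on the data, uses the product structure of the prior and likelihood to factor the posterior, upper-bounds by the coordinatewise Bayes-optimal posterior mass, and then splits the expectation using the marginal independence of the $\bS^{(i)}$. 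Your argument is more direct and avoids the slightly awkward rejection-sampling step; the paper's reduction is more operational and makes the connection to the atomic problem explicit as an algorithmic reduction rather than via Bayes optimality. Both are short, and neither requires anything beyond what is already established in \Cref{lem:optdist}.
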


\begin{proof}
This is a consequence of the independence of the distributions $\calD_{\bssz_{i}}, i \in [B]$; we now give details.
	We have
	\[
    \Pr[\hat{\bssz} = \bssz] = \prod_{i=1}^{B} \Pr\bigl[\bssz_{i} = \hat{\bssz}_{i} \mid \bssz_{k} = \hat{\bssz}_{k} \text{ for all } k \in [i-1] \bigr].
	\]
	Suppose there exists an algorithm {\tt A}$_{\mathrm{PRLP}}$ for the PRLP which outputs $\hat\ssz$ such that $\Pr[\hat{\bssz} = \bssz] > (1-p)^B$. Then there exists an $i^* \in [B]$ such that
	\begin{equation} \label{eq:PLRP-too-good-success-prob}
		\Pr \bigl[ \bssz_{i^*} = \hat{\bssz}_{i^*} | \bssz_{k} = \hat{\bssz}_{k} \text{ for all } k \in [i^*-1] \bigr] > 1 - p.
	\end{equation}
	We use this to construct a ``too good to be true" algorithm ${\tt A}$ for the atomic problem. Given $M$ samples of the distribution $\calD_{\bb}$ for a uniform (unknown) $\bb \sim \zo$, ${\tt A}$ ``embeds" the problem into the PRLP problem. Specifically, it draws $\bssz' \sim \zo^{B-1}$ and simulates $M$ samples of $\calD_{\bssz'_i}, i \in [B-1]$. Let $\bssz \in \zo^B$ be defined by
	\[
		\bssz_i = \begin{cases}
		\bssz'_i, & i < i^*\\
		\bb, & i = i^*\\
		\bssz'_{i-1}, & i > i^*.
		\end{cases}
	\]
	Clearly, $\bssz$ is uniformly random. ${\tt A}$ generates $M$ samples of $\calD_{\bssz}$ by appropriately concatenating the $M$ samples of $\calD_{\bb}$ and the simulated samples of $\calD_{\bssz'_i}, i \in [B-1]$. It then invokes {\tt A}$_{\mathrm{PRLP}}$ on the generated samples, 
	receives output $\hat{\bssz} \in \zo^B$, and checks whether $\bssz_{k} = \hat{\bssz}_{k}$ for all $k \in [i^*-1]$; if this is the case then it returns $\hat{\bb} := \hat{\bssz}_{i^*}$, and if it is not the case then it tries again by drawing a fresh independent $\bssz' \sim \zo^{B-1}$, repeating this until it is the case that 
	$\bssz_{k} = \hat{\bssz}_{k}$. 
	By \Cref{eq:PLRP-too-good-success-prob}, $\hat{\bb} = \bb$ with probability more than $1-p$, which contradicts the definition of $p$. \qedhere
\end{proof}

We use \Cref{lem:PRLP_exact} to give a lower bound on the expected edit distance for any algorithm for the PRLP:

\begin{lemma} \label{lem:PRLP}
Let $p = p(M, \delta) < 1/2$ be the optimal failure probability of any algorithm for the atomic problem from \Cref{lem:optdist}.
For a uniform $\bssz \sim \zo^B$, let {\tt A}$_{\mathrm{PRLP}}$ be any algorithm for the PRLP that is given $M$ samples from $\calD_{\bssz}$, and let $\hat{\bssz}$ be its output. Then $\E[\dedit(\bssz,\hat{\bssz})] \geq c' B \cdot p / \log (1/p)$, where $c' > 0$ is an absolute constant.
\end{lemma}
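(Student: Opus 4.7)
\medskip

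\noindent\textbf{Plan.} The approach is to derive the expected-edit-distance lower bound by contrapositive from the exact-reconstruction lower bound in \Cref{lem:PRLP_exact}. The key idea is a standard ``ball enumeration'' trick: if $\E[\dedit(\bssz,\hat\bssz)]$ is small, then by Markov's inequality $\hat\bssz$ is often close (in edit distance) to $\bssz$, and then we can guess $\bssz$ exactly by enumerating binary strings of length $B$ inside a small edit-ball around $\hat\bssz$. This will give an exact-reconstruction algorithm that beats \Cref{lem:PRLP_exact} unless $\E[\dedit(\bssz,\hat\bssz)]$ is already $\Omega(Bp/\log(1/p))$. We split the argument into the easy regime $pB \le 1$ and the main regime $pB \ge 1$, choosing $c'$ to be a sufficiently small absolute constant.

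\medskip

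\noindent\textbf{The easy regime $pB \le 1$.} Here I would argue directly that $\E[\dedit(\bssz,\hat\bssz)] \ge \Pr[\hat\bssz \ne \bssz] \ge 1 - (1-p)^B$. Using $1 - (1-p)^B \ge 1 - e^{-pB} \ge pB/2$ (valid for $pB \le 1$), and the fact that $p < 1/2$ implies $\log(1/p) > 1$, we obtain $\E[\dedit(\bssz,\hat\bssz)] \ge pB/2 \ge c' Bp/\log(1/p)$ for any $c' \le 1/2$. This case is short and introduces no real difficulty.

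\medskip

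\noindent\textbf{The main regime $pB \ge 1$.} Suppose for contradiction that $\E[\dedit(\bssz,\hat\bssz)] < c' Bp/\log(1/p)$, and set $d := \lceil 2c' Bp/\log(1/p) \rceil$. By Markov, with probability at least $1/2$ we have $\dedit(\bssz,\hat\bssz) \le d$, in which case $\bssz$ lies in the set $\mathcal{B}_B(\hat\bssz,d)$ of binary strings of length exactly $B$ at edit distance at most $d$ from $\hat\bssz$. A modified algorithm $\tt{A}'$ runs ${\tt A}_{\mathrm{PRLP}}$, enumerates $\mathcal{B}_B(\hat\bssz,d)$, and outputs a uniformly random element. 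Then $\Pr[\tt{A}'~\text{succeeds}] \ge (1/2)\cdot |\mathcal{B}_B(\hat\bssz,d)|^{-1}$. Combining with \Cref{lem:PRLP_exact} gives $|\mathcal{B}_B(\hat\bssz,d)| \ge \tfrac{1}{2}(1-p)^{-B} \ge \tfrac{1}{2}e^{pB/2}$ (using $(1-p) \le e^{-p}$ plus $p<1/2$ so $-p \le -p/(2(1-p))\cdot\text{const}$; more simply, $(1-p)^{-B}\ge e^{pB}$ since $(1-p)\le e^{-p}$). The main technical input is an upper bound on $|\mathcal{B}_B(\hat\bssz,d)|$: using $\dedit(u,v) = |u|+|v|-2|\mathrm{LCS}(u,v)|$ we must have $||\hat\bssz|-B|\le d$, so $|\hat\bssz|\le 2B$, and a standard counting argument (choose the positions of the LCS in $\hat\bssz$ and in $\bssz'$, then fill in at most $d$ free bits) gives $|\mathcal{B}_B(\hat\bssz,d)| \le \binom{2B}{d}^2 2^d \le (O(B/d))^{O(d)}$. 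Taking logs yields
\[
O\bigl(d\log(B/d)\bigr) \ge pB - O(1).
\]
Substituting $d = O(c'Bp/\log(1/p))$ gives $\log(B/d) \le \log(1/p) + \log(1/c') + O(1)$, and hence $d \log(B/d) \le O(c')\cdot Bp$ once $p$ is bounded away from $1$, which for $c'$ chosen small enough (and $pB$ at least a sufficiently large absolute constant, a case we can enforce by absorbing the boundary into the $pB\le 1$ regime with a slightly looser threshold) contradicts the $\Omega(pB)$ lower bound.

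\medskip

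\noindent\textbf{Main obstacle.} The routine part is the Markov-plus-enumeration reduction; the delicate part is matching the $\log(1/p)$ denominator in the target bound $c'Bp/\log(1/p)$. If one uses only the trivial bound $|\mathcal{B}_B(\hat\bssz,d)| \le B^{O(d)}$, one gets a weaker lower bound of the form $Bp/\log B$. To replace $\log B$ by $\log(1/p)$ it is essential to use the tighter ball-size bound $(O(B/d))^{O(d)}$ and to track that, with $d\approx Bp/\log(1/p)$, the ratio $B/d$ is $\Theta(\log(1/p)/p)$, whose logarithm is $\Theta(\log(1/p))$. Getting all the constants to cooperate (and choosing $c'$ small enough) is the only real obstacle.
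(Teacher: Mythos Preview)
Your argument is correct, but it proceeds differently from the paper. The paper fixes a threshold $p'=\Theta(p/\log(1/p))$ and directly bounds the event $\{\dedit(\bssz,\hat\bssz)\le Bp'\}$ by a union bound over all potential matchings $\mu$ of size $(1-p')B$, observing that for each fixed $\mu$ the event ``$\mu$ is realized'' has probability at most $(1-p)^{(1-p')B}$ by (the proof of) \Cref{lem:PRLP_exact}; since there are only $\binom{B}{Bp'}^2$ such $\mu$, one obtains the \emph{tail bound} $\Pr[\dedit\le Bp']\le 2^{-\Omega(Bp)}$, and the expectation lower bound follows in one line. Your route is instead a reduction: Markov plus uniform sampling from the edit ball $\mathcal{B}_B(\hat\bssz,d)$ converts a small-expectation PRLP algorithm into an exact reconstructor that would beat \Cref{lem:PRLP_exact}. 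Both approaches ultimately rest on the same two ingredients --- \Cref{lem:PRLP_exact} and the fact that an edit ball of radius $d$ in $\{0,1\}^B$ has size $(O(B/d))^{O(d)}$ --- and both need to observe that with $d\approx Bp/\log(1/p)$ the ratio $B/d$ has logarithm $\Theta(\log(1/p))$, which is exactly what delivers the denominator $\log(1/p)$. The paper's version is a bit shorter (no regime split, no boundary discussion) and yields a stronger statement, namely an exponential tail bound on $\dedit$ rather than just the first-moment lower bound; your version has the advantage of invoking \Cref{lem:PRLP_exact} purely as a black box, whereas the paper's union bound over matchings really uses the independence argument \emph{inside} the proof of \Cref{lem:PRLP_exact}.
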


\begin{proof}
		Let $p' < p$ be a parameter to be specified later, and let $\calE$ be the event that $\dedit(\bssz,\hat{\bssz}) \leq Bp'$. (Since {\tt A}$_{\mathrm{PRLP}}$ can without loss of generality be taken to be deterministic, ${\cal E}$ is over the uniform random draw of $\bssz$ and the $M$ samples from ${\cal D}_{\bssz}.$)
    Fix a potential matching $\mu = (i_t, j_t)_{t\in [(1-p')B]}$ between $\bssz$ and $\hat{\bssz}$ that is of size $(1-p')B$. Let $\calE_\mu$ be the event that $\bssz_{i_t}$ is actually equal to $\hat{\bssz}_{j_t}$ for all $t$ (i.e.,~the potential matching actually is a matching between $\bssz$ and $\hat{\bssz}$). By \Cref{lem:PRLP_exact}, we have that
\ignore{
\rnote{This was
		
		``We first show that $\Pr[\calE_\mu] \leq (1-p)^{|\mu|}$. We have

\[
\Pr[\calE_\mu] = \prod_{t=1}^{(1-p')B} \Pr[\bssz_{i_t} = \hat{\bssz}_{j_t} | \bssz_{i_k} = \hat{\bssz}_{j_k} \text{ for all } k \in [t-1]].
\]

\ignore{
Note that $\bssz_{i_t}$ is still uniform when conditioned on values of $\bssz_{i_k}, k < t$. Let the samples from the underlying distribution $\calD_{\bssz}$ that $\calA$ is given be $\bs^{(1)}, \cdots, \bs^{(M)}$, where $\bs^{(m)} = (\bs^{(m)}_1, \cdots, \bs^{(m)}_B)$ and $\bs^{(m)} \sim \calD_{\bssz}$. Then the components $\bs^{(m)}_1, \cdots, \bs^{(m)}_B$ of the samples are mutually independent, as $\calD_{\bssz}$ is a product distribution, and for a fixed component $q \in [B]$, the set of samples $\bs^{(1)}_q, \cdots, \bs^{(M)}_q$ are independent of $\bssz_{q'}$ for all $q' \neq q$. Hence, we have
\[
	\Pr[\bssz_{i_t} = \hat{\bssz}_{j_t} | \bssz_{i_k} = \hat{\bssz}_{j_k} \text{ for all } k \in [t-1]] = \Pr[\bssz_{i_t} = \hat{\bssz}_{j_t}] \leq
	\begin{cases}
	1 - p, & i_t = j_t\\
	1/2, & i_t \neq j_t.
	\end{cases}
\]

As $p < 1/2$, we have $\Pr[\calE_\mu] \leq (1-p)^{|\mu|}$. Using the lower bound on $|\mu|$ and $1-p' \geq 1-p > 1/2$, we obtain
\[
		\Pr[\calE_\mu] \leq (1-p)^{|\mu|} \leq (1-p)^{(1-p')B} 
		 \leq \exp(-p B/2).
\]
}

By arguments similar to those in the proof of \Cref{lem:PRLP_exact}, we have
\[
	\Pr[\bssz_{i_t} = \hat{\bssz}_{j_t} | \bssz_{i_k} = \hat{\bssz}_{j_k} \text{ for all } k \in [t-1]] \leq
	\begin{cases}
	1 - p, & i_t = j_t\\
	1/2, & i_t \neq j_t.
	\end{cases}
\]

Using the fact that $1-p' \geq 1-p > 1/2$, we obtain''

Can we just go with ``By \Cref{lem:PRLP_exact}'' as above instead?}
}

\[
		\Pr[\calE_\mu] \leq (1-p)^{(1-p')B} 
		 \leq \exp(-p B/2),
\]
where we used $1-p'>1/2$ for the second inequality.

Now, by a union bound over all potential matchings of size $(1-p')B$, we get that
\begin{align*}
		\Pr[\calE] &\leq \sum_\mu \Pr[\calE_\mu] 
		\leq \ignore{(B p')} \binom{B}{Bp'}^2 \cdot \exp\left(-\frac{p B}{2}\right) 
		\leq \ignore{(B p')} \exp\left(B \left(2p' \log \frac{e}{p'} - \frac{p}{2}\right)\right).
\end{align*}
Choosing $p' = c_1 p / \log (1/p)$ for some small enough $c_1 > 0$, 
we have $\Pr[\calE] \leq 2^{- c_2 B p}$.  Hence we have $\E[\dedit(\bssz, \hat{\bssz})] \geq (1 - 2^{- c_2 B p}) \cdot (Bp') = c' B \cdot p / \log (1/p)$ for some absolute constant $c' > 0$, 
and the lemma is proved.\qedhere
\end{proof}

\subsection{Embedding and proof of \Cref{thm:mainlower}} \label{sec:embedding}

In this subsection we relate the PRLP to the average-case approximate trace reconstruction problem and prove \Cref{thm:mainlower}.  To explain the connection between average-case approximate trace reconstruction and the PRLP, let us define two subwords
\[
\alpha := 0^M 1 0^{M+1} 1 1, \quad \quad \beta := 0^{M+1} 1 0^M 1 1.
\]
Observe that for any string $\ssx \in \zo^n$ and any pair of distinct intervals $I, J \subset [n]$ such that $\ssx_I, \ssx_J \in \{\alpha, \beta\}$, $I$ and $J$ must be disjoint. 
Note that in a uniform random string $\bssx$, each of these subwords occurs with expected frequency $q=2^{-N}$, where $N := |\alpha| = |\beta| = 2M + 4$. 
Intuitively, given $M$ traces from $\Del_{\delta}(\bssx)$, determining whether a segment of $\bssx$ is in fact $\alpha$ or $\beta$ corresponds to a single instance of the atomic problem from \Cref{sec:atomic}, and determining this for $B$ disjoint segments corresponds to an instance of the PRLP. In the rest of this subsection we make this correspondence precise and show how a high-accuracy algorithm for $M$-sample average-case approximate trace reconstruction yields a high-accuracy algorithm for the PRLP; combining this with the lower bound on the PRLP from \Cref{sec:PRLP} gives \Cref{thm:mainlower}.

\ignore{
\gray{

We begin with a lemma showing that distinct occurrences of $\alpha$ or $\beta$ in any string can never overlap:

\begin{lemma} \label{lem-disjoint-special-blocks}
	Let $\alpha, \beta$ be as described above, and fix $x \in \zo^n$. If $I, J \subset [n]$ are distinct intervals such that $x_I, x_J \in \{\alpha, \beta\}$, 
	then $I$ and $J$ are disjoint.
\end{lemma}

\begin{proof}
	We will prove this by contradiction. Let $I = [i, i + N - 1], J = [j, j + N - 1]$ be distinct intervals such that $x_I, x_J \in \{\alpha, \beta\}$. Assume $i < j < i + N$ without loss of generality. Note that both $\alpha$ and $\beta$ begin with $010$ and end with $011$. We consider several cases:
	\begin{enumerate}
		\item Suppose $|I \cap J| = 1$. Then $j$ is the first position in $J$, so we must have $x_j = 0$. But $j$ is also the last position in $I$, so we must have $x_j = 1$. This leads to a contradiction.
		\item Suppose $|I \cap J| = 2$. Then, using similar arguments, we must have that $x_{[j-1, j]} = 01$ and $x_{[j-1, j]} = 11$, which is impossible.
		\item Suppose $|I \cap J| \geq 3$. Then $i < j \leq i + N - 3$, and we must have $x_{[j, j+2]} = 010$. However, $010$ does not appear as a subword of $\alpha$ or $\beta$ except as a prefix. \qedhere
	\end{enumerate}
\end{proof}

}
}

Fix a sufficiently small absolute constant $c<1$, and let {\tt A} be an algorithm for average-case approximate trace reconstruction which, given $M \leq c/\delta$ \ignore{
  }traces of a random string $\bssx \in \zo^n$ (and the value of $\delta$), returns $\hat{\bssx} \in \zo^*$ such that $\E[\dedit(\bssx, \hat{\bssx})] \leq n (\delta M)^{C M}$ for some constant $C$. 
Building on {\tt A}, in \Cref{fig:ALG_PRLP}
we provide a ``too good to be true'' (given \Cref{lem:PRLP}) algorithm {\tt A}$_{\mathrm{PRLP}}$ for the Paired Run Length Problem.

We give a description of the algorithm {\tt A}$_{\mathrm{PRLP}}$. Consider a uniformly random string $\bssx' \in \zo^n$, where $n := N \cdot 2^N \cdot B$ is chosen such that $\bssx'$ contains at least $B$ occurrences of $\alpha$ or $\beta$ with high probability.
Given $\bssx'$ and $\bssz \sim \zo^B$, we define another string $\bssx \in \zo^n$ by replacing the $b$-th occurrence of $\alpha$ or $\beta$ in $\bssx'$ with $\alpha$ if $\bssz_b = 0$ and with $\beta$ if $\bssz_b = 1$ as long as $b \le B$.
As any pair of occurrences of $\alpha$ or $\beta$ are disjoint, the above procedure is well-defined. We show in \Cref{lem-randomness-preservation} that $\bssx$ is uniformly random, and hence a set of $M$ traces from $\bssx$ is a legitimate input to {\tt A}.

The algorithm {\tt A}$_{\mathrm{PRLP}}$ for the PRLP of course does not have access to $\bssz$, but only to samples $\bs^{(1)}, \ldots, \bs^{(M)} \sim \calD_{\bssz}$.
Hence, it cannot generate $\bssx$ explicitly. However, we show that it can \emph{simulate} $M$ independent traces from $\bssx$ by generating $\bssx' \sim \zo^n$, followed by generating traces from the segments in $\bssx'$ that are disjoint from the occurrences of $\alpha$ or $\beta$, and then concatenating them appropriately with the traces of $\alpha$ or $\beta$ that are generated using the samples $\bs^{(1)}, \ldots, \bs^{(M)}$ (see the loop spanning lines~\ref{item:simulation_start}--\ref{item:simulation_end} 
of the algorithm). It then invokes {\tt A} on these traces to obtain a string $\hat{\bssx} \in \zo^*$, extracts from $\hat{\bssx}$ a binary vector $\hat{\bssz} \in \zo^*$ based on the first (at most) $B$ occurrences of $\alpha$ or $\beta$ in $\hat{\bssx}$, and returns it.

\begin{figure}[t!]
  \centering
\setstretch{1.2}
  \begin{algorithm}[H]
    \caption{{\tt A}$_{\mathrm{PRLP}}$}
		\DontPrintSemicolon
		\SetNoFillComment
    \KwIn{A list of $M$ samples $\bs^{(1)}, \ldots, \bs^{(M)}$ 
	from the product distribution $\calD_{\bssz}$ for an (unknown) uniformly random $\bssz \sim \zo^B$ (and the value of $\delta \in (0,1)$).}

		\KwOut{A string $\hat{\bssz} \in \zo^{\leq B}$.}
		
		Set $N = 2M + 4$ and $n = N \cdot 2^N \cdot B$.\\
		
		Generate a uniformly random string $\bssx' \in \zo^n$.\\
		
		
		Let $B' = \min\{B,$ number of occurrences of $\alpha$ or $\beta$ in $\bssx'\}.$\\
		
		For $b \in [B']$, let $I^{(b)} = [i^{(b)}, i^{(b)} + N - 1]$ be the $b$-th interval in $\bssx'$ such that $\bssx'_{I^{(b)}} \in \{\alpha, \beta\}$. Also set $i^{(B'+1)} = n + 1$.\label{item:generate_ATR_input}\\
		
		\For{$m \in [M]$\label{item:simulation_start}}{
		
			Set $\bssy^{(m)} \sim \Del_\delta\left(\bssx'_{[1, i^{(1)} - 1]}\right)$.\\
			
      Let $\bs^{(m)} = \left(\bs^{(m)}_{b,1}, \bs^{(m)}_{b,2}\right)_{b \in [B']}$, where $\left(\bs^{(m)}_{b,1}, \bs^{(m)}_{b,2}\right) \sim \calD_{\bssz_b}$.\\
			
			\For{$b \in [B']$}{
			
		Set $\bssy'^{(m)}_{b} \sim 0^{\bs_{b,1}^{(m)}} \circ \Del_\delta(1) \circ 0^{\bs_{b,2}^{(m)}} \circ \Del_\delta(11) \in \zo^{\le N}$.
				
				Set $\bar{\bssy}^{(m)}_{b} \sim \Del_\delta\left(\bssx'_{[i^{(b)} + N, i^{(b+1)} - 1]}\right)$.\\
				
        Append $\bssy'^{(m)}_{b} \circ \bar{\bssy}^{(m)}_{b}$ to the end of $\bssy^{(m)}$.\label{item:simulation_end}\\
			}
		}
		
		Run {\tt A} on $\delta$ and the $M$ strings $\bssy^{(1)}, \ldots, \bssy^{(M)}$ to obtain $\hat{\bssx} \in \zo^*$.\\
		
		
		Let $\hat{B} = \min\{B,$ number of occurrences of $\alpha$ or $\beta$ in $\hat{\bssx}\}.$\\
		
		\For{$b \in [\hat{B}]$}{
		
			Let $J^{(b)}$ be the $b$-th interval in $\hat{\bssx}$ such that $\hat{\bssx}_{J^{(b)}} \in \{\alpha, \beta\}$.\label{item:generate_output}\\
			
			Set $\hat{\bssz}_b = 0$ if $\hat{\bssx}_{J^{(b)}} = \alpha$, and $\hat{\bssz}_b = 1$ if $\hat{\bssx}_{J^{(b)}} = \beta$.\\
		}
\Return $\hat{\bssz} := \left(\hat{\bssz}_1, \hat{\bssz}_2, \ldots, \hat{\bssz}_{\hat{B}}\right)$.\\
\end{algorithm}
\caption{Algorithm {\tt A}$_{\mathrm{PRLP}}$ for the PRLP problem, given an algorithm {\tt A} for average-case approximate trace reconstruction.}
\label{fig:ALG_PRLP}
\end{figure}

We note that the different intervals $I$ in \cref{item:generate_ATR_input} are disjoint from each other, and likewise for the different intervals $J$ in \cref{item:generate_output}.

\begin{lemma} \label{lem-randomness-preservation}
	Let $\bssz \in \zo^B$ be uniformly random, let $\bssx'$ be uniform random over $\zo^n$. Let $B'$ be the minimum of $B$ and the number of occurrences of $\alpha$ or $\beta$ in $\bssx'$, and let $\bssx$ be obtained from $\bssx'$ by replacing the $b$-th occurrence of $\alpha$ or $\beta$ in $\bssx'$ with $\alpha$ if $\bssz_b = 0$ and with $\beta$ if $\bssz_b = 1$ for all $b \in [B']$. 
	Then $\bssx$ is uniformly random over $\zo^n$.
\end{lemma}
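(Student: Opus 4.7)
The plan is to prove the lemma by showing that the (deterministic) map $F : \zo^n \times \zo^B \to \zo^n$ defined by $(x',z) \mapsto x$ (where $x$ is obtained from $x'$ via the replacement rule) is exactly $2^B$-to-one. Since $(\bssx',\bssz)$ is uniform on $\zo^n \times \zo^B$, this immediately implies that $\bssx = F(\bssx',\bssz)$ is uniform on $\zo^n$.

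The key structural claim I would establish first is that $F$ preserves the set of occurrence positions of $\alpha$ or $\beta$: for every $(x',z)$, if $I^{(1)},\ldots,I^{(\phi(x'))}$ are the occurrences in $x'$, then these are exactly the occurrences in $x = F(x',z)$. The ``forward'' direction is immediate, since the replacement rule puts either $\alpha$ or $\beta$ at each $I^{(b)}$ for $b \le B'$, and leaves the other $I^{(b)}$ untouched, so each $I^{(b)}$ remains an occurrence in $x$. For the ``backward'' direction, suppose $x$ has an occurrence at an interval $[p,p+N-1]$ not equal to any $I^{(b)}$. The already-noted disjointness fact applied inside $x$ forces $[p,p+N-1]$ to be disjoint from all the $I^{(b)}$, $b \le B'$ (which are occurrences in $x$); but $x$ and $x'$ agree outside those modified intervals, so $x'_{[p,p+N-1]} = x_{[p,p+N-1]} \in \{\alpha,\beta\}$, making $[p,p+N-1]$ an occurrence in $x'$ as well. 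Hence $\phi(x)=\phi(x')$ and the occurrence positions coincide in the same order.

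With this structural claim in hand, fix any target $x \in \zo^n$, let $k := \phi(x)$, and let $J^{(1)},\ldots,J^{(k)}$ be its occurrence intervals. Any preimage $(x',z)$ of $x$ under $F$ must satisfy: $x'$ agrees with $x$ outside $J^{(1)} \cup \cdots \cup J^{(k)}$, and $x'_{J^{(b)}} \in \{\alpha,\beta\}$ for each $b \in [k]$. I split on cases. If $k \le B$, then $B' = k$, so for each $b \in [k]$ the choice of $x'_{J^{(b)}} \in \{\alpha,\beta\}$ is free ($2^k$ options), each such choice forces $z_b$ uniquely (to realize the replacement producing $x_{J^{(b)}}$), and the remaining $B-k$ bits of $z$ are unconstrained, giving $2^k \cdot 2^{B-k} = 2^B$ preimages. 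If $k > B$, then $B' = B$: for $b > B$ we are forced to take $x'_{J^{(b)}} = x_{J^{(b)}}$, while for $b \in [B]$ the value $x'_{J^{(b)}} \in \{\alpha,\beta\}$ is free ($2^B$ options) and then $z$ is entirely determined, again giving $2^B$ preimages. In either case every $x$ has exactly $2^B$ preimages, which completes the proof.

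The main obstacle is the structural claim about preservation of occurrence positions; everything else is bookkeeping. The subtle point there is ruling out ``phantom'' new occurrences created by the replacements that cross from a modified interval into the unmodified surroundings. The earlier disjointness observation (which relies on the specific boundary patterns $010\ldots$ and $\ldots 011$ shared by $\alpha$ and $\beta$) is exactly what prevents such boundary-crossing occurrences and makes the counting above clean.
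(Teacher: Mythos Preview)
Your proof is correct and follows essentially the same approach as the paper's: fix a target string $x$, and count how many pairs $(x',z)$ map to it. The paper phrases the count as ``$2^j$ possible $x'$, each with a $1/2^j$ chance over $\bssz$ of yielding $x$,'' whereas you phrase it as ``the map $F$ is $2^B$-to-one''; these are the same computation grouped differently. The one place you go beyond the paper is in making explicit, and carefully proving, the structural claim that the replacement preserves the set of occurrence positions of $\alpha$ or $\beta$ (using disjointness to rule out boundary-crossing ``phantom'' occurrences). The paper simply asserts this implicitly when it says the preimages are ``precisely the $2^j$ strings obtained by replacing the first $j$ occurrences \ldots in all possible ways,'' so your version is a more complete write-up of the same idea.
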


\begin{proof}
Fix any possible outcome $\ssx \in \zo^n$ of $\bssx$ and let $j = \min\{B,$ number of occurrences of $\alpha$ or $\beta$ in $\ssx\}.$ There are precisely $2^{j}$ outcomes $\ssx' \in \zo^n$ of $\bssx'$ for which it is possible that $\bssx'=\ssx'$ could give rise to $\bssx=\ssx$ (these are precisely the $2^{j}$ strings obtained by replacing the first $j$ occurrences of $\alpha$ or $\beta$ in $\ssx$ by $\alpha$ or $\beta$ in all possible ways). Each of these outcomes has probability $1/2^n$ under $\bssx'$ because $\bssx'$ is uniform random and for each such outcome there is a $1/2^{j}$ chance that the replacement yields $\bssx=\ssx$ from $\bssx'=\ssx'.$ Hence $\Pr[\bssx=\ssx]=2^{j} \cdot (1/2^n) \cdot (1/2^{j})=1/2^n$.
\ignore{
	For $j \in \N \cup \{0\}$, let $\calE_j$ be the event that there are $j$ occurrences of $\alpha$ or $\beta$ in $\bssx'$. Using \Cref{lem-disjoint-special-blocks}, we have that the occurrences of $\alpha$ or $\beta$ in $\bssx'$ are disjoint. For any interval $I$ of length $N$, $\bssx_I \in \{\alpha, \beta\}$ if and only if $\bssx'_I \in \{\alpha, \beta\}$. Thus, the number and position of occurrences of $\alpha$ or $\beta$ in $\bssx'$ are the same as those in $\bssx$.
	
	Further, conditioned on a subword in $\bssx'$ being $\alpha$ or $\beta$, it is equally likely to be $\alpha$ and $\beta$. In $\bssx$, this property is preserved. To see this, let $w$ be the $i$-th subword in $\bssx'$ that is $\alpha$ or $\beta$, for some $i \leq B$. It is replaced by $\alpha$ if $\bssz_i = 0$ and $\beta$ if $\bssz_i = 1$, and each $\bssz_i$ is uniformly random, independent of all $\bssz_j, j \neq i$.
	}
	 \qedhere
\end{proof}

\begin{lemma} \label{lem-many-special-blocks}
	$\bssx'$ (and hence $\bssx$) contains at least $B$ 
	disjoint occurrences of $\alpha$ or $\beta$ with probability at least $1 - \exp(-\Omega(B))$. 
\end{lemma}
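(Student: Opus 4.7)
The plan is to lower-bound the number of disjoint occurrences of $\alpha$ or $\beta$ in $\bssx'$ by considering only a canonical collection of non-overlapping length-$N$ windows, which reduces the analysis to a sum of independent Bernoulli random variables and hence to a standard Chernoff-type tail bound.

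First I would partition the index set $[n]$ into $n/N = 2^N \cdot B$ consecutive disjoint blocks $I_1,\ldots,I_{2^N B}$, each of length $N$. For each $k \in [2^N B]$ let $\bX_k$ be the indicator of the event $\bssx'_{I_k} \in \{\alpha,\beta\}$. Because the blocks are disjoint and $\bssx'$ is uniform over $\zo^n$, the random variables $\bX_1,\ldots,\bX_{2^N B}$ are mutually independent. Each $\bX_k$ has mean
\[
\E[\bX_k] \;=\; \Pr[\bssx'_{I_k} = \alpha] + \Pr[\bssx'_{I_k} = \beta] \;=\; 2 \cdot 2^{-N},
\]
so the expected number $\mu := \E\bigl[\sum_k \bX_k\bigr]$ of ``good'' blocks equals $2B$.

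Next I would apply a multiplicative Chernoff bound to $\sum_k \bX_k$. Specifically, using the standard lower-tail bound $\Pr[\sum_k \bX_k \le (1-1/2)\mu] \le \exp(-\mu/8)$, we get
\[
\Pr\Bigl[\sum_{k=1}^{2^N B} \bX_k \;<\; B \Bigr] \;\le\; \exp(-B/4) \;=\; \exp(-\Omega(B)).
\]
On the complementary event, at least $B$ of the (disjoint) blocks $I_k$ satisfy $\bssx'_{I_k} \in \{\alpha,\beta\}$, which directly yields at least $B$ disjoint occurrences of $\alpha$ or $\beta$ in $\bssx'$, as needed.

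There is no real obstacle here: the only small thing to note is that we are content to count only canonically aligned occurrences (those living inside one of our preselected length-$N$ windows) rather than all occurrences, which is enough for the claimed bound and makes the $\bX_k$'s exactly independent. The same conclusion transfers immediately to $\bssx$ since, as observed in the construction of $\bssx$ from $\bssx'$, occurrences of $\alpha$ or $\beta$ are just relabeled (replaced again by $\alpha$ or $\beta$) and hence the number of disjoint occurrences of $\alpha$ or $\beta$ in $\bssx$ is at least the number in $\bssx'$ (in fact the positions coincide).
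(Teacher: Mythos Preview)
Your proposal is correct and follows essentially the same approach as the paper: partition $\bssx'$ into $2^N B$ disjoint length-$N$ blocks, use independent Bernoulli indicators with mean $2\cdot 2^{-N}$ giving $\mu = 2B$, and apply a multiplicative Chernoff lower-tail bound. The paper's proof is the same argument with the same block decomposition and expectation calculation.
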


\begin{proof}
	As $\bssx'$ is a uniformly random string, for any position $i \in [n - N + 1]$, we have that
	\[
		\Pr\bigl[\bssx'_{[i:i+N-1]} \in \{\alpha, \beta\}\bigr] = \frac{2}{2^N} = \frac{1}{2^{N-1}}.
	\]
	Let $q = 2^{-(N-1)}$. We divide $\bssx'$ into $s := 2^N \cdot B$ disjoint segments $\bssx'^{(j)}$ of length $N$. For $j \in [s]$, let $\calF_j$ be the indicator of the event that $\bssx'^{(j)} \in \{\alpha$, $\beta$\}. Then $\calF_j \sim \text{Ber}(q)$, and $\calF_j, j \in [s]$ are independent.
	
Let $\calF = \sum_j \calF_j$ denote the number of segments $\bssx'^{(j)}$ that are $\alpha$ or $\beta$. Note that $\calF$ is a lower bound on the overall number of disjoint occurrences of $\alpha$ or $\beta$, because we are only considering segments ending at positions which are integral multiples of $N$. Clearly, 
	\[
    \E[\calF] = \sum_{j\in[s]} \E[\calF_j] = s q = 2 B.
	\]
	By the Chernoff Bound, we have $\calF < B$
	with probability at most $\exp(-\Omega(B))$. Along with the observations above and the fact that $\bssx_I \in \{\alpha, \beta\}$ if and only if $\bssx'_I \in \{\alpha, \beta\}$ for an interval $I$ of length $N$, this concludes the proof.\qedhere
\end{proof}



Next we state and prove a crucial lemma which implies that if {\tt A} is a good algorithm for average-case approximate trace reconstruction, then {\tt A}$_{\mathrm{PRLP}}$ is a good algorithm for the PRLP problem:

\begin{lemma} \label{lem-edit-distance-upper-bound}
	If $\bssx'$ (and hence $\bssx$) contains at least $B$ 
	disjoint occurrences of $\alpha$ or $\beta$, 
	then $\dedit(\bssz, \hat{\bssz}) \leq 2 \cdot \dedit(\bssx, \hat{\bssx})$.
\end{lemma}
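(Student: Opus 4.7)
The plan is to bound the edit distance between $\bssz$ and $\hat{\bssz}$ by analyzing how the ``block readout'' evolves under a single insertion or deletion in the ambient string, and then summing along an optimal edit sequence from $\bssx$ to $\hat{\bssx}$.

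Let $d = \dedit(\bssx, \hat{\bssx})$ and fix any edit sequence of length $d$ that transforms $\bssx$ into $\hat{\bssx}$, yielding intermediate strings $\bssx = \bssx^{(0)}, \bssx^{(1)}, \ldots, \bssx^{(d)} = \hat{\bssx}$, where $\bssx^{(t+1)}$ is obtained from $\bssx^{(t)}$ by a single insertion or deletion. For each $t$, let $B^{(t)}$ denote the number of occurrences of $\alpha$ or $\beta$ in $\bssx^{(t)}$, and define the readout $\bssz^{(t)}$ to be the length-$\min\{B, B^{(t)}\}$ string whose $b$-th bit is $0$ if the $b$-th such occurrence equals $\alpha$ and $1$ if it equals $\beta$. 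Under the hypothesis $B^{(0)} = B^{\bssx} \geq B$, we have $\bssz^{(0)} = \bssz$ and $\bssz^{(d)} = \hat{\bssz}$. By the triangle inequality it suffices to show that $\dedit(\bssz^{(t)}, \bssz^{(t+1)}) \leq 2$ for every $t$, since then $\dedit(\bssz, \hat{\bssz}) \leq 2d$.

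To prove the per-step bound, fix $t$ and let $p$ be the position at which the single edit occurs. Since any two occurrences of $\alpha$ or $\beta$ in a single string are disjoint, at most one occurrence of $\bssx^{(t)}$ contains position $p$ and thus can be \emph{destroyed}, and at most one occurrence of $\bssx^{(t+1)}$ contains position $p$ (in the new coordinates) and thus can be \emph{created}. All other occurrences are simply shifted in position and keep their labels and relative order. The crucial additional observation is that if a destruction happens at (old) slot $k$ and a creation happens simultaneously, then by disjointness of the surviving occurrences $O_{k-1}$ and $O_{k+1}$ from the newly created occurrence, the new occurrence must occupy slot $k$ in the new ordering. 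I will check the four sub-cases --- (no destroy, no create), (destroy only), (create only), (destroy and create) --- and in each one bound the symmetric difference of $\bssz^{(t)}$ and $\bssz^{(t+1)}$ by at most one substitution, which costs at most $2$ insertions/deletions. The delicate part is the interaction with truncation at $B$: e.g.\ when $B^{(t)} > B$ and an occurrence in the first $B$ slots is destroyed, the readout loses bit $z_k$ but gains the bit formerly at slot $B+1$, which is exactly a delete-plus-insert costing $2$; the symmetric case for creation costs $2$ for the same reason; and the destroy-plus-create case costs $2$ because it is a pure substitution at slot $k$ (and $0$ if the labels happen to agree or $k > B$). I expect this boundary bookkeeping to be the main technical obstacle, and will tabulate it cleanly using the values $B^{(t)}$, $B^{(t+1)}$, and the slot $k$.

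Once the per-step bound $\dedit(\bssz^{(t)}, \bssz^{(t+1)}) \leq 2$ is established, the triangle inequality gives
\[
\dedit(\bssz, \hat{\bssz}) \;\leq\; \sum_{t=0}^{d-1} \dedit\!\left(\bssz^{(t)}, \bssz^{(t+1)}\right) \;\leq\; 2d \;=\; 2 \cdot \dedit(\bssx, \hat{\bssx}),
\]
as required.
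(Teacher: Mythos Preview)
Your approach is correct and genuinely different from the paper's. The paper fixes an optimal LCS matching $\mu$ between $\bssx$ and $\hat{\bssx}$, looks at the induced partial matching $\tau$ between the occurrence-indices (declaring $(s,t)\in\tau$ exactly when $\mu$ carries the interval $I^{(s)}$ bijectively onto $J^{(t)}$), restricts to $[B]\times[|\hat{\bssz}|]$, and then argues in two global cases that every unmatched index on the appropriate side forces a distinct unmatched position in $\bssx$ or $\hat{\bssx}$; this lower-bounds $|\mathrm{LCS}(\bssz,\hat{\bssz})|$ and yields the factor $2$. Your route instead proves a local $2$-Lipschitz property of the ``read off the first $B$ $\alpha/\beta$ labels'' map under a single insertion or deletion, then sums along an optimal edit path via the triangle inequality.

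A couple of small points worth tightening when you write it out. First, for an insertion at position $p$ the occurrence that can be destroyed is the one that \emph{straddles} the cut (i.e.\ has bits on both sides of $p$), not merely one that ``contains $p$''; an occurrence starting exactly at $p$ is simply shifted, not destroyed. Second, the key slot-preservation claim (``destroy at slot $k$ and create must also land at slot $k$'') is best argued via the fact that the created occurrence always contains the edit position in the new coordinates, while the surviving neighbors $O_{k-1}$ and $O_{k+1}$ do not; disjointness in $\bssx^{(t+1)}$ then forces the new block strictly between them. With those clarifications your case analysis (including the truncation-at-$B$ bookkeeping you outlined) goes through cleanly. The payoff of your route is a reusable Lipschitz statement that is independent of the particular matching; the paper's route is a one-shot global argument tied to the optimal LCS.
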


\begin{proof}
  Let $I^{(1)}, I^{(2)}, \ldots, I^{(S)}$ 
	be intervals of length $N$ corresponding to the 
	occurrences of $\alpha$ or $\beta$ in $\bssx$ (so by the assumption of the lemma, we have $S \geq B$).
  Similarly, let $J^{(1)}, J^{(2)}, \ldots, J^{(T)}$ 
	be intervals of length $N$ corresponding to the 
	occurrences of $\alpha$ or $\beta$ in $\hat{\bssx}$. Fix an optimal matching $\mu$ between $\bssx$ and $\hat{\bssx}$ corresponding to any longest common subsequence between those strings (if there is more than one choice for $\mu$ it can be selected from the optimal matchings arbitrarily).

	
    Consider the matching $\tau$ on $[S] \times [T]$, where $(s,t) \in \tau$ if and only if $\mu(I^{(s)}) = J^{(t)}$.
    Let $\tau'$ be the induced matching obtained by restricting $\tau$ to $[B] \times [\abs{\hat{\bssz}}]$.
    Note that $\tau'$ corresponds to a longest common subsequence of $\bssz$ and $\hat\bssz$.  
    We consider two cases.

    \begin{enumerate}

      \item Suppose $(s,t) \in \tau$ for some $s \le B$ and $t > B$.
    Note that this implies there are at least $t > B$ occurences of $\alpha$ or $\beta$ in $\hat{\bssx}$, and so we have $\abs{\hat\bssz} = \abs{\bssz}$.
    We claim that for every $t' \le B$, if $(s',t') \in \tau$ for some $s'$ then $s' \le B$; this is because otherwise we have $s, t' \le B$ and $s', t > B$, but $(s,t),(s', t') \in \tau$, contradicting our definition of matching.

    Therefore, for every $t \in [T], t \le B$ that is not matched to an element of $[S]$ in $\tau'$, it is also not matched to an element of $[S]$ in $\tau$, and hence either (1) some element in $J^{(t)}$ is not matched in $\mu$, or (2) the indices in $\bssx$ that are matched to $J^{(t)}$ do not form an interval.
    Either case contributes at least $1$ deletion in $\bssx$ or $\hat\bssx$ to $\dedit(\bssx,\hat\bssx)$.
    Since there are $\dedit(\bssz,\hat\bssz)/2$ such $t$'s, we have $\dedit(\bssz,\hat\bssz) \le 2 \, \dedit(\bssx,\hat\bssx)$.

      \item Otherwise, for every $s \le B$, if $(s,t) \in \tau$ then it must be that $t \le B$.
        Moreover, for every $s \le B$ that is not in $\tau'$, it is also not in $\tau$, and therefore either (1) some element in $I^{(s)}$ is not in $\mu$, or (2) the indices in $\hat\bssx$ that are matched to $I^{(s)}$ do not form an interval.
    Either case contributes at least $1$ deletion in $\bssx$ or $\hat\bssx$ to $\dedit(\bssx,\hat\bssx)$.
    Since $\abs{\bssz} \ge \abs{\hat\bssz}$, we have at least $\dedit(\bssz,\hat\bssz)/2$ such $s$', and so $\dedit(\bssz,\hat\bssz) \le 2 \, \dedit(\bssx,\hat\bssx)$.  \qedhere
  \end{enumerate}

\ignore{
	
	\gray{
	We first define $\bssz'$, a subsequence of $\bssz$ corresponding to the subset of the first $B$ occurrences of $\alpha$ or $\beta$ in $\bssx$ that were preserved exactly in the matching $\mu$.
	Formally, let $1 \leq i_1 < i_2 < \cdots < i_t \leq B$ and $1 \leq j_1 < j_2 < \cdots < j_t$ be a maximal pair of sequences such that
	\[
	\mu(I^{(i_k)}) = J^{(j_k)}
	\]
	for all $k \in [t]$. Clearly, $t \leq B$. 
	We define $\bssz' \in \zo^{t}$ by setting $\bssz'_{k} := \bssz_{i_k}$ for all $k \in [t]$.\footnote{We emphasize that this vector $\bssz'$ is only for the purpose of analysis and is not used in the algorithm.}
	
	\ignore{
	\begin{itemize}
		\item Initialize $b = 1, k = 1, \bssz' = \epsilon$.
		\item While $b \leq B$:
		\begin{itemize}
			\item If $\mu(I^{(b)}) = J^{(b')}$ for some $b'$, set $\bssz'_k = \bssz_b$ and increment $k$.
			\item Increment $b$.
		\end{itemize}
		\item Return $\bssz'$.
	\end{itemize}
	}
	Note that while $i_k \leq B$, it is possible that $j_k > B$ for some $k$. So, some of the occurrences $\hat{\bssx}_{J^{(j_k)}}$ of $\alpha$ or $\beta$ may not be among the first $B$ occurrences in $\hat{\bssx}$, and hence may not be reflected in the output $\hat{\bssz}$, because of earlier ``rogue" occurrences of $\alpha$ or $\beta$ in $\hat{\bssx}$ that have no exact analogue in $\bssx$.\footnote{A precise definition of a ``rogue" occurrence is given later.} We now show the following:
	\begin{enumerate}
		\item $\dedit(\bssz, \bssz') \leq \dedit(\bssx, \hat{\bssx})$.
		\item $\dedit(\bssz', \hat{\bssz}) \leq 2 \, \dedit(\bssx, \hat{\bssx})$.
	\end{enumerate}
	Clearly, these claims, along with the triangle inequality, suffice to prove the lemma.
	
	\begin{enumerate}
    \item \label{step:edit_distance_proof_1} As $\bssz'$ is a subsequence of $\bssz$, the edit distance $\dedit(\bssz, \bssz')$ must be achieved by $|\bssz| - |\bssz'|$ deletions in only $\bssz$ but not $\bssz'$.
    Fix one of the $\abs{\bssz} - \abs{\hat{\bssz}}$ indices $b \in [B] \setminus \{i_k: k \in [t]\}$. 
    We have that either some element in the interval $I^{(b)}$ is not in the matching $\mu$ (and so some bit is deleted in $\bssx_{I^{(b)}}$), or $I^{(b)}$ appears in $\mu$ but is not matched to any interval $J^{(b')}$.
    In either case, 
    the subword $\bssx_{I^{(b)}}$ must undergo at least $1$ insertion or deletion. 
Moreover, each such edit can only affect one block $I^{(b)}$, as they are disjoint. By optimality of $\mu$, we conclude that $\dedit(\bssz, \bssz') \leq \dedit(\bssx, \hat{\bssx})$.
		
		\item Recall that $\hat{\bssz}$ is determined entirely by the first $B$ occurrences of $\alpha$ or $\beta$ in $\hat{\bssx}$. We consider two cases:
		\begin{itemize}
			\item Suppose $j_k \leq B$ for all $k \leq t$. In other words, each of the first $B$ occurrences of $\alpha$ or $\beta$ in $\bssx$ is either matched to \emph{one of the first} $B$ occurrences of $\alpha$ or $\beta$ in $\hat{\bssx}$, or not matched to any occurrence of $\alpha$ or $\beta$ in $\hat{\bssx}$. In this case, clearly $\bssz'$ is a subsequence of $\hat{\bssz}$.
			By arguments similar to those in \Cref{step:edit_distance_proof_1} with $\hat{\bssz}$ playing the role of $\bssz$, we have $\dedit(\bssz', \hat{\bssz}) \leq \dedit(\bssx, \hat{\bssx})$.
			
    \item Otherwise, $j_k > B$ for some $k \in [t]$. In this case, for any $b' \le B$, we cannot have $\mu(I^{(b)}) = J^{(b')}$ for any $b > B$, due to the non-crossing nature of the matching $\mu$. So, each of the first $B$ occurrences of $\alpha$ or $\beta$ in $\hat{\bssx}$ is either (1) matched exactly with \emph{one of the first} $B$ occurrences of $\alpha$ or $\beta$ in $\bssx$, or (2) not matched with \emph{any} occurrence of $\alpha$ or $\beta$ in $\bssx$.

        For $b \le B$, we define an interval $J^{(b')}$ to be a ``rogue occurrence" if $b' \neq j_k$ for any $k \in [t]$.
        It follows by (2) that any rogue occurrence $J^{(b')}$ is one of the first $B$ occurrences of $\alpha$ or $\beta$ in $\hat{\bssx}$ with no exact analogue in $\bssx$ with respect to $\mu$.
			
			Let $\Delta := |\{k \in [t] : j_k > B\}|$. We show that $\dedit(\bssx, \hat{\bssx}) \geq \Delta$, and $\dedit(\bssz', \hat{\bssz}) \leq 2 \Delta$, proving the claim in this case.
			
      To see that $\dedit(\bssx, \hat{\bssx}) \geq \Delta$, note that by our definition of $\Delta$, there are $\Delta$ occurrences of $\alpha$ or $\beta$ among the first $B$ occurrences in $\bssx$ (and hence reflected in $\bssz'$), which are not mapped to the first $B$ occurrences of $\alpha$ or $\beta$ in $\hat{\bssx}$ (and hence not reflected in $\hat{\bssz}$). So, there must be $\Delta$ rogue occurrences $J^{(q_\ell)}$ in $\hat{\bssx}$, indexed by $1 \leq q_1 < q_2 < \cdots <q_\Delta \leq B$. For each such rogue occurrence $J^{(q_\ell)}$, by (2) there must be an insertion or deletion in the interval in $\bssx$ corresponding to $J^{(q_\ell)}$. Again, as the intervals $J^{(q_\ell)}, \ell \in [\Delta]$ are disjoint, the intervals in $\bssx$ corresponding to them are also disjoint, and so each such edit can only affect one interval. So $\dedit(\bssx, \hat{\bssx}) \geq \Delta$.
			
      Now we show that $\dedit(\bssz', \hat{\bssz}) \leq 2 \Delta$. Let $\gamma$ be the subsequence of $\bssz'$ obtained by deleting the $\Delta$ many $\bssz'_k = \bssz_{i_k}$ such that $j_k > B$. Let $\zeta$ be the subsequence of $\hat{\bssz}$ obtained by deleting the $\Delta$ bits corresponding to all rogue occurrences $J^{(q_\ell)}, \ell \in [\Delta]$.
      It is easy to observe that $\gamma = \zeta$ is a common subsequence of $\bssz'$ and $\hat{\bssz}$.
      Since we delete $\Delta$ deletions from each of $\bssz'$ and $\hat{\bssz}$ to obtain this subsequence, we have $\dedit(\bssz', \hat{\bssz}) \leq 2 \Delta$.
			\ignore{
			Otherwise, there exists some $b \in [B]$ such that $\mu(I^{(b)}) = J^{(b')}$ for some $b' > B$, and so $\bssz'$ is not a subsequence of $\hat{\bssz}$. Let $\dedit(\bssz', \hat{\bssz}) = \Delta$. For this to happen, there must be at least $\Delta$ ``rogue" occurrences of $\alpha$ or $\beta$ in $\hat{\bssx}$, where a ``rogue" occurrence is defined by an interval $J$ such that $\hat{\bssx}_J \in \{\alpha, \beta\}$ and $\mu^{(-1)}(J) \neq I^{(b)}$ for any $b \in B$. For each such occurrence, there must be an edit operation in $\bssx$ outside the intervals $I^{(b)}, b \in [B]$. Again, each edit gives rise to at most one such occurrence by \Cref{lem-disjoint-special-blocks}. 
			This shows that $\dedit(\bssz', \hat{\bssz}) = \Delta \leq \dedit(\bssx, \hat{\bssx})$.
			}
		\qedhere
		\end{itemize}
	\end{enumerate}
}

}
\end{proof}

\begin{proofof}{\Cref{thm:mainlower}}
	\Cref{lem:optdist} and \Cref{lem:PRLP} imply that for any algorithm {\tt A}$_{\mathrm{PRLP}}$ that solves the PRLP given $M$ samples from $\calD_{\bssz}$, its output $\hat{\bssz}$ satisfies
	\begin{equation} \label{eq:lower_bound_final}
	\E\bigl[\dedit(\bssz, \hat{\bssz})\bigr] \geq B \cdot (\delta M)^{c M}
	\end{equation}
	for some absolute constant $c > 0$.
	
	Now, let {\tt A} be any algorithm which, given $\delta$ and traces $\bssy^{(1)}, \ldots, \bssy^{(M)}$ from a random string $\bssx \in \zo^n$ as input, outputs a hypothesis string $\hat{\bssx}$ for $\bssx$ such that $\E[\dedit(\bssx, \hat{\bssx})] < n \cdot (\delta M)^{C M}$. 
	Consider the algorithm {\tt A}$_{\mathrm{PRLP}}$ described in \Cref{fig:ALG_PRLP}. By \Cref{lem-many-special-blocks}, $\bssx'$ (and hence $\bssx$) has at least $B$ disjoint occurrences of $\alpha$ or $\beta$ with probability at least $1 - e^{-\Omega(B)}$, in which case we have $\dedit(\bssz, \hat{\bssz}) \leq 2 \cdot \dedit(\bssx, \hat{\bssx})$ by \Cref{lem-edit-distance-upper-bound}. If $\bssx$ has fewer than $B$ disjoint occurrences of $\alpha$ or $\beta$, we have $\dedit(\bssz, \hat{\bssz}) \leq 2B$ as $\bssz \in \zo^B$ and $\hat{\bssz} \in \zo^{\leq B}$. So, we obtain
	\begin{align} \label{eq:lower_bound_contradiction}
    \E[\dedit(\bssz, \hat{\bssz})]
    &\leq e^{-\Omega(B)} \cdot 2B + \bigl(1 - e^{-\Omega(B)}\bigr) \cdot 2 \E\bigl[\dedit(\bssx, \hat{\bssx})\bigr]  \nonumber \\
    &\leq 4n \cdot (\delta M)^{C M} \nonumber \\
    &\leq B \cdot (\delta M)^{C' M}
	\end{align}
	for some suitable constant $C' > 0$. \Cref{eq:lower_bound_final,eq:lower_bound_contradiction} lead to the desired contradiction for $C' > c$, which concludes the proof of \Cref{thm:mainlower}.
	
\end{proofof}

\begin{flushleft}
\bibliography{allrefs}{}
\bibliographystyle{alpha}
\end{flushleft}

\end{document}